\DeclareMathAlphabet{\mathpzc}{OT1}{pzc}{m}{it}
\newtheorem{theorem}{Theorem}
\newtheorem{lemma}{Lemma}
\newtheorem{definition}{Definition}
\newtheorem{note}{Note}
\newcommand{\1}{{\bf 1}} 
\newcommand{\E}{\mathsf{E}} 
\newcommand{\card}[1]           {\left| #1\right|}
\newcommand{\bea}{\begin{eqnarray}}
\newcommand{\eea}{\end{eqnarray}}
\newcommand{\beas}{\begin{eqnarray*}}
\newcommand{\eeas}{\end{eqnarray*}}
\begin{document}
\title{Successive Refinement with Decoder \\ Cooperation and its Channel Coding Duals}

\author{Himanshu Asnani${}^{\dagger}$, Haim Permuter${}^{*}$ and Tsachy Weissman${}^{\dagger}$ 
            \footnote{${}^{\dagger}$Department of Electrical Engineering, Stanford University}
            \footnote{${}^{*}$Department of Electrical Engineering, Ben Gurion University}}

\maketitle

\begin{abstract}
We study cooperation in multi terminal source coding models involving successive refinement. Specifically, we study the case of  a single encoder and two decoders, where the encoder provides a common description to both the decoders and a private description to only one of the decoders. The decoders cooperate via \textit{cribbing}, i.e.,  the decoder with access only to the common description is allowed to observe, in addition,  a deterministic function of the reconstruction symbols produced by the other. We characterize the fundamental performance limits in the respective settings of non-causal, strictly-causal and causal cribbing. We use a new coding scheme, referred to as \textit{Forward Encoding and Block Markov Decoding}, which is a variant of one  recently used by Cuff and Zhao for coordination via implicit communication.  Finally, we  use the insight gained to introduce and solve some dual channel coding scenarios involving Multiple Access Channels with cribbing.\end{abstract}

\begin{keywords}
Block Markov Decoding, Conferencing, Cooperation, Coordination, Cribbing, Double Binning, Duality, Forward Encoding, Joint Typicality, Successive Refinement.
 \end{keywords}
\section{Introduction}
\label{sec::intro}
Cooperation can dramatically boost the performance of a  network. The literature abounds with models for cooperation, when communication between nodes of a network is over a noisy channel. In multiple access channels,  the setting of \textit{cribbing} was introduced by Willems and Van der Muelen in \cite{Willems_Cribbing}, where one encoder obtains the channel input symbols of the other encoder (referred to as \textquotedblleft crib\textquotedblright) and uses it for coding over a multiple access channel (MAC). This was further generalized to deterministic function cribbing (where an encoder obtains a deterministic function of the channel input symbols of another encoder) and to cribbing with actions (where one encoder can  control the quality and availability of the \textquotedblleft crib\textquotedblright\ by taking cost constrained actions) by Permuter and Asnani in \cite{HaimHimanshuCribbing}. Cooperation can also be modeled as information exchange among the transmitters and receivers via rate limited links, generally referred to as \textit{conferencing} in the literature. Such a model was introduced in the context of the MAC by Willems  in \cite{Willems_Conferencing}, and subsequently studied by by Bross, Lapidoth and Wigger \cite{Gaussian_MAC_Conferencing}, Wiese et al. \cite{Compound_MAC_Weisse}, Simeone et al. \cite{Compound_MAC_Partial_Simeone},  and Maric, Yates and Kramer \cite{Compound_MAC_Conferencing_Maric}. Cooperation has also been modeled via \textit{conferencing}/\textit{cribbing} in cognitive interference channels, such as the settings in Bross, Steinberg and Tinguely \cite{Causal_Cognitive_Bross} and Prabhakaran and Vishwanath \cite{Pramod_Vinod_Transmitter_Cooperation}-\cite{Pramod_Vinod_Receiver_Cooperation}.  We refer to Ng and Goldsmith \cite{Capacity_Cooperation_Goldsmith} for a survey of various cooperation strategies and their  fundamental limits in wireless networks.
\begin{figure}[htbp]
\begin{center}
\scalebox{0.7}{\input{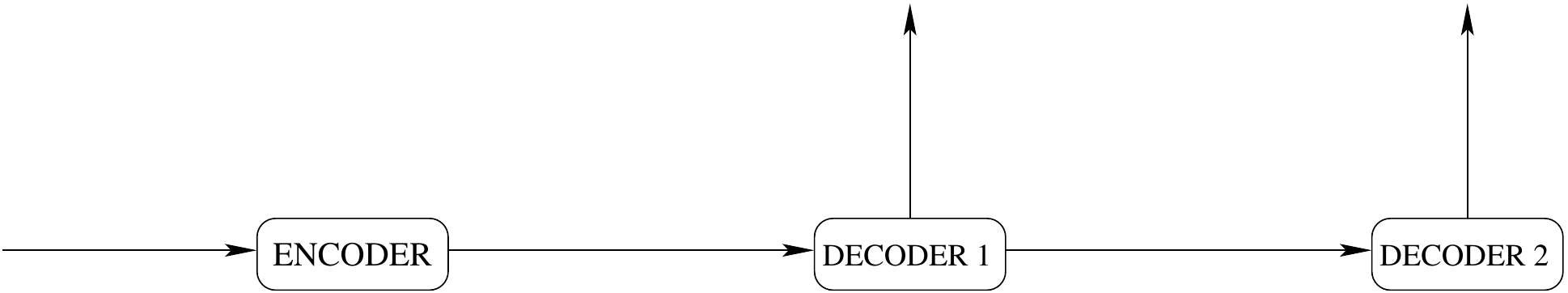_t}}
\caption{Cascade source coding setup.}
\label{cascade}
\end{center}
\end{figure}
\begin{figure}[htbp]
\begin{center}
\scalebox{0.7}{\input{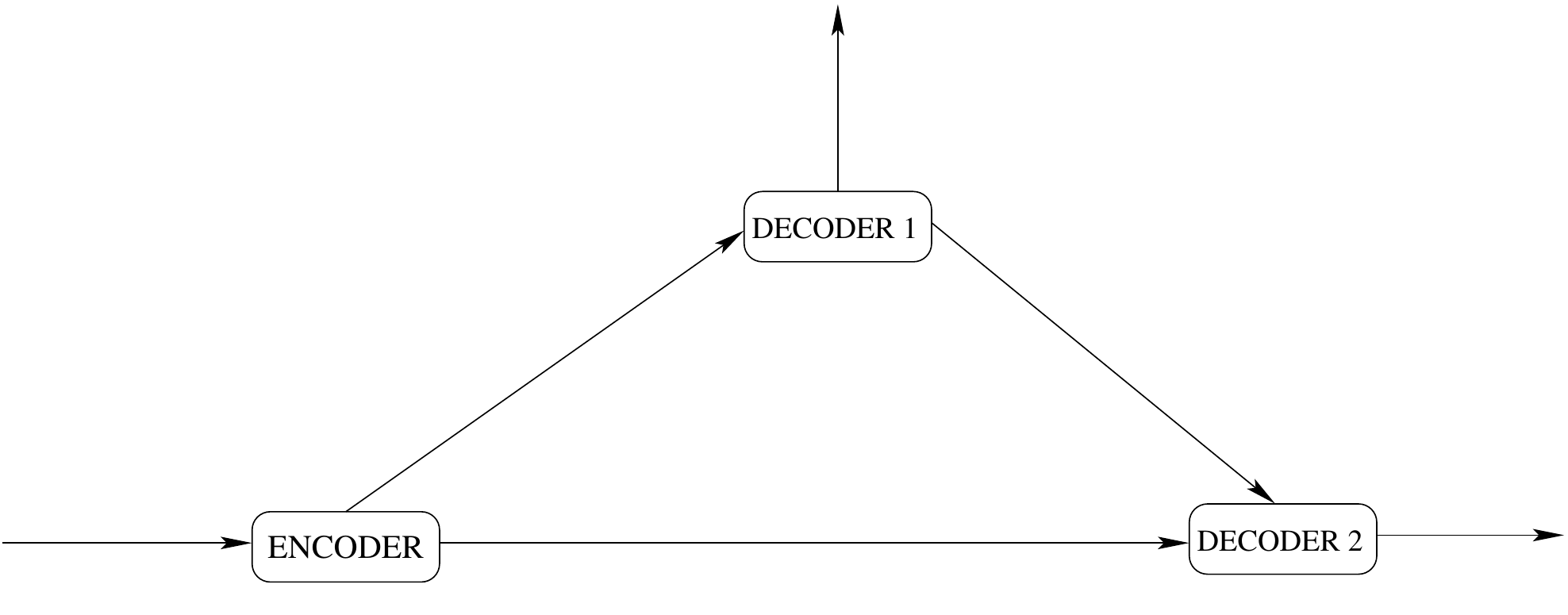_t}}
\caption{Triangular source coding setup.}
\label{triangle}
\end{center}
\end{figure}
\par 
In multi terminal source coding, cooperation is generally modeled as a rate limited link such as in the cascade source coding  setting of Yamamoto \cite{Yamamoto}, Cuff, Su and El Gamal \cite{Han_Cuff_Cascade}, Permuter and Weissman \cite{HaimTsachyCascade}, Chia, Permuter and Weissman \cite{YeowHaimTsachy}, as well as the triangular source coding problems of Yamamoto \cite{Yamamoto_Triangle}, Chia, Permuter and Weissman \cite{YeowHaimTsachy}. In cascade source coding (Fig. \ref{cascade}), Decoder 1 sends a description ($T_{12}$) to Decoder 2, which does not receive any direct description from the encoder, while in triangular source coding (Fig. \ref{triangle}), Decoder 1 provides a description ($T_{12}$) to Decoder 2 in addition to the direct description ($T_
2$) from the encoder. 
\par
The contribution of this paper is to introduce new models of cooperation in multi terminal source coding, inspired by the \textit{cribbing} of Willems and Van der Muelen \cite{Willems_Cribbing} and by the implicit communication model of Cuff and Zhao \cite{Coordination_Implicit_Cuff_Zhao}. Specifically, we consider cooperation between decoders in a successive refinement setting (introduced in Equitz and Cover \cite{EquitzCover}). In successive refinement, a single encoder describes a common rate to both the decoders and a private rate to only one of the decoders. We generalize this model to accommodate  \textit{cooperation} among the decoders as follows : 

\begin{figure}[htbp]
\begin{center}
\scalebox{0.7}{\input{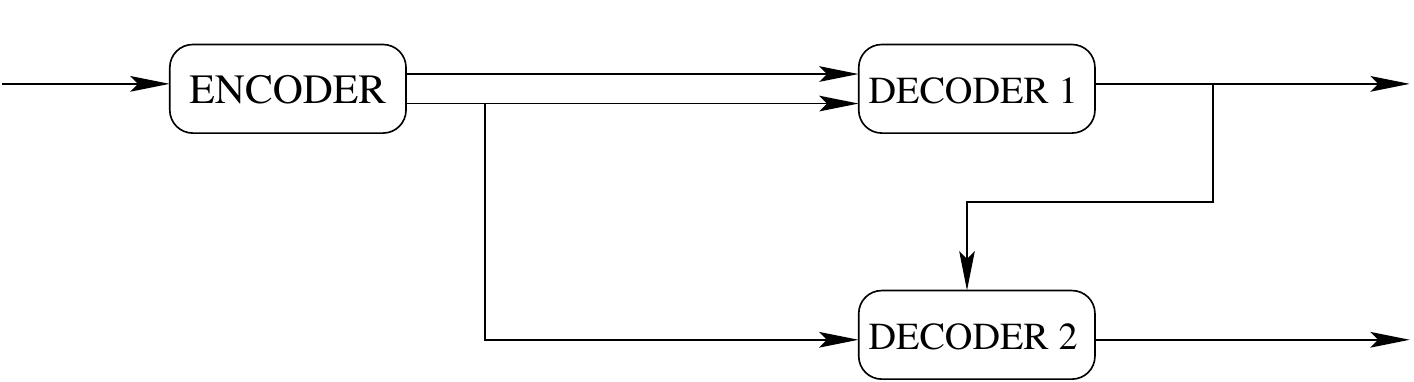_t}}
\caption{Successive refinement, with decoders \textit{cooperating} via \textit{conferencing}.}
\label{sr_conferencing}
\end{center}
\end{figure} 
\begin{enumerate}
\item \textit{Cooperation via Conferencing} : One such cooperation model considered is that  shown in Fig. \ref{sr_conferencing}, where the encoder provides a common description ($T_0$) to both the decoders and a refined description ($T_1$) to Decoder 1, Decoder 1 cooperates with Decoder 2 by providing an additional description $(T_{12})$ which is the function of its own private description ($T_1$),  as well as the common description ($T_0$). This setting is inspired by the \textit{conferencing} problem in channel coding described earlier.  The region of achievable rates and distortions  for this problem is given by,
\bea
R_0+R_1&\ge& I(X;\hat{X}_1,\hat{X}_2)\label{eqpre1}\\
R_0+R_{12}&\ge& I(X;\hat{X}_2)\label{eqpre2},
\eea
for some joint probability distribution $P_{X,\hat{X}_1,\hat{X}_2}$ such that $\E[d_i(X_i,\hat{X}_i)]\le D_i$, for $i=1,2$, where $d_i$ refers to the distortion function and $D_i$ are the distortion constraints, as is formally explained in Section \ref{sec::problem}.  The direct part of this characterization, namely that this region is achievable, follows standard arguments that generalize those used in the original successive refinement problem \cite{EquitzCover} (cf. Appendix \ref{sec::sr_conferencing}). 
 \begin{figure}[htbp]
\begin{center}
\scalebox{0.7}{\input{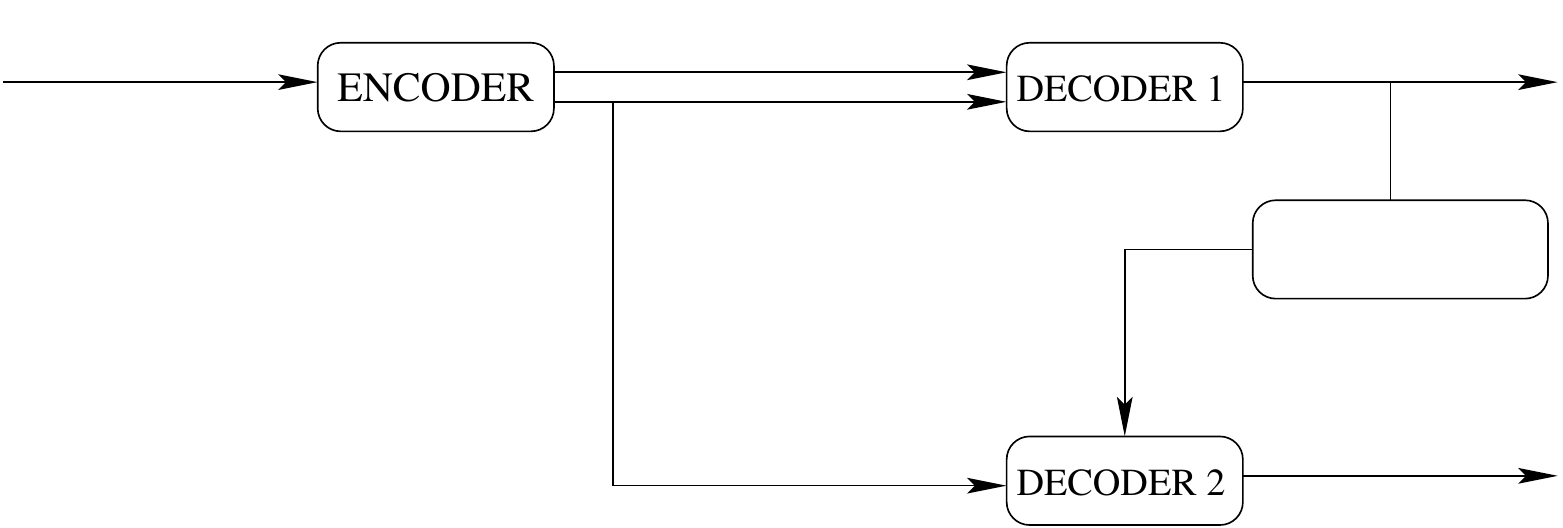_t}}
\caption{Successive refinement, with decoders \textit{cooperating} via \textit{cribbing}. $d=n$, $d=i-1$ and $d=i$ respectively correspond to non-causal, strictly-causal and causal cribbing.}
\label{sr_cribbing}
\end{center}
\end{figure}
\item \textit{Cooperation via Cribbing} : 
The main setting analyzed in this paper is shown in Fig. \ref{sr_cribbing}. A single encoder describes a common message $T_0$ to both decoders and a refined message $T_1$ to only Decoder 1. Instead of cooperating via a rate limited link, as in Fig. \ref{sr_conferencing}, Decoder 2 \textquotedblleft cribs\textquotedblright\ (in the spirit of Willems and Van der Muelen \cite{Willems_Cribbing}) a deterministic function $g$ of the reconstruction symbols of Decoder 1, non-causally, strictly-causally, or causally.  Note a trivial $g$ function corresponds to the original successive refinement setting characterized in Equitz and Cover \cite{EquitzCover}. The goal is to find the optimal encoding and decoding strategy and to characterize the optimal encoding rate region which is defined as the set of achievable rate tuples $(R_0,R_1)$ such that the distortion constraints are satisfied at both the decoders. Cuff and Zhao \cite{Coordination_Implicit_Cuff_Zhao}, considered the problem of characterizing the coordination region (non-causal, strictly causal and causal coordination) in our setting of Fig. \ref{sr_cribbing}, for a specific function, $g$, such that $g(\hat{X}_1)=\hat{X}_1$ and for a specific rate tuple $(R_0,R_1)=(0,\infty)$, that is  Decoder 1 has access to the source sequence $X^n$ while Decoder 2 uses the reconstruction symbols of Decoder 1 (non-causally, strictly-causally or causally) to estimate the source. We use a new source coding scheme which we refer to as  \textit{Forward Encoding and Block Markov Decoding}, and show that it achieves the optimal rate region for strictly causal and causal cribbing. It draws on the achievability ideas (for causal coordination) introduced in Cuff and Zhao \cite{Coordination_Implicit_Cuff_Zhao}. This scheme operates in blocks, where in the current block, the encoder encodes for the source sequence of the future block, (hence the name \textit{Forward Encoding}) and the decoders rely on the decodability in the previous block to decode in the current block (hence the name \textit{Block Markov Decoding}). More details about this scheme are deferred to Section \ref{sec::sr_cribbing}.  
\end{enumerate}
\par
The general motivation for our work is an attempt to understand fundamental limits in source coding scenarios involving the availability of side information in the form of a lossily compressed version of the source. This is a departure from the standard and well studied models where side information is merely a correlated ``noisy version'' of the source, and is challenging because the effective `channel' from source to side information is now induced by a compression scheme. Thus, rather than dictated by nature, the side information is now another degree of freedom in the design. There is no shortage of practical scenarios that motivate our models. 
\par
One such scenario may arise in the context of video coding, as considered by Aaron, Varodayan and Girod  in (\cite{Aaron06wyner-zivresidual}). Consider two consecutive frames in a video file, denoted by Frame 1 and Frame 2, respectively. The video encoder starts by encoding Frame 1, and then it encodes the difference between Frame 1 and Frame 2. Decoder 1 represents  decoding of Frame 1, while Decoder 2 uses the knowledge of decoded Frame 1 (via cribbing) to estimate the next frame, Frame 2.  
\par
Our problem setting is equally natural for capturing noncooperation as it is for capturing cooperation, by requiring the relevant distortions to be bounded from below rather than above (which, in turn, can be converted to our standard form of an upper bound on the distortion by changing the sign of the distortion criterion). For instance, Decoder 1 can represent an end-user with refined information (common and private rate) about a secret document, the source in our problem, while Decoder 2 has a crude information about the document (via the  common rate). Decoder 1 is required to publicly announce an approximate version of the document, but due to privacy issues would like to remain somewhat cryptic about the source (as measured in terms of  distortion with respect to the source) while also helping (via conferencing or cribbing) Decoder 2 to better estimate the source. For example, Decoder 1 can represent a Government agency required by law to publicly reveal features of the data, while on the other hand there are agents who make use of this publicly announced information, along with crude information about the source that they too, not only the government, are allowed to access, to decipher  or get a good estimate of  the classified information (the source). 
\par
The contribution of this paper is two-fold. First, we introduce new models of decoder cooperation in source coding problems such as successive refinement, where decoders cooperate via cribbing, and we characterize the fundamental limits on performance for these problems using new classes of schemes for the achievability part. Second, we leverage the insights gained from these problems to introduce and solve a new class of channel coding scenarios that are dual to the source coding ones. Specifically, we consider the MAC with cribbing and a common message, where there are two encoders who want to communicate messages over the MAC, one has access to its own private message, there is a common message between the two encoders, and the encoders cooperate via cribbing (non-causally, strictly causally or causally). 
\par
The paper is organized as follows. Section \ref{sec::problem} gives a formal description of the problem and the main results. Section \ref{sec::sr_cribbing} presents achievability and converses, with non-causal, causal and strictly-causal cribbing. Some special cases of our setting and numerical examples, are studied in Section \ref{sec::special-case}. Channel coding duals are  considered in Section \ref{sec::duality}.  Finally, the paper is concluded in Section \ref{sec::conclusion}.

\section{Problem Definitions and Main Results}
\label{sec::problem}
We begin by explaining the notation to be used throughout this paper.
Let upper case, lower case, and calligraphic letters denote, respectively, random
variables, specific or deterministic values which random variables may assume,
and
their alphabets. For two jointly distributed random variables, $X$ and $Y$, let
$P_X$, $P_{XY}$ and $P_{X|Y}$ respectively denote the marginal of $X$, joint
distribution of $(X,Y)$ and conditional distribution of 
$X$ given $Y$. $X_{m}^{n}$ is a shorthand for the $n-m+1$ tuple
$\{X_m,X_{m+1},\cdots,X_{n-1},X_n\}$. We impose the assumption of finiteness of
cardinality on all alphabets, unless otherwise indicated.
\par
In this section we formally define the problem considered in this paper (cf. Fig. \ref{sr_cribbing}). The source sequence ${X_i\in \mathcal{X}, i = 1, 2,...}$ is drawn i.i.d. $\sim P_X$. Let $\hat{\mathcal{X}}_1$ and $\hat{\mathcal{X}}_2$ denote the reconstruction alphabets, and $d_i : \mathcal{X}\times\hat{\mathcal{X}}_i \rightarrow [0,\infty)$, for $ i = 1, 2$ denote single letter distortion measures. Distortion between sequences is defined in the usual way, 
\bea
d_i(x^n,\hat{x}_i^n)=\frac{1}{n}\sum_{j=1}^n d_i(x_j,\hat{x}_{i,j}), \mbox{ for }i=1,2.
\eea
\begin{definition}
\label{definition1}
A ($2^{nR_0},2^{nR_1},n$) rate-distortion  code consists of the following,
\begin{enumerate}
\item Encoder,  $f_{0,n} : \mathcal{X}^n \rightarrow \{1, . . . , 2^{nR_0}\}$, $f_{1,n} : \mathcal{X}^n \rightarrow \{1, . . . , 2^{nR_1}\}$.
\item Decoder 1, $g_{1,n} : \{1, . . . , 2^{nR_0}\}\times\{1,...,2^{nR_1}\} \rightarrow \hat{\mathcal{X}}_1^n$.
\item Decoder 2 (depending on $d$ in Fig. \ref{sr_cribbing}, the decoder mapping changes as below), 
\bea
g^{nc}_{2,i} : \{1, . . . , 2^{nR_0}\}\times\hat{\mathcal{X}}_1^n& \rightarrow &\hat{\mathcal{X}}_{2}\mbox{\textit{\ \ \ \ non-causal cribbing, $d=n$}}\\
g^{sc}_{2,i} : \{1, . . . , 2^{nR_0}\}\times\hat{\mathcal{X}}_1^{i-1}& \rightarrow& \hat{\mathcal{X}}_{2}\mbox{\textit{\ \ \ \ strictly-causal cribbing, $d=i-1$}},\\
g^{c}_{2,i} : \{1, . . . , 2^{nR_0}\}\times\hat{\mathcal{X}}_1^i &\rightarrow &\hat{\mathcal{X}}_{2}\mbox{\textit{\ \ \ \ causal cribbing, $d=i$}}
\eea
$\forall\ i=1,...,n$.
\end{enumerate}
\end{definition}
\begin{definition}
\label{definition2}
A rate-distortion tuple $(R_0,R_1,D_1,D_2)$ is said to be achievable if $\forall \ \epsilon > 0$, $\exists\ n$ and ($2^{nR_0},2^{nR_1},n$) rate-distortion  code such that the expected distortion for decoders are bounded as,
\bea
\E\left[d_i(X^n_i,\hat{X}^n_{i})\right]\le D_i+\epsilon,\mbox{\ }i=1,2.
\eea
\end{definition}
\begin{definition}
\label{definition3}
The rate-distortion region $\mathcal{R}(D_1,D_2)$ is defined as the closure of the set to all \textit{achievable} rate-distortion tuples $(R_0,R_1,D_1,D_2)$.
\end{definition}
Our main results for this setting are presented in the Table \ref{table_sr_cribbing}. Note that in all the rate regions in the table, we use the notation $\{a\}^+$ for $\max(a,0)$, and we omit the distortion condition $\E[d_i(X_i,\hat{X}_{i}]\le D_i$, $i=1,2$ for the sake of brevity. These results will be derived later in Section \ref{sec::sr_cribbing}. As another contribution, in Section \ref{sec::duality}, we establish duality between the problem of successive refinement with cribbing decoders and communication over multiple access channels with cribbing encoders and a common message. We establish a complete duality between the settings (in a  sense that is detailed in Section \ref{sec::duality}) and rate regions of one can be obtained from those of the other by listed transformations.
\begin{table}[h!]
\begin{center}

\begin{tabular}{|l|c|c|c|c|c|}
\hline
&&\\
$\mathcal{R}(D_1,D_2)$ &Perfect Cribbing & Deterministic Function \\
 &$g(\hat{X}_1)=\hat{X}_1$&Cribbing\\
&&\\
\hline
&&\\
 Non-Causal&(Theorem 1)&(Theorem 2)\\
 $(d=n)$&$R_0+R_1\ge I(X;\hat{X}_1,\hat{X}_2)$&$R_0+R_1\ge I(X;\hat{X}_1,\hat{X}_2)$\\
 &$R_0\ge \{I(X;\hat{X}_1,\hat{X}_2)- H(\hat{X}_1)\}^+$&$R_0\ge \{I(X;\hat{Z}_1,\hat{X}_2)- H(\hat{Z}_1)\}^+$\\
 &\textit{(p.m.f.)} :  $P(X,\hat{X}_1,\hat{X}_2)$&\textit{(p.m.f.)} :  $P(X,\hat{X}_1,\hat{X}_2)\1_{\{\hat{Z}_1=g(\hat{X}_1)\}}$\\
&&\\
\hline
&&\\
  Strictly-Causal&(Theorem 3)&(Theorem 4)\\
 $(d=i-1)$&$R_0+R_1\ge I(X;\hat{X}_1,\hat{X}_2)$&$R_0+R_1\ge I(X;\hat{X}_1,\hat{X}_2)$\\
 &$R_0\ge \{I(X;\hat{X}_1,\hat{X}_2)- H(\hat{X}_1|\hat{X}_2)\}^+$&$R_0\ge \{I(X;\hat{Z}_1,\hat{X}_2)- H(\hat{Z}_1|\hat{X}_2)\}^+$\\
 &\textit{(p.m.f.)} : $P(X,\hat{X}_1,\hat{X}_2)$&\textit{(p.m.f.)} :  $P(X,\hat{X}_1,\hat{X}_2)\1_{\{\hat{Z}_1=g(\hat{X}_1)\}}$\\
&&\\
\hline

&&\\
  Causal&(Theorem 5)&(Theorem 6)\\
 $(d=i)$&$R_0+R_1\ge I(X;\hat{X}_1,U)$&$R_0+R_1\ge I(X;\hat{X}_1,U)$\\
 &$R_0\ge \{I(X;\hat{X}_1,U)- H(\hat{X}_1|U)\}^+$&$R_0\ge \{I(X;\hat{Z}_1,U)- H(\hat{Z}_1|U)\}^+$\\
 &\textit{(p.m.f.)} : $P(X,\hat{X}_1,U) \1_{\{\hat{X}_2=f(U)\}}$&\textit{(p.m.f.)} : $P(X,\hat{X}_1,U) \1_{\{\hat{Z}_1=g(\hat{X}_1),\hat{X}_2=f(\hat{X}_1,U)\}}$\\
  &$\card{\mathcal{U}}\le\card{\mathcal{X}}\card{\mathcal{X}_1}+4$&$\card{\mathcal{U}}\le\card{\mathcal{X}}\card{\mathcal{X}_1}+4$\\
 &&\\
\hline

\end{tabular}
\end{center}

\caption{Main Results of the Paper}
\label{table_sr_cribbing}
\end{table}

\begin{lemma}[Equivalence to Cascade Source Coding with Cribbing Decoders]
\label{lemma1}
The setup in Fig. \ref{sr_cribbing} is equivalent to a cascade source coding setup with cribbing decoders as in Fig. \ref{cascade_cribbing} in the following way : fix a distortion pair $(D_1,D_2)$ and let $\mathcal{R}(D_1,D_2)$ denote the rate region for the problem of successive refinement with cribbing with achievable rate pairs $(R_0,R_1)$. Let $\tilde{R}(D_1,D_2)$ denote the closure of rate pairs, $(R_0,R_0+R_1)$ and ${\mathcal{R}}_{cascade}(D_1,D_2)$ denote the rate region for the problem of cascade source coding with cribbing (closure of achievable rate pairs $(R_{12},R_1)$).  We then have the equivalence, $\tilde{\mathcal{R}}(D_1,D_2)=\mathcal{R}_{cascade}(D_1,D_2)$. 
\end{lemma}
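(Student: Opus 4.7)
My plan is to establish $\tilde{\mathcal{R}}(D_1,D_2) = \mathcal{R}_{cascade}(D_1,D_2)$ by direct code conversions in both directions, viewing each setup as a relabeling of the other's messages.

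For the inclusion $\tilde{\mathcal{R}} \subseteq \mathcal{R}_{cascade}$, I would take any $(2^{nR_0}, 2^{nR_1}, n)$ successive-refinement code with cribbing that achieves $(D_1,D_2)$ and reinterpret it as a cascade-with-cribbing code at rates $(R_{12}, R_1^{cas}) = (R_0, R_0+R_1)$. The cascade encoder just bundles $(T_0, T_1)$ into a single index at rate $R_0+R_1$ and transmits it to cascade Decoder~1; that decoder parses the bundle, applies the SR Decoder~1 map to produce $\hat X_1^n$, and forwards $T_{12}:=T_0$ at rate $R_0$ to cascade Decoder~2, which runs the SR Decoder~2 map on $(T_0, g(\hat X_1)^d)$. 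Since all reconstructions are preserved, so are the distortions, yielding $(R_0, R_0+R_1)\in\mathcal{R}_{cascade}$; taking the closure gives the inclusion. This direction is essentially bookkeeping.

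For the reverse inclusion I would start from a cascade-with-cribbing code at rates $(R_{12}, R_1^{cas})$ and build an SR code at rates $(R_0, R_1) = (R_{12},\, R_1^{cas}-R_{12}+\delta_n)$ for a sequence $\delta_n\to 0$. Without loss of generality take $R_{12}\le R_1^{cas}$, since $T_{12}$ can assume at most $2^{nR_1^{cas}}$ distinct values. The SR encoder simulates the cascade encoder on $X^n$ to obtain $T_1^{cas}$, computes $T_{12}(T_1^{cas})$ and emits it as the common message $T_0$, and emits as the private $T_1$ the index of $T_1^{cas}$ inside its preimage bin $\{t : T_{12}(t)=T_0\}$. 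SR Decoder~1 recovers $T_1^{cas}$ from $(T_0, T_1)$ and runs cascade Decoder~1 to produce $\hat X_1^n$; SR Decoder~2 runs cascade Decoder~2 on $(T_0, g(\hat X_1)^d)$. Because both decoders' outputs match those of the cascade code, the distortions carry over.

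The hard part will be controlling the rate of the private SR message: the bins of $T_{12}(\cdot)$ in an arbitrary cascade code can be highly unbalanced, so the within-bin index must be large enough for the largest bin, not the average. My plan is to first pad the $T_1^{cas}$ alphabet with at most $2^{nR_{12}}$ unused dummy indices so that every bin has size exactly $\lceil 2^{n(R_1^{cas}-R_{12})}\rceil$; this at most doubles the effective $T_1^{cas}$ alphabet, making the private SR rate $R_1^{cas}-R_{12}+O(1/n)$. Since $\tilde{\mathcal{R}}$ is defined as a closure, the vanishing inflation $\delta_n$ is absorbed in the limit $n\to\infty$, completing $\mathcal{R}_{cascade}\subseteq\tilde{\mathcal{R}}$.
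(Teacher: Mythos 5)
Your first inclusion $\tilde{\mathcal{R}} \subseteq \mathcal{R}_{cascade}$ is correct and is exactly the paper's argument: bundle $(T_0,T_1)$ into the cascade first-hop message at rate $R_0+R_1$, have cascade Decoder~1 forward $T_{12}:=T_0$, and observe that every reconstruction (hence every distortion) is preserved.

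The reverse inclusion is where your proposal breaks down, and the gap is in the padding step, not in a detail you can patch with a $\delta_n$. Padding the $T_1^{cas}$ alphabet with unused dummy indices can only \emph{enlarge} a preimage bin $\{t : T_{12}(t)=m\}$, never shrink one. But an arbitrary cascade code at nominal rate pair $(R_{12}, R_1^{cas})$ is free to have a single enormous bin: in the extreme where $T_{12}$ is the constant map (the link is present but never used), the unique bin has $2^{nR_1^{cas}}$ elements. Then your within-bin index $T_1$ must carry rate $R_1^{cas}$, not $R_1^{cas}-R_{12}$, so the SR code you build operates at $(R_0,R_1)=(R_{12},R_1^{cas})$ rather than the required $(R_{12},R_1^{cas}-R_{12})$, and your second inclusion fails for exactly those cascade codes whose $T_{12}$ is far from uniform. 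There is no way to equalize bin sizes by relabeling $T_{12}$ either, because that would change the symbol seen by cascade Decoder~2 and break its operation.

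The paper's proof sidesteps this by abandoning the single-block deterministic construction and working over $B$ super-blocks. It notes the information-theoretic bounds $nR_{12}\ge H(T_{12})$ and $nR_1^{cas}\ge H(T_1)=H(T_{12})+H(T_1\mid T_{12})$, then communicates the i.i.d.\ sequence $(T_{12}(1),\dots,T_{12}(B))$ to both SR decoders at rate $\tfrac{1}{n}H(T_{12})$ (source coding of $T_{12}$) and the sequence $(T_1(1),\dots,T_1(B))$ to SR Decoder~1 at rate $\tfrac{1}{n}H(T_1\mid T_{12})$ via Slepian--Wolf binning with $T_{12}$ as receiver side information. This yields an SR pair with $R_0=\tfrac{1}{n}H(T_{12})\le R_{12}$ and $R_0+R_1=\tfrac{1}{n}H(T_1)\le R_1^{cas}$, and then monotonicity of $\tilde{\mathcal{R}}$ gives $(R_{12},R_1^{cas})\in\tilde{\mathcal{R}}$. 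In effect, entropy-coding across blocks is precisely what "balances" the unbalanced bins that defeat your deterministic padding, so you will want to replace your padding step with this multi-block Slepian--Wolf argument.
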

\begin{proof}
Proof is similar to the proof of Theorem 3 in Vasudevan, Tian and Diggavi \cite{Vasudevan_Diggavi_Tian_Cascade}. We state it in Appendix \ref{appendix_cascade} for quick reference. 
\end{proof}
\begin{figure}[h!]
\begin{center}
\scalebox{0.7}{\input{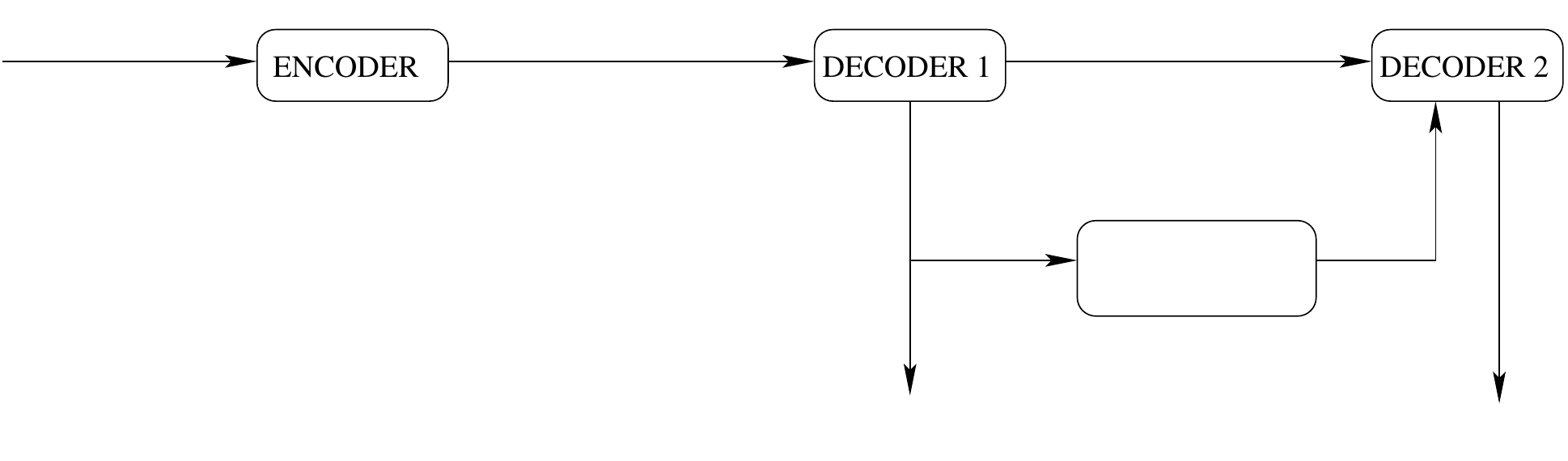_t}}
\caption{Cascade source coding with \textit{cribbing} decoders,  $d=n$, $d=i-1$ and $d=i$ respectively correspond to non-causal, strictly-causal and causal cribbing.}
\label{cascade_cribbing}
\end{center}
\end{figure}
We use certain standard techniques such as Typical Average Lemma, Covering Lemma and Packing Lemma which are stated and established in \cite{GamalKim}. Herein, we state them for the sake of quick reference.  For typical sets we use the definition as in chapter 2 of \cite{GamalKim}. Henceforth, we omit the alphabets from the notation of typical set when it is clear from context, e.g. $\mathcal{T}^n_\epsilon(X,\hat{X}_2)$ is denoted by $\mathcal{T}^n_\epsilon$.
\begin{lemma}[Typical Average Lemma, Chapter 2, \cite{GamalKim}]
\label{typicalaverage}
Let $x^n \in T_\epsilon^n$. Then for any nonnegative function $g(x)$ on $\mathcal{X}$, 
\bea
(1-\epsilon)\E[g(X)]\le \frac{1}{n}\sum_{i=1}^n g(x_i)\le (1+\epsilon)\E[g(X)].
\eea
\end{lemma}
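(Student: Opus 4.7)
The plan is to prove the two inequalities by rewriting the empirical average as a sum over the alphabet weighted by empirical frequencies, and then plugging in the multiplicative bounds on those frequencies that come directly from the definition of a typical sequence.

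First I would recall the underlying definition (as in Chapter 2 of \cite{GamalKim}): writing $\pi(x\mid x^n) = \frac{1}{n}\sum_{i=1}^n \I{x_i=x}$ for the empirical frequency of letter $x$ in the sequence $x^n$, membership $x^n \in \mathcal{T}^n_\epsilon$ is by definition the statement that
\bea
\bigl|\pi(x\mid x^n) - P_X(x)\bigr| \le \epsilon\, P_X(x) \quad \text{for every } x\in\mathcal{X},
\eea
equivalently $(1-\epsilon)P_X(x) \le \pi(x\mid x^n) \le (1+\epsilon)P_X(x)$ for all $x$. (When $P_X(x)=0$, both sides force $\pi(x\mid x^n)=0$, and the letter contributes nothing to either side of the claimed inequalities.)

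Next I would perform the key algebraic rewriting, grouping the $n$ terms of the empirical average according to the value assumed:
\bea
\frac{1}{n}\sum_{i=1}^n g(x_i) = \sum_{x\in\mathcal{X}} \pi(x\mid x^n)\, g(x).
\eea
Because $g$ is nonnegative, each summand on the right is nonnegative, so the typicality bounds on $\pi(x\mid x^n)$ may be applied termwise without any sign issues. Using the upper bound yields
\bea
\sum_{x\in\mathcal{X}} \pi(x\mid x^n)\, g(x) \le (1+\epsilon)\sum_{x\in\mathcal{X}} P_X(x)\, g(x) = (1+\epsilon)\,\E[g(X)],
\eea
and the analogous use of the lower bound delivers $\frac{1}{n}\sum_{i=1}^n g(x_i) \ge (1-\epsilon)\E[g(X)]$. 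Combining the two bounds completes the proof.

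Since the argument is essentially a one-line unfolding of the definition of $\mathcal{T}^n_\epsilon$, there is no serious obstacle; the only subtlety worth flagging is the requirement that $g \ge 0$, which is what permits the termwise application of the multiplicative typicality bounds without any sign-flip in the inequalities. This is precisely the hypothesis stated in the lemma.
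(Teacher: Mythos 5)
Your proof is correct and is the standard argument for this lemma; the paper itself does not prove it but simply cites it from El Gamal and Kim, where the proof is exactly this unfolding of the robust-typicality definition into empirical-frequency bounds applied termwise (with nonnegativity of $g$ justifying the termwise multiplication).
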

\begin{lemma}[Covering Lemma, Chapter 3, \cite{GamalKim}]
\label{covering}
Let $(U,X,\hat{X})\sim p(u,x,\hat{x})$. Let $(U^n,X^n)\sim p(u^n,x^n)$ be a pair of arbitrarily distributed random sequences such that $P\{(U^n,X^n) \in T_\epsilon^n\}\rightarrow 1$ as
$n\rightarrow\infty$ and let $\hat{X}^n(m),m \in \mathcal{A}$, where $\card{\mathcal{A}} \ge 2^{nR}$, be random sequences, conditionally independent of each other and of $X^n$ given $U^n$, each distributed according to $\prod_{i=1}^n p_{\hat{X}|U}(\hat{x}_i|u_i)$. Then, there exists $\delta(\epsilon)\rightarrow 0$ such that $P\{(U^n,X^n,\hat{X}^n(m))\notin T_{\epsilon}^n \ \forall \ m \in \mathcal{A}\}\rightarrow 0$ as $n\rightarrow \infty$ , if $R > I (X; \hat{X}|U) + \delta ( \epsilon )$.
\end{lemma}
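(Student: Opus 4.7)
The plan is to follow the standard random coding argument: condition on the pair $(U^n, X^n)$ being jointly typical, estimate the probability that a single randomly drawn codeword $\hat{X}^n(m)$ is jointly typical with $(U^n,X^n)$, then use independence across $m$ to show that with high probability at least one of the $|\mathcal{A}|$ codewords succeeds.

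First I would reduce to the case $(U^n, X^n)\in \mathcal{T}_\epsilon^n$. Writing $\mathcal{E}=\{(U^n,X^n,\hat{X}^n(m))\notin\mathcal{T}_\epsilon^n\ \forall m\in\mathcal{A}\}$, I would split $\PR(\mathcal{E})\le \PR\{(U^n,X^n)\notin\mathcal{T}_\epsilon^n\}+\PR(\mathcal{E}\mid(U^n,X^n)\in\mathcal{T}_\epsilon^n)$. The first term vanishes by hypothesis, so it suffices to bound the second term uniformly over atypical-free realizations. Conditioning on a specific pair $(u^n,x^n)\in\mathcal{T}_\epsilon^n$, the codewords $\hat{X}^n(m)$ remain i.i.d. across $m$, each distributed as $\prod_{i=1}^n p_{\hat{X}|U}(\cdot\mid u_i)$ and independent of $x^n$ given $u^n$.

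Next I would invoke the conditional joint typicality lemma: for $(u^n,x^n)\in\mathcal{T}_\epsilon^n$ and $\hat{X}^n\sim\prod_{i=1}^n p_{\hat{X}|U}(\cdot\mid u_i)$ conditionally independent of $x^n$ given $u^n$, one has $\PR\{(u^n,x^n,\hat{X}^n)\in\mathcal{T}_\epsilon^n\}\ge 2^{-n(I(X;\hat{X}|U)+\delta'(\epsilon))}$ for some $\delta'(\epsilon)\to 0$ as $\epsilon\to 0$. This in turn follows from expanding the probability as a sum over jointly typical $\hat{x}^n$, lower bounding the number of such sequences by $2^{n(H(\hat{X}|U,X)-\delta_1(\epsilon))}$, and lower bounding each term's probability by $2^{-n(H(\hat{X}|U)+\delta_2(\epsilon))}$ via the typical average lemma applied to $\log p_{\hat{X}|U}$.

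Then, using conditional independence of $\{\hat{X}^n(m)\}_{m\in\mathcal{A}}$ given $(U^n,X^n)$,
\begin{align*}
\PR(\mathcal{E}\mid (u^n,x^n))
&= \prod_{m\in\mathcal{A}} \bigl(1-\PR\{(u^n,x^n,\hat{X}^n(m))\in\mathcal{T}_\epsilon^n\}\bigr)\\
&\le \bigl(1-2^{-n(I(X;\hat{X}|U)+\delta'(\epsilon))}\bigr)^{|\mathcal{A}|}\\
&\le \exp\!\bigl(-|\mathcal{A}|\,2^{-n(I(X;\hat{X}|U)+\delta'(\epsilon))}\bigr),
\end{align*}
using $1-t\le e^{-t}$. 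Since $|\mathcal{A}|\ge 2^{nR}$, the exponent is at most $-2^{n(R-I(X;\hat{X}|U)-\delta'(\epsilon))}$, which tends to $-\infty$ provided $R>I(X;\hat{X}|U)+\delta(\epsilon)$ for any $\delta(\epsilon)>\delta'(\epsilon)$. Setting $\delta(\epsilon):=2\delta'(\epsilon)$ completes the argument, since the bound is uniform over $(u^n,x^n)\in\mathcal{T}_\epsilon^n$ and hence carries over to the conditional probability.

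The only delicate step is establishing the single-codeword lower bound $2^{-n(I(X;\hat{X}|U)+\delta'(\epsilon))}$; this is essentially a careful application of the typical average lemma (already recalled in the excerpt) to the log-probability, combined with a lower bound on the size of the conditional typical set. Everything else is the exponentiation trick and union-type bounding, which is standard.
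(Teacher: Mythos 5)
Your proof is correct and follows the standard argument for the Covering Lemma from El Gamal and Kim, which the paper simply cites without reproving: reduce to the event that $(U^n,X^n)$ is typical, lower bound a single codeword's success probability via the conditional joint typicality lemma, then exploit mutual conditional independence of the codewords to get the product/exponential bound. No gaps.
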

\begin{lemma}[Packing Lemma, Chapter 3, \cite{GamalKim}]
\label{packing}
Let $(U,X,Y )\sim p(u,x,y)$. Let $(\tilde{U}^n,\tilde{Y}^n)\sim p(\tilde{u}^n,\tilde{y}^n)$ be a pair of arbitrarily distributed random sequences (not necessarily according to $\prod_{i=1}^n p_{U,Y}(\tilde{u}_i,\tilde{y}_i)$). Let $X^n(m), m \in \mathcal{A}$, where $\card{\mathcal{A}} \le  2^{nR}$, be random sequences, each distributed according to $\prod_{i=1}^n p_{\hat{X}|U}(\hat{x}_i|u_i)$. Assume that $X^n(m), m \in \mathcal{A}$, is pairwise conditionally independent of $\tilde{Y}^n$ given $\tilde{U}^n$, but is arbitrarily dependent on other $X^n(m)$ sequences. Then, there exists $\delta(\epsilon)\rightarrow 0$ such that $P\{\tilde{U}^n,X^n,\tilde{Y}^n(m))\in T_{\epsilon}^n \ \forall \ m \in \mathcal{A}\}\rightarrow 0$ as $n\rightarrow \infty$ , if $R < I (X;Y|U) + \delta ( \epsilon )$.
\end{lemma}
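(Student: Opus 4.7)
The plan is to establish the Packing Lemma via a direct union bound over $m \in \mathcal{A}$, combined with a uniform per-codeword bound on the probability that $(\tilde{U}^n, X^n(m), \tilde{Y}^n) \in T_\epsilon^n$. The crucial structural input is the pairwise conditional independence of $X^n(m)$ from $\tilde{Y}^n$ given $\tilde{U}^n$, which lets me bound the probability one codeword at a time without worrying about the (arbitrary) joint law of $(\tilde{U}^n,\tilde{Y}^n)$.

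First I would fix $m \in \mathcal{A}$ and condition on $(\tilde{U}^n,\tilde{Y}^n) = (u^n,y^n)$. By the pairwise conditional independence hypothesis, $X^n(m)$ given $(\tilde{U}^n,\tilde{Y}^n) = (u^n,y^n)$ is distributed as $\prod_{i=1}^n p_{X|U}(x_i \mid u_i)$. Hence
\begin{equation}
\PR\bigl\{(\tilde{U}^n, X^n(m), \tilde{Y}^n) \in T_\epsilon^n \bigm| \tilde{U}^n = u^n, \tilde{Y}^n = y^n\bigr\}
= \sum_{x^n : (u^n,x^n,y^n) \in T_\epsilon^n} \prod_{i=1}^n p_{X|U}(x_i \mid u_i).
\end{equation}
This sum is zero unless $(u^n,y^n) \in T_\epsilon^n(U,Y)$, so I may restrict attention to that case. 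Standard typical-set cardinality bounds then give that the number of $x^n$ with $(u^n,x^n,y^n) \in T_\epsilon^n$ is at most $2^{n(H(X \mid U,Y) + \delta_1(\epsilon))}$, and for each such $x^n$ the product $\prod p_{X|U}(x_i \mid u_i)$ is at most $2^{-n(H(X \mid U) - \delta_2(\epsilon))}$ (since $(u^n,x^n)$ is automatically $\epsilon$-typical). Multiplying and using $H(X \mid U) - H(X \mid U,Y) = I(X;Y \mid U)$ yields the uniform bound
\begin{equation}
\PR\bigl\{(\tilde{U}^n, X^n(m), \tilde{Y}^n) \in T_\epsilon^n \bigm| \tilde{U}^n = u^n, \tilde{Y}^n = y^n\bigr\}
\le 2^{-n\left(I(X;Y\mid U) - \delta(\epsilon)\right)},
\end{equation}
with $\delta(\epsilon) = \delta_1(\epsilon) + \delta_2(\epsilon) \to 0$ as $\epsilon \to 0$. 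Averaging over $(\tilde{U}^n,\tilde{Y}^n)$ preserves the bound because it holds uniformly in $(u^n,y^n)$, so the arbitrariness of the joint law of $(\tilde{U}^n,\tilde{Y}^n)$ never intrudes.

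Finally I would apply the union bound over $\mathcal{A}$ to conclude
\begin{equation}
\PR\bigl\{(\tilde{U}^n, X^n(m), \tilde{Y}^n) \in T_\epsilon^n \text{ for some } m \in \mathcal{A}\bigr\}
\le \card{\mathcal{A}} \cdot 2^{-n(I(X;Y\mid U) - \delta(\epsilon))}
\le 2^{-n(I(X;Y\mid U) - R - \delta(\epsilon))},
\end{equation}
which tends to $0$ as $n \to \infty$ whenever $R < I(X;Y \mid U) - \delta(\epsilon)$. The only subtle point, and the part I would be most careful with, is that the union bound step requires pairwise (not mutual) independence of each $X^n(m)$ from $\tilde{Y}^n$ given $\tilde{U}^n$; mutual dependence among the $X^n(m)$'s across different $m$ is harmless since the union bound does not feel it. Everything else reduces to bookkeeping with the Typical Average Lemma and standard cardinality estimates for conditionally typical sets.
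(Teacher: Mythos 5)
The paper does not prove this lemma; it is quoted as a reference result from El Gamal and Kim (Chapter~3), where the proof is exactly the argument you give — a union bound over the $|\mathcal{A}|\le 2^{nR}$ codewords, with the per-codeword probability of joint typicality bounded by $2^{-n(I(X;Y|U)-\delta(\epsilon))}$ using pairwise conditional independence to factor the conditional law of $X^n(m)$ through $\tilde U^n$ alone, and then the standard cardinality and probability estimates for conditionally typical sequences. Your proof is correct and matches that textbook argument; you also silently and correctly repair two typos in the paper's statement, namely that the event should be the \emph{union} over $m$ (``for some $m$''), and that the sufficient condition should read $R < I(X;Y|U) - \delta(\epsilon)$ rather than $+\,\delta(\epsilon)$.
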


\section{Successive Refinement with Cribbing Decoders}
\label{sec::sr_cribbing}
In this section we analyze the main settings considered in this paper and derive rate regions.  In the various subsections to follow we will respectively study the problem of successive refinement with non-causal, strictly causal and causal cribbing. For clarity, in each subsection, we will first study the setting of \textquotedblleft perfect\textquotedblright\ cribbing where $\hat{Z}_{1,i}=g(\hat{X}_{1,i})=\hat{X}_{1,i}$ and then generalize it to cribbing with any deterministic function $g$.

\subsection{Non-causal Cribbing}
\label{subsec::sr_cribbing_nc}
\subsubsection{Perfect Cribbing}
\label{subsubsec::sr_cribbing_nc_perfect}
\begin{figure}[htbp]
\begin{center}
\scalebox{0.7}{\input{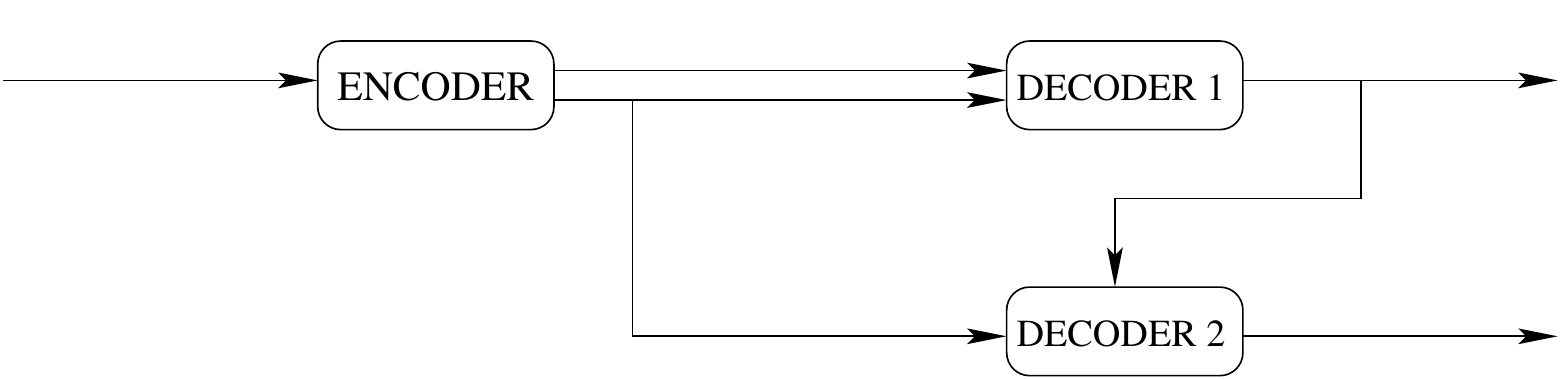_t}}
\caption{Successive refinement, with decoders \textit{cooperating} via (perfect) \textit{non-causal cribbing}.}
\label{sr_cribbing_nc}
\end{center}
\end{figure}

\begin{theorem}
\label{theorem2}
 The rate region $\mathcal{R}(D_1,D_2)$ for the setting in Fig. \ref{sr_cribbing_nc} with perfect (non-causal) cribbing is given as the closure of the set of all the rate tuples $(R_0,R_1)$ such that,
\bea
R_0+R_1&\ge&I(X;\hat{X}_1,\hat{X}_2)\\
R_0&\ge& \{I(X;\hat{X}_1,\hat{X}_2)-H(\hat{X}_1)\}^+,
\eea
for some joint probability distribution $P_{X,\hat{X}_1,\hat{X_2}}$ such that 
$\E[d_i(X,\hat{X}_i)]\le D_i$, for $i=1,2$.
\end{theorem}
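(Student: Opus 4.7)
The plan is to prove the converse and the achievability separately, drawing on the standard rate-distortion toolkit combined with a binning argument tuned to exploit the perfect non-causal crib.

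For the converse, the sum-rate bound is immediate from
\[
n(R_0+R_1)\ge H(T_0,T_1)\ge I(X^n;T_0,T_1)\ge I(X^n;\hat{X}_1^n,\hat{X}_2^n),
\]
since $\hat{X}_1^n$ is a function of $(T_0,T_1)$ and, by the non-causal decoding map $g_{2,i}^{nc}$, $\hat{X}_2^n$ is a function of $(T_0,\hat{X}_1^n)$ and therefore of $(T_0,T_1)$. For the second bound I would write
\[
nR_0+H(\hat{X}_1^n) \ge H(T_0,\hat{X}_1^n) \ge I(X^n;T_0,\hat{X}_1^n)\ge I(X^n;\hat{X}_1^n,\hat{X}_2^n),
\]
again using that $\hat{X}_2^n$ is a function of $(T_0,\hat{X}_1^n)$. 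Standard single-letterization (using the i.i.d.\ property of $X^n$ to get $I(X^n;\hat{X}_1^n,\hat{X}_2^n)\ge\sum_i I(X_i;\hat{X}_{1,i},\hat{X}_{2,i})$, $H(\hat{X}_1^n)\le\sum_i H(\hat{X}_{1,i})$, and a time-sharing random variable $Q$ uniform on $[1:n]$ and independent of the source), together with $R_0\ge 0$, yields the single-letter bounds with the $\{\cdot\}^+$.

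For achievability I would use a one-layer joint codebook together with random binning that aligns $T_0$ with the crib. Fix a test distribution $p(\hat{x}_1,\hat{x}_2\mid x)$ meeting the distortion constraints, and generate $2^{nR_a}$ paired codewords $(\hat{X}_1^n(m),\hat{X}_2^n(m))$ i.i.d.\ from the marginal $p(\hat{x}_1,\hat{x}_2)$, with $R_a=I(X;\hat{X}_1,\hat{X}_2)+\delta$. The encoder picks $m$ such that $(X^n,\hat{X}_1^n(m),\hat{X}_2^n(m))\in\mathcal{T}_\epsilon^n$, which succeeds with high probability by the Covering Lemma. Randomly partition $[1:2^{nR_a}]$ into $2^{nR_0}$ bins; send the bin index as $T_0$ and the within-bin index as $T_1$, so that $R_0+R_1\ge R_a$. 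Decoder 1 reads $(T_0,T_1)$, recovers $m$, and outputs $\hat{X}_1^n(m)$. Decoder 2 sees $T_0$ and, through the non-causal crib, the exact sequence $\hat{X}_1^n(m)$; it declares the unique index $m'$ in bin $T_0$ whose first-coordinate codeword equals the crib. The distortion constraint is then controlled by the Typical Average Lemma applied to the jointly typical triple.

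The main obstacle is the Decoder 2 error analysis, which departs from the familiar joint-typicality packing argument in favor of an exact-match argument against the crib sequence. The key observation is that for any fixed typical sequence, the probability that an independently drawn i.i.d.\ codeword agrees with it is at most $2^{-n(H(\hat{X}_1)-\delta')}$, so the expected number of spurious matches in a bin of size $2^{n(R_a-R_0)}$ is at most $2^{n(R_a-R_0-H(\hat{X}_1)+\delta')}$; this vanishes whenever $R_0>R_a-H(\hat{X}_1)$, reproducing the second bound after $\delta,\delta'\to 0$. When $I(X;\hat{X}_1,\hat{X}_2)-H(\hat{X}_1)<0$, the entire codebook size is already below $2^{nH(\hat{X}_1)}$, so $R_0=0$ suffices and the $\{\cdot\}^+$ is justified. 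Intuitively, the perfect non-causal crib supplies $H(\hat{X}_1)$ bits per symbol of side information that is \emph{aligned} with the encoder's codebook choice, which is exactly what makes the rate region collapse to the stated form.
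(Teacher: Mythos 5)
Your converse is a direct rearrangement of the paper's: you bound $n(R_0+R_1)\ge I(X^n;\hat X_1^n,\hat X_2^n)$ and $nR_0\ge I(X^n;\hat X_1^n,\hat X_2^n)-H(\hat X_1^n)$ by noting $\hat X_2^n$ is a function of $(T_0,\hat X_1^n)$, then single-letterize with the time-sharing variable $Q$, exactly as the paper does (it phrases this via an upper and lower bound on $H(\hat Z_1^n,T_0)$ and substitutes $\hat Z_1=\hat X_1$, but the content is identical). The achievability is where you diverge, and it is a genuine, valid simplification. The paper builds a superposition (``double-binned'') codebook: $2^{nI(X;\hat X_2)}$ cloud centers $\hat X_2^n(m_h)$, each carrying $2^{nI(X;\hat X_1|\hat X_2)}$ conditional $\hat X_1^n$-codewords split into $2^{nR_0}$ columns, so the decoder recovers the row (hence $\hat X_2^n$) from the crib and the column. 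You instead draw $2^{nR_a}$ \emph{paired} codewords $(\hat X_1^n(m),\hat X_2^n(m))$ i.i.d.\ from the joint marginal and apply a single layer of random binning on the index $m$. Both schemes use the same Decoder~2 rule (exact match of the crib against the $\hat X_1$-components within the announced bin) and the same error-counting estimate, namely (number of codewords available in the bin) $\times 2^{-nH(\hat X_1)}$, with $R_a-R_0$ as the effective bin exponent; the constraints $R_0+R_1\ge R_a$ and $R_0>R_a-H(\hat X_1)$ fall out identically. What your flat codebook buys is brevity and a cleaner Covering-Lemma invocation. What the paper's layered codebook buys is structural alignment with the subsequent strictly-causal and causal theorems, where the auxiliary $U$ and the forward-encoding block-Markov scheme essentially force the superposition view and where a flat codebook would not carry over. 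One point worth tightening if you write this up in full: the standard random-coding technicality that the spurious codewords $\hat X_1^n(m'')$, $m''\ne m$, are analyzed as if independent of the selected crib $\hat X_1^n(m)$ even though $m$ itself is codebook-dependent; this is routine (the paper glosses over it the same way), but it deserves a sentence.
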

\begin{proof}
\par
\textit{Achievability} : \par
\underline{\textit{\textquotedblleft Double Binning\textquotedblright\ scheme}}
\par
Before delving into the details, we first provide a high level understanding of the achievability scheme.  Consider the simplified setup where $R_0=0$, that is only Decoder 1 has access to the description of the source, and Decoder 2 gets the reconstruction symbols of Decoder 1 (\textquotedblleft crib\textquotedblright). The intuition is to reveal a lossy description of source to the Decoder 2 through the \textquotedblleft crib\textquotedblright.  So we first generate $2^{nI(X;\hat{X}_2)}$ $\hat{X}_2^n$ codewords, and index them as $2^{nI(X;\hat{X}_2)}$ bins. In each bin, we generate a superimposed codebook of $2^{nI(X;\hat{X}_1|\hat{X}_2)}$ $\hat{X}_1^n$ codewords. Thus total rate of $R_1=I(X;\hat{X}_2)+I(X;\hat{X}_1|\hat{X}_2)=I(X;\hat{X}_1,\hat{X}_2)$ is needed to describe $\hat{X}_1^n$ to Decoder 1. Decoder 2 knows $\hat{X}_1^n$ via the crib, it then needs to infer the unique bin index which was sent, as then it would infer $\hat{X}_2^n$. The only issue to verify is that the $\hat{X}_1^n$ codeword known via cribbing should not lie in two bins. We upper bound the probability of occurrence of such an event by $2^{n(I(X;\hat{X}_1,\hat{X}_2)-H(\hat{X}_1))}$, as there are overall $2^{nI(X;\hat{X}_1,\hat{X}_2)}$ $\hat{X}_1^n$ codewords, and the probability that  a particular $\hat{X}_1^n$ lies in two bins is $2^{-nH(\hat{X}_1)}$. This event has a vanishing probability so long as $I(X;\hat{X}_1,\hat{X}_2)<H(\hat{X}_1)$. Thus the achieved rate region is $R_1\ge I(X;\hat{X}_1,\hat{X}_2)$ such that the constraint $I(X;\hat{X}_1,\hat{X}_2)\le H(\hat{X}_1)$ and distortion constraints are satisfied.
\par 
The general coding scheme when $R_0>0$ is depicted in Fig. \ref{codebooknc} and has a \textquotedblleft doubly-binned\textquotedblright\ structure. Non-zero $R_0$ helps reduce $R_1$ by providing an extra dimension of binning. We first generate $2^{nI(X;\hat{X}_2)}$ $\hat{X}_2^n$ codewords, the indexes of which are the rows (or horizontal bins), and then in each row, we generate $2^{nI(X;\hat{X}_1|\hat{X}_2)}$ $\hat{X}_1^n$ codewords. For each row, these $\hat{X}_1^n$ codewords are then binned uniformly into $2^{nR_0}$ vertical bins, which are  the columns of our  \textquotedblleft doubly-binned\textquotedblright\ structure. Thus each bin is  \textquotedblleft doubly-indexed\textquotedblright\ (row and column index) and has a uniform number of $2^{n(I(X;\hat{X}_1|\hat{X}_2)-R_0)}$ $\hat{X}_1^n$ codewords (as in Fig. \ref{codebooknc}).  Note that this extra or independent dimension of vertical binning was not there when $R_0=0$. Intuition is that column indexing with common rate $R_0$ is independent or \textit{orthogonal} to the row indexing, and hence it  helps to reduce the private rate $R_1$.  The column or vertical bin index is described to both the decoders via common rate $R_0$ and thus $R_1$ reduces to $I(X;\hat{X}_1,\hat{X}_2)-R_0$ to describe $\hat{X}_1^n$ to Decoder 1. Here again, from knowledge of the crib, $\hat{X}_1^n$ and the column index, Decoder 2, infers the unique row index, which now will require $I(X;\hat{X}_1,\hat{X}_2)-R_0\le H(\hat{X}_1)$.
\par
We now describe the achievability in full detail.  
 \begin{figure}[htbp]
\begin{center}
\scalebox{0.70}{\input{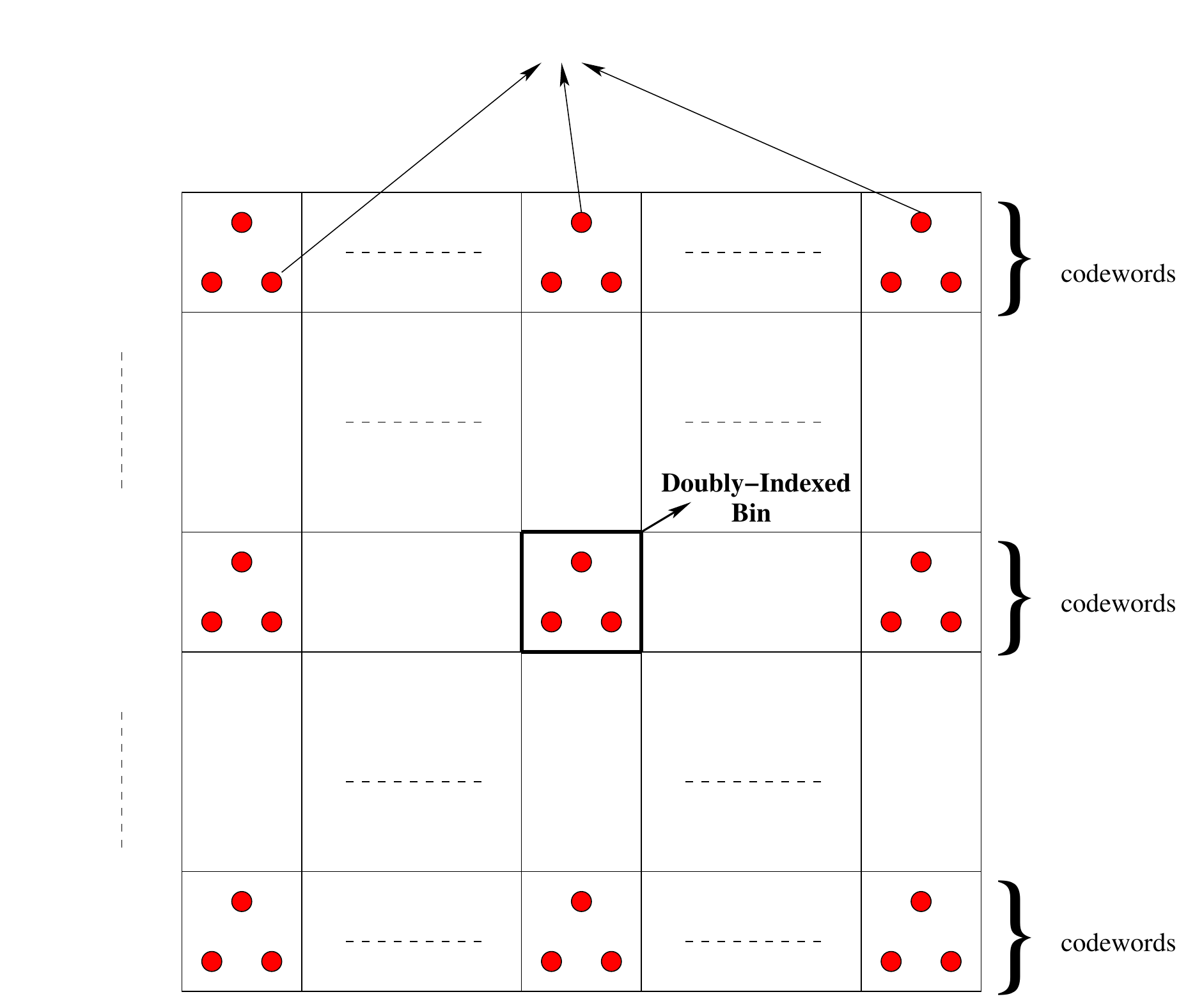_t}}
\caption{\textquotedblleft Double Binning\textquotedblright\ - achievability scheme for the non-causal perfect cribbing.}\label{codebooknc}
\end{center}
\end{figure}

\begin{itemize}

\item \textit{Codebook Generation} : Fix the distribution $P_{X,\hat{X}_1,\hat{X}_2}$,
$\epsilon > 0$ such that
$E[d_1(X,\hat{X}_1)]\le \frac{D_1}{1+\epsilon}$ and $E[d_2(X,\hat{X}_2)]\le
\frac{D_2}{1+\epsilon}$. Generate codebook
$\mathcal{C}_{\hat{X}_2}$ consisting of $2^{nI(X;\hat{X}_2)}$ $\hat{X}^n_2(m_h)$ codewords
generated i.i.d $\sim P_{\hat{X}_2}$, $m_h\in[1:2^{nI(X;\hat{X}_2)}]$. For each
$m_h$, first generate a
codebook $\mathcal{C}_{\hat{X}_1}(m_h)$ consisting
of $2^{nI(X;\hat{X}_1|\hat{X}_2)}$ $\hat{X}^n_1$ codewords
generated i.i.d. $\sim
P_{\hat{X}_1|\hat{X}_2}$, then bin them all uniformly in $2^{nR_0}$ vertical bins $\mathcal{B}(m_v)$, $m_v\in[1:2^{nR_0}]$ and in each bin index them accordingly with $l\in[1:2^{n(I(X;\hat{X}_1|\hat{X}_2)-R_0)}]$. As outlined earlier, $m_h$ corresponds to the row or horizontal index and $m_v$ corresponds to the column or vertical index in our \textquotedblleft doubly-binned\textquotedblright\ structure, while $l$ indexes $\hat{X}_1^n$ codewords within a \textquotedblleft doubly-indexed\textquotedblright\ bin. Thus for each row and column index pair, $(m_h,m_v)$, there are $2^{n(I(X;\hat{X}_1|\hat{X}_2)-R_0)}$ $\hat{X}^n_1$ codewords. $\hat{X}_1^n$ can therefore be indexed by the triple $(m_h,m_v,l)$. The codebooks are revealed to the encoder and both the decoders.  
\item \textit{Encoding} : Given source sequence $X^n$, first the encoder finds $m_h$ from $\mathcal{C}_{\hat{X}_2}$ such that $(X^n,\hat{X}_2^n(m_h))\in\mathcal{T}^n_\epsilon$. Then the encoder finds  pair $(m_v,l)$ such that $(X^n,\hat{X}_1^n(m_h,m_v,l),\hat{X}_2^n(m_h))\in\mathcal{T}^n_\epsilon$. Thus $\hat{X}^n_1(m_h,m_v,l)\in\mathcal{B}(m_v)$. Encoder describes column or vertical bin index $m_v$ as $R_0$ to both the decoders, and the tuple $(m_h,l)$ to the Decoder 1 as rate $R_1$. Thus 
\bea
\label{eq1}
R_1\ge I(X;\hat{X}_2)+I(X;\hat{X}_1|\hat{X}_2)-R_0=I(X;\hat{X}_1,\hat{X}_2)-R_0.
\eea 
\item \textit{Decoding} : Decoder 1 knows all the indices $(m_h,m_v,l)$, and it constructs $\hat{X}_1^n=\hat{X}_1^n(m_h,m_v,l)$. Decoder 2 receives $\hat{X}_1^n$ from the non-causal cribbing and it also knows the column index $m_v$ through rate $R_0$. It then checks inside the column or vertical bin of index $m_v$, to find the unique row or horizontal bin index $m_h$ such that 
$\hat{X}_1^n=\hat{X}_1^n(m_h,m_v,\tilde{l})$ for some $\tilde{l}\in[1:2^{n(I(X;\hat{X}_1|\hat{X}_2)-R_0)}]$. The reconstruction of the Decoder 2 is then $\hat{X}_2^n=\hat{X}_2^n(m_h)$.
\item \textit{Distortion Analysis} :  Consider the following events : 

\begin{enumerate}
 \item 
 \bea
 \mathcal{E}_{1}&=&\mbox{ No $\hat{X}_2^n$ is jointly typical to a given $X^n$}\\
 &=& \bigg{\{}(X^n,\hat{X}_2^n(m_h))\notin\mathcal{T}^n_\epsilon, \forall\ m_h\in[1:2^{nI(X;\hat{X}_2)}]\bigg{\}}. 
 \eea
 The probability of this event vanishes as there are $2^{nI(X;\hat{X}_2)}$ $\hat{X}_2^n$ codewords. (cf. \textit{Covering Lemma}, Lemma \ref{covering}). 
 \item 
 \bea
 \mathcal{E}_{2}&=&\mbox{No $\hat{X}_1^n$ is jointly typical to a typical pair $(X^n,\hat{X}_2^n)$}\\
 & =&  \bigg{\{}(X^n,\hat{X}_2^n(m_h))\in\mathcal{T}^n_\epsilon\bigg{\}}\nonumber\\
 &&\cap\bigg{\{}(X^n,\hat{X}_1^n(m_h,m_v,l),\hat{X}_2^n(m_h))\notin\mathcal{T}^n_\epsilon, \forall\ m_v\in[1:2^{nR_0}], \nonumber\\ &&\forall \ l\in[1:2^{n(I(X;\hat{X}_1|\hat{X}_2)-R_0)}]\bigg{\}}.\nonumber\\ 
 \eea
 The probability of this event vanishes as corresponding to each $m_h$ there are $2^{nI(X;\hat{X}_1|\hat{X}_2)}$ $\hat{X}_1^n$ codewords, (cf. \textit{Covering Lemma}, Lemma \ref{covering}). Without loss of generality, now suppose that encoder does the encoding, $(m_h,m_v,l)=(1,1,1)$. Decoder 2 receives $\hat{X}_1^n$ via non-causal cribbing. The next two events are with respect to Decoder 2.
\item 
\bea
\mathcal{E}_3&=&\mbox{$\hat{X}_1^n$ does not lie in bin indexed by $m_h=1$ and $m_v=1$ }\\
&=&\bigg{\{}\hat{X}^n_1\neq\hat{X}^n_1(1,1,\tilde{l}),
\forall\ \tilde{l}\in[1:2^{n(I(X;\hat{X}_1|\hat{X}_2)-R_0)}]\bigg{\}}.
\eea
But the probability of this event goes to zero, because due to our encoding procedure, $\hat{X}_1^n=
\hat{X}_1^n(1,1,1)$.
\item
\bea
\mathcal{E}_4&=&\mbox{$\hat{X}_1^n$ lies in a bin with row index, $\hat{m}_h \neq 1$ and column index $m_v=1$.}\\
&=&\bigg{\{}\hat{X}^n_1=\hat{X}^n_1(\hat{m}_h,1,\tilde{l}), \ \hat{m}_h\neq 1\mbox{ for some }
\tilde{l}\in[1:2^{n(I(X;\hat{X}_1|\hat{X}_2)-R_0)}]\bigg{\}}.\nonumber\\
\eea
Since $\hat{X}_1^n=\hat{X}_1^n(1,1,1)$, this event is equivalent to finding $\hat{X}_1^n$ lying in two different rows or horizontal bins, but with the same column or vertical bin index ($m_v=1$). The probability of a single $\hat{X}_1^n$ codeword occurring repeatedly in two horizontal bins indexed with different row index is $2^{-nH(\hat{X}_1)}$, while knowing the column index, $m_v$, total number of $\hat{X}_1^n$ codewords with a particular column index are, $2^{n(I(X;\hat{X}_1,\hat{X}_2)-R_0)}$, so the probability of event $\mathcal{E}_4$ vanishes so long as, 
\bea
\label{eq1.1}
I(\hat{X}_1;\hat{X}_1,\hat{X}_2)-R_0<H(\hat{X}_1).
\eea 
\end{enumerate}
Thus consider the event, $\mathcal{E}=\mathcal{E}_1\cup\mathcal{E}_2\cup\mathcal{E}_2\cup\mathcal{E}_4$,
using the rate constraints from Eq. (\ref{eq1}) and Eq. (\ref{eq1.1}), the probability of the event vanishes if, 
\bea 
R_0+R_1&\ge& I(X;\hat{X}_1,\hat{X}_2)\\
R_0&\ge& \{I(X;\hat{X}_1,\hat{X}_2)- H(\hat{X}_1)\}^+.
\eea
We will now bound the distortion. Assume without loss of generality that,
$d_i(\cdot,\cdot)\le D_{max}$, for $i=1,2$. For both the decoders, ($i=1,2$), 
\bea
E\left[d(X^n,\hat{X}_i^n)\right]&=& P(\mathcal{E})
E\left[d(X^n,\hat{X}_i^n)|\mathcal{E}\right]+ P(\mathcal{E}^c)
E\left[d(X^n,\hat{X}_i^n)|\mathcal{E}^c\right]\\
&\stackrel{(a)}{\le}& P(\mathcal{E})D_{max}+(1+\epsilon)\E[d(X,\hat{X}_i)]\\
&\le&P(\mathcal{E})D_{max}+D_i,
\eea
where $(a)$ is via typical average lemma (cf. Typical Average Lemma \ref{typicalaverage}). Proof is completed by letting
$n\rightarrow \infty$ when $P(\mathcal{E})\rightarrow 0$.
\end{itemize}
\par
 \textit{Converse :} Converse for this setting follows by substituting $\hat{Z}_1=\hat{X}_1$ in the converse for the deterministic function cribbing in the next subsection.
\begin{note}[Joint Typicality Decoding]
Note that here our decoding for Decoder 2 relies on finding a unique bin index in which $\hat{X}_1^n$ (obtained via cribbing) lies, and there is an error if two different bins have the same $\hat{X}_1^n$. An alternative based on joint typicality decoding can also be used to achieve the same region as follows : Decoder 2 receives $\hat{X}_1^n$ via non-causal cribbing and it also knows the column index $m_v$ through rate $R_0$. It then finds the unique row or horizontal bin index $m_h$ such that $(\hat{X}_1^n,\hat{X}_1^n(m_h,m_v,\tilde{l}),\hat{X}_2^n(m_h))\in\mathcal{T}^n_\epsilon$ for some $\tilde{l}\in[1:2^{n(I(X;\hat{X}_1|\hat{X}_2)-R_0)}]$. The reconstruction of the Decoder 2 is then $\hat{X}_2^n=\hat{X}_2^n(m_h)$. We analyze the following two events, assuming without loss of generality that encoder does the encoding $(m_h,m_v,l)=(1,1,1)$.\begin{itemize}
\item 
\bea
\mathcal{E}_{d,1}&=&\mbox{Decoder 2 finds no jointly typical $\hat{X}_1^n$ indexed by $m_h=1$ and $m_v=1$ }\\
&=&\bigg{\{}(\hat{X}^n_1,\hat{X}^n_1(1,1,\tilde{l}),
\hat{X}_2^n(1))\notin\mathcal { T }
^n_\epsilon, \forall\ \tilde{l}\in[1:2^{n(I(X;\hat{X}_1|\hat{X}_2)-R_0)}]\bigg{\}}.
\eea
But the probability of this event goes to zero, because due to our encoding procedure, with high probability,  $(X^n,\hat{X}_1^n(1,1,1),\hat{X}_2^n(1))\in\mathcal{T}^n_\epsilon$. As $\hat{X}_1^n=
\hat{X}_1^n(1,1,1)$ this implies,  $(\hat{X}^n_1,\hat{X}^n_1(1,1,1),
\hat{X}_2^n(1))\in\mathcal { T }
^n_\epsilon$.
\item
\bea
\mathcal{E}_{d,2}&=&\mbox{Decoder 2 finds a jointly typical $\hat{X}_1^n$ codeword in row with index, $\hat{m}_h \neq 1$.}\\
&=&\bigg{\{}(\hat{X}^n_1,\hat{X}^n_1(\hat{m}_h,1,\tilde{l}),
\hat{X}_2^n(\hat{m}_h))\in\mathcal { T }
^n_\epsilon, \ \hat{m}_h\neq 1\mbox{ for some }
\tilde{l}\in[1:2^{n(I(X;\hat{X}_1|\hat{X}_2)-R_0)}]\bigg{\}}.\nonumber\\
\eea
By Lemma \ref{packing} (\textit{Packing Lemma }, substitute, $\card{\mathcal{A}}=2^{n(I(X;\hat{X}_1,\hat{X}_2)-R_0)}, U=\phi, X=(\hat{X}_2,\hat{X}_1), Y=\hat{X}_1$), probability of this event goes to zero with large $n$, if
\bea
\label{eq1.1}
I(\hat{X}_1;\hat{X}_1,\hat{X}_2)-R_0\le I(\hat{X}_1;\hat{X}_1,\hat{X}_2)=H(\hat{X}_1).
\eea
\end{itemize}
Thus we obtain the same constraint with the joint typicality decoding for Decoder 2. In all the subsections to follow, for Decoder 2, joint typicality decoding can also be used as an alternative to the decoding that will be described.
\end{note}

\end{proof}
\subsubsection{Deterministic Function Cribbing}
\label{subsubsec::sr_cribbing_nc_det}
\begin{theorem}
\label{theorem2}
The rate region $\mathcal{R}(D_1,D_2)$ for the setting in Fig. \ref{sr_cribbing_nc_det} with deterministic function (non-causal) cribbing is given as the closure of the set of all the rate tuples $(R_0,R_1)$ such that,
\bea
R_0+R_1&\ge&I(X;\hat{X}_1,\hat{X}_2)\\
R_0&\ge&\{I(X;\hat{Z}_1,\hat{X}_2)-H(\hat{Z}_1)\}^+,
\eea
for some joint probability distribution $P_{X}P_{\hat{Z}_1,\hat{X_2}|X}P_{\hat{X}_1|\hat{Z}_1,\hat{X}_2,X}$ such that 
$\E[d_i(X,\hat{X}_i)]\le D_i$, for $i=1,2$.
\end{theorem}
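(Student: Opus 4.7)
\emph{Approach.} The achievability generalizes the ``double-binning'' scheme of Theorem 1 by binning $\hat{Z}_1^n$ codewords (rather than $\hat{X}_1^n$), since Decoder 2 only observes $\hat{Z}_1^n = g(\hat{X}_1^n)$ via the non-causal crib. The converse follows standard Fano-style manipulations, with the crucial new ingredient being that $\hat{X}_2^n$ is a deterministic function of $(T_0,\hat{Z}_1^n)$ rather than of $(T_0,\hat{X}_1^n)$.

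\emph{Achievability sketch.} I would build a three-layer codebook. First, draw $2^{nI(X;\hat{X}_2)}$ codewords $\hat{X}_2^n(m_h)$ i.i.d.\ $\sim P_{\hat{X}_2}$. For each $m_h$, draw $2^{nI(X;\hat{Z}_1|\hat{X}_2)}$ codewords $\hat{Z}_1^n$ i.i.d.\ $\sim \prod_i P_{\hat{Z}_1|\hat{X}_2}(\cdot|\hat{X}_{2,i}(m_h))$, and distribute them uniformly into $2^{nR_0}$ vertical bins $\mathcal{B}(m_v)$, indexed within-bin by $l$. For each triple $(m_h,m_v,l)$ draw $2^{nI(X;\hat{X}_1|\hat{Z}_1,\hat{X}_2)}$ codewords $\hat{X}_1^n$ from $P_{\hat{X}_1|\hat{Z}_1,\hat{X}_2}$; these automatically satisfy $\hat{Z}_1=g(\hat{X}_1)$ componentwise, since this conditional is supported only on that relation. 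The covering lemma guarantees a jointly typical quadruple $(X^n,\hat{X}_2^n,\hat{Z}_1^n,\hat{X}_1^n)$ with high probability. The encoder sends $m_v$ on the common link and $(m_h,l,k)$ privately (where $k$ indexes $\hat{X}_1$), giving $R_0+R_1 \ge I(X;\hat{X}_2)+I(X;\hat{Z}_1|\hat{X}_2)+I(X;\hat{X}_1|\hat{Z}_1,\hat{X}_2)=I(X;\hat{X}_1,\hat{X}_2)$. Decoder 1 reconstructs $\hat{X}_1^n$ directly. Decoder 2 uses $m_v$ and $\hat{Z}_1^n$ from the crib to search $\mathcal{B}(m_v)$ for the unique row $m_h$ whose $\hat{Z}_1^n$ codeword matches the observation. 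A union bound over the $\approx 2^{n(I(X;\hat{Z}_1,\hat{X}_2)-R_0)}$ competing $\hat{Z}_1^n$ codewords in that bin---each coinciding with the true sequence with probability $\approx 2^{-nH(\hat{Z}_1)}$---forces $R_0 \ge \{I(X;\hat{Z}_1,\hat{X}_2)-H(\hat{Z}_1)\}^+$. Distortion bounds then follow from the typical-average lemma exactly as in the perfect-cribbing proof.

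\emph{Converse and main obstacle.} The sum-rate bound is immediate: $n(R_0+R_1)\ge I(X^n;T_0,T_1)\ge I(X^n;\hat{X}_1^n,\hat{X}_2^n)\ge \sum_i I(X_i;\hat{X}_{1,i},\hat{X}_{2,i})$. For $R_0$, using that $\hat{X}_2^n$ is a deterministic function of $(T_0,\hat{Z}_1^n)$,
\begin{align*}
nR_0 &\ge H(T_0|\hat{Z}_1^n) \ge I(X^n;T_0|\hat{Z}_1^n) \ge I(X^n;\hat{X}_2^n|\hat{Z}_1^n)\\
&= I(X^n;\hat{X}_2^n,\hat{Z}_1^n) - I(X^n;\hat{Z}_1^n).
\end{align*}
I would single-letterize $I(X^n;\hat{X}_2^n,\hat{Z}_1^n)\ge \sum_i I(X_i;\hat{X}_{2,i},\hat{Z}_{1,i})$ via the chain rule and $X_i\perp X^{i-1}$, and upper-bound $I(X^n;\hat{Z}_1^n)\le H(\hat{Z}_1^n)\le \sum_i H(\hat{Z}_{1,i})$. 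After a standard time-sharing step (absorbed into the single-letter pmf $P_{X,\hat{X}_1,\hat{X}_2}$), this gives $R_0\ge I(X;\hat{Z}_1,\hat{X}_2)-H(\hat{Z}_1)$, with $R_0\ge 0$ providing the $\{\cdot\}^+$. The main obstacle is bounding $I(X^n;\hat{Z}_1^n)$ in the right direction: because it enters with a minus sign we need an upper single-letterization, and the essentially trivial chain $H(\hat{Z}_1^n)\le\sum_i H(\hat{Z}_{1,i})$ is precisely what closes the gap, without invoking any auxiliary Markov structure that would restrict the region. On the achievability side, the only subtlety is preserving $\hat{Z}_1=g(\hat{X}_1)$ componentwise while drawing $\hat{X}_1^n$ from $P_{\hat{X}_1|\hat{Z}_1,\hat{X}_2}$, which is automatic from the conditional's support.
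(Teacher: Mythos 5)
Your proposal is correct and follows essentially the same route as the paper: the achievability is the identical three-layer doubly-binned codebook (binning the $\hat{Z}_1^n$ codewords rather than $\hat{X}_1^n$), and the converse, though you arrange it via $nR_0\ge H(T_0|\hat{Z}_1^n)\ge I(X^n;\hat{X}_2^n|\hat{Z}_1^n)$ while the paper sandwiches $H(\hat{Z}_1^n,T_0)$ between $I(X^n;\hat{Z}_1^n,\hat{X}_2^n)$ and $H(\hat{Z}_1^n)+nR_0$, is algebraically equivalent and relies on the same key facts ($\hat{X}_2^n=f(T_0,\hat{Z}_1^n)$, $I(X^n;\hat{Z}_1^n)\le\sum_i H(\hat{Z}_{1,i})$, and the standard time-sharing step).
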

\begin{figure}[htbp]
\begin{center}
\scalebox{0.7}{\input{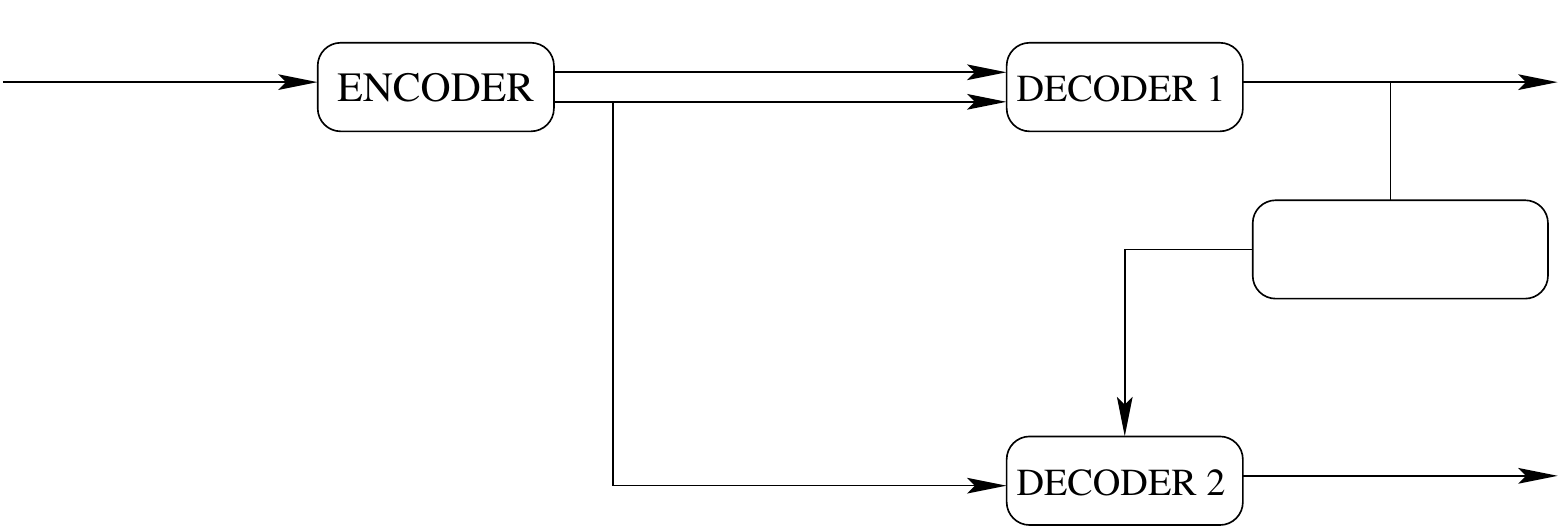_t}}
\caption{Successive refinement, with decoders \textit{cooperating} via (deterministic function) \textit{non-causal cribbing}.}
\label{sr_cribbing_nc_det}
\end{center}
\end{figure}
\begin{proof}
\par
\textit{Achievability} : The scheme is similar to the achievability in the
previous section, where cribbing was perfect, with some minor differences. We give an outline here and highlight the differences, deferring the complete proof to Appendix \ref{appendixC}. The codebook here also has a \textquotedblleft doubly-binned\textquotedblright\ structure as in Fig. \ref{codebooknc}, the difference being that each \textquotedblleft doubly-indexed\textquotedblright\ bin has a uniform number of $\hat{Z}_1^n$ codewords instead of $\hat{X}_1^n$. So first $2^{nI(X;\hat{X}_2)}$ $\hat{X}_2^n$ codewords are generated, for each of them, $2^{nI(X;\hat{Z}_1|\hat{X}_2)}$ $\hat{Z}_1^n$ codewords are generated, which are then vertically binned uniformly into $2^{nR_0}$ vertical bins (columns).  Then for each $\hat{Z}_1^n$, $2^{nI(X;\hat{X}_1|\hat{Z}_1,\hat{X}_2)}$ $\hat{X}_1^n$ codewords are generated. Here also, the column index is described as $R_0$ and the remaining indices are  described as $R_1$, which hence is equal to $I(X;\hat{X}_1,\hat{Z}_1,\hat{X}_2)-R_0=I(X;\hat{X}_1,\hat{X}_2)-R_0$. Decoder 1 can,  as usual, construct its estimate since it knows all the indices, Decoder 2, infers the row index from the deterministic function crib, $\hat{Z}_1^n$ and knowledge of the column index. The decodability of a unique row index depends on the fact that there should not be the same $\hat{Z}_1^n$ codeword in two rows. This requires (as we saw in the previous section), $I(X;\hat{Z}_1,\hat{X}_2)-R_0\le H(\hat{Z}_1)$.
\par
\textit{Converse} : Assume we have a $(2^{nR_0},2^{nR_1},n)$ code (as per Definition \ref{definition1}) achieving respective distortions $D_1$ and $D_2$. Denote $T_1=f_{1,n}(X^n)$ and $T_0=f_{2,n}(X^n)$. Consider,
\bea
H(\hat{Z}_1^n,T_0)&\ge&I(X^n;\hat{Z}_1^n,T_0)\\
&\stackrel{(a)}{=}&I(X^n;\hat{Z}_1^n,\hat{X}_2^n,T_0)\\
&\ge&I(X^n;\hat{Z}_1^n,\hat{X}_2^n)\\
 &=&\sum_{i=1}^{n}I(X_i;\hat{Z}_1^n,\hat{X}_2^n|X^{i-1})\\
 &\stackrel{(b)}{=}&\sum_{i=1}^{n}I(X_i;\hat{Z}_1^n,\hat{X}_2^n,X^{i-1})\\
 &\ge&\sum_{i=1}^{n}I(X_i;\hat{Z}_{1,i},\hat{X}_{2,i})\\
 &\stackrel{}{=}&n\sum_{i=1}^{n}\frac{1}{n}I(X_i;\hat{Z}_{1,i},\hat{X}_{2,i})\\
 &\stackrel{(c)}{=}&nI(X_Q;\hat{Z}_{1,Q},\hat{X}_{2,Q}|Q)\\
 &\stackrel{(d)}{=}&nI(X_Q;\hat{Z}_{1,Q},\hat{X}_{2,Q},Q)\\
  &\ge &nI(X_Q;\hat{Z}_{1,Q},\hat{X}_{2,Q})\\
 H(\hat{Z}_1^n,T_0) &\le&H(\hat{Z}_{1}^n)+H(T_0)\\
 &\le&\sum_{i=1}^{n}H(\hat{Z}_{1,i})+nR_0\\
 &=&nH(\hat{Z}_{1,Q}|Q)+nR_0\\
 &\le&nH(\hat{Z}_{1,Q})+nR_0
 \eea
 \bea
  n(R_0+R_1)&=&H(T_0,T_1)\\
&=&H(T_0,T_1)-H(T_0,T_1|X^n)
\eea
 \bea
&=&I(X^n;T_0,T_1)\\
&\stackrel{(e)}{=}&I(X^n;T_0,T_1,\hat{X}_1^n,\hat{X}_2^n)\\
&=&\sum_{i=1}^{n}I(X_i;T_0,T_1,\hat{X}_1^n,\hat{X}_2^n|X^{i-1})\\
&\stackrel{(f)}{=}&\sum_{i=1}^{n}I(X_i;T_0,T_1,\hat{X}_1^n,\hat{X}_2^n,X^{i-1})\\
&\ge&\sum_{i=1}^{n}I(X_i;\hat{X}_{1,i},\hat{X}_{2,i})\\
  &=&n\sum_{i=1}^{n}\frac{1}{n}I(X_i;\hat{X}_{1,i},\hat{X}_{2,i})\\
 &\stackrel{}{=}&nI(X_Q;\hat{X}_{1,Q},\hat{X}_{2,Q}|Q)\\
 &\stackrel{}{=}&nI(X_Q;\hat{X}_{1,Q},\hat{X}_{2,Q},Q)\\ 
  &\ge &nI(X_Q;\hat{X}_{1,Q},\hat{X}_{2,Q}), 
\eea
where  (a) follows from the fact that $\hat{X}_2^n$ is a function of $(T_0,\hat{Z}_1^n)$, (b) follows from the independence of $X_i$ and $X^{i-1}$, and (c) follows by defining $Q\in [1:n]$ as a uniformly distributed time sharing random
variable independent of the source, (d) follows from the independence of $Q$ with the source process, (e) follows as $(\hat{X}_1^n,\hat{X}_2^n)$ is a function of $(T_0,T_1)$ and finally (f)  follows similarly from the independence of $X_i$ and $X^{i-1}$. Finally, we
bound the distortion as, 
\bea
D_i&\ge&\E\left[d(X^n,\hat{X}_i^n)\right]\\
&=&\E\left[\frac{1}{n}\sum_{i=1}^{n}d(X_i,\hat{X}_i)\right]\\
&=&\E[d(X_Q,\hat{X}_{i,Q})].
\eea 
The proof is completed by noting that the joint distribution of $(X_Q,\hat{X}_{1,Q},\hat{X}_{2,Q})$ 
is the same as that of $(X,\hat{X}_1,\hat{X}_2)$.
\end{proof}
\begin{note}
\label{note1}
Due to the structure of our problem, i.e., $\hat{Z}_1=g(
\hat{X}_1)$, it is easy to prove the Markov relation, $(X,\hat{X}_2)-\hat{X}_1-\hat{Z}_1$, hence the distribution mentioned in the statement of the theorem, can equivalently
be factorized as, $P_XP_{\hat{X}_1,\hat{X}_2|X}\1_{\{\hat{Z}_1=g(\hat{X}_1)\}}$, (which is the form stated in Table \ref{table_sr_cribbing}). This applies similarly for theorems to follow, and we omit this explanation henceforth.
\end{note}
\subsection{Strictly-Causal Cribbing}
\label{subsubsec::sr_cribbing_sc}
\subsubsection{Perfect Cribbing}
\label{subsubsec::sr_cribbing_nc_perfect}
\begin{theorem}
\label{theorem3}
The rate region $\mathcal{R}(D_1,D_2)$ for the setting in Fig. \ref{sr_cribbing_sc} with perfect cribbing (strictly causal) is given by the closure of the set of all the rate tuples $(R_0,R_1)$ such that,
\bea
R_0+R_1&\ge&I(X;\hat{X}_1,\hat{X}_2)\\
R_0&\ge & \{I(X;\hat{X}_1,\hat{X}_2)-H(\hat{X}_1|\hat{X}_2)\}^+,
\eea
for some joint probability distribution $P_{X,\hat{X}_1,\hat{X}_2}$ such that 
$\E[d_i(X,\hat{X}_i)]\le D_i$, for $i=1,2$.
\end{theorem}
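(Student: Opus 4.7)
\emph{Approach.} The two rate inequalities differ from the non-causal perfect-cribbing bounds of Theorem~1 only in that $H(\hat{X}_1)$ is sharpened to $H(\hat{X}_1|\hat{X}_2)$; the strategy is to parallel that proof while exploiting the one additional structural identity provided by strict causality, namely $\hat{X}_{2,i}=g^{sc}_{2,i}(T_0,\hat{X}_1^{i-1})$. I would address the direct part with the Forward Encoding and Block Markov Decoding scheme announced in the introduction, and the converse with a short information-theoretic computation that invokes this identity.

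\emph{Converse.} Given an admissible $(2^{nR_0},2^{nR_1},n)$ code, the sum-rate bound $n(R_0+R_1)\ge nI(X_Q;\hat{X}_{1,Q},\hat{X}_{2,Q})$ follows exactly from the chain used in Theorem~2 and does not depend on the type of cribbing. For the bound on $R_0$, I would start from $nR_0\ge H(T_0)=I(X^n;T_0)$ and use that $(\hat{X}_1^n,\hat{X}_2^n)$ is a function of $X^n$ to expand
\begin{equation*}
I(X^n;T_0)=I(X^n;T_0,\hat{X}_1^n,\hat{X}_2^n)-H(\hat{X}_1^n,\hat{X}_2^n|T_0).
\end{equation*}
The first term is bounded below by $nI(X_Q;\hat{X}_{1,Q},\hat{X}_{2,Q})$ as in Theorem~2. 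For the second term, strict causality gives $H(\hat{X}_2^n|T_0,\hat{X}_1^n)=0$, so $H(\hat{X}_1^n,\hat{X}_2^n|T_0)=H(\hat{X}_1^n|T_0)$, and for every $i$,
\begin{equation*}
H(\hat{X}_{1,i}|T_0,\hat{X}_1^{i-1})=H(\hat{X}_{1,i}|T_0,\hat{X}_1^{i-1},\hat{X}_{2,i})\le H(\hat{X}_{1,i}|\hat{X}_{2,i}),
\end{equation*}
since $\hat{X}_{2,i}$ is already determined by what is being conditioned on. Summing and introducing the time-sharing random variable $Q$ yields $H(\hat{X}_1^n|T_0)\le nH(\hat{X}_{1,Q}|\hat{X}_{2,Q})$. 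Assembling and taking the positive part gives $R_0\ge\{I(X;\hat{X}_1,\hat{X}_2)-H(\hat{X}_1|\hat{X}_2)\}^+$, and the distortion conditions are handled verbatim as in Theorem~2.

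\emph{Achievability.} I would operate over $B$ super-blocks of length $n$ using the same double-binned superposition codebook as in Theorem~1: a $\hat{X}_2^n$ outer code of size $2^{nI(X;\hat{X}_2)}$, a superimposed $\hat{X}_1^n$ inner code of size $2^{nI(X;\hat{X}_1|\hat{X}_2)}$ per outer codeword, uniformly binned into $2^{nR_0}$ vertical bins $\mathcal{B}(m_v)$. In super-block $b$, the encoder covers $X^n(b)$ with a jointly typical triple $(\hat{X}_2^n(m_h(b)),\hat{X}_1^n(m_h(b),m_v(b),l(b)))$ and transmits $T_0(b)=m_v(b)$, $T_1(b)=(m_h(b),l(b))$; Decoder~1 reconstructs $\hat{X}_1^n(b)$ immediately. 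Since its crib is only strictly-causal within a block, Decoder~2 invokes the block-Markov structure: using the essentially complete $\hat{X}_1^{n-1}(b)$ available at the end of block $b$ together with $m_v(b)$, it resolves $m_h(b)$ and hence $\hat{X}_2^n(b)$. The forward encoding of bin indices across super-blocks, in the spirit of Cuff--Zhao~\cite{Coordination_Implicit_Cuff_Zhao}, reconciles this one-block delay with the distortion criterion at an $O(1/B)$ boundary cost. The usual covering argument gives $R_0+R_1\ge I(X;\hat{X}_1,\hat{X}_2)$, and a joint-typicality packing step exploiting the superposition structure (Decoder~2 effectively decodes $m_h$ using $\hat{X}_2^n(m_h)$ as an implicit side variable via the parent of the cribbed $\hat{X}_1^n$) is expected to yield the refined $R_0\ge\{I(X;\hat{X}_1,\hat{X}_2)-H(\hat{X}_1|\hat{X}_2)\}^+$.

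\emph{Main obstacle.} The converse is short and essentially mechanical once the strict-causality identity is invoked; I expect the harder direction to be the achievability, where the block-Markov delay must be carefully synchronized with the forward encoding so that Decoder~2's output at time $i$ remains aligned with the source symbol $X_i$ and so that the boundary super-blocks contribute negligibly to distortion. The most delicate single step will be showing that the packing argument for Decoder~2 actually produces the conditional-entropy exponent $H(\hat{X}_1|\hat{X}_2)$ (matching the converse) rather than the marginal $H(\hat{X}_1)$ yielded by the vanilla non-causal analysis.
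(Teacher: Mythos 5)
Your converse is correct and is, up to a reorganization of the chain, the same argument as the paper's: the crucial step in both is that under strictly causal cribbing $\hat{X}_{2,i}$ is a deterministic function of $(T_0,\hat{X}_1^{i-1})$, which is what lets you slip $\hat{X}_{2,i}$ into the conditioning of $H(\hat{X}_{1,i}\mid T_0,\hat{X}_1^{i-1})$ for free. (The paper formulates it by identifying $U_i=(T_0,\hat{X}_1^{i-1})$ and bounding $H(\hat{X}_1^n,T_0)$ from above and below; you decompose $I(X^n;T_0)$ instead, but the arithmetic is the same.)

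The achievability, however, has a genuine gap. As written, your encoder in super-block $b$ picks $m_h(b)$ to \emph{cover $X^n(b)$ with $\hat{X}_2^n(m_h(b))$}, and Decoder~2 is supposed to resolve $m_h(b)$ from \emph{the crib of block $b$ itself} (``the essentially complete $\hat{X}_1^{n-1}(b)$ available at the end of block $b$''). But Decoder~2 must emit $\hat{X}_{2,i}(b)$ at every time $i$ in block $b$ while having seen only $\hat{X}_1^{i-1}(b)$; by the time it has accumulated enough crib to identify $m_h(b)$, the block is over and its outputs for that block are already committed. The ``one-block delay'' you invoke is exactly the problem, not a feature that forward encoding magically absorbs at the boundary: a decoder cannot retroactively change past reconstructions. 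The paper's Forward Encoding resolves this by decoupling two roles you have conflated into a single $m_h(b)$. In block $b$, the $\hat{X}_1^n$ codewords are superimposed on $U^n(b,m_b)$ (the $U$-codeword already known from block $b-1$ and used for the \emph{current} block's $\hat{X}_2$-reconstruction), while the \emph{horizontal bin index} $m_{h,b}$ into which $\hat{X}_1^n(b)$ is placed is chosen to equal $m_{b+1}$, the index of the $U$-codeword jointly typical with $X^n(b+1)$. Decoder~2 decodes $m_{h,b-1}=m_b$ from the \emph{past} crib $\hat{X}_1^n(b-1)$ (fully available at the start of block $b$ by strict causality), and therefore knows $U^n(b,m_b)$ throughout block $b$. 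It is precisely this superposition of $\hat{X}_1^n$ onto the \emph{current} $U^n$ that makes the collision exponent inside a bin $H(\hat{X}_1\mid U)=H(\hat{X}_1\mid\hat{X}_2)$ rather than $H(\hat{X}_1)$, which is the quantity you correctly flag as ``the most delicate single step''; without the forward shift you would have neither the causality nor the conditional exponent.
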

\begin{figure}[htbp]
\begin{center}
\scalebox{0.7}{\input{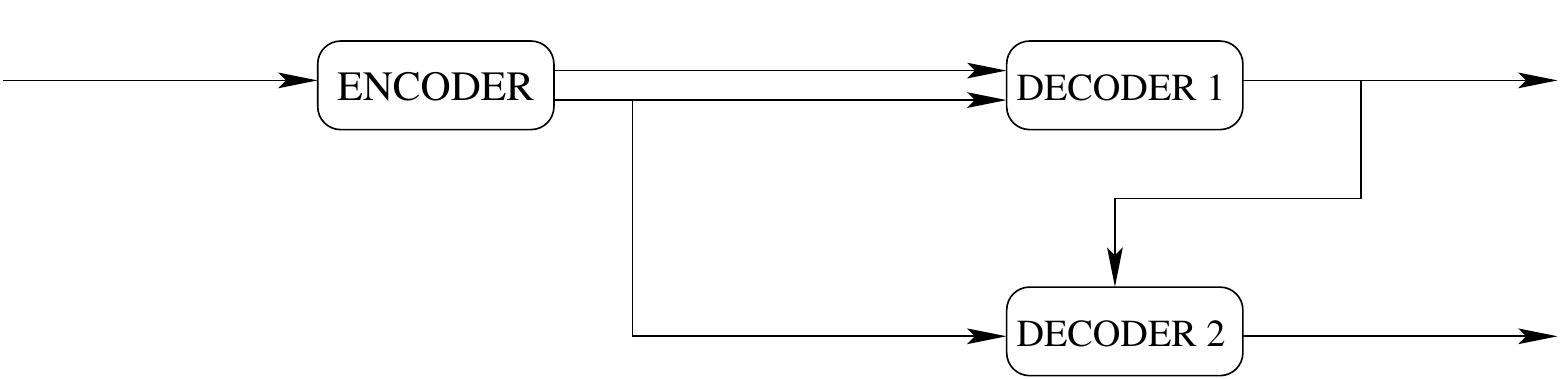_t}}
\caption{Successive refinement, with decoders \textit{cooperating} via (perfect) \textit{strictly-causal cribbing}. }
\label{sr_cribbing_sc}
\end{center}
\end{figure}
\begin{proof}\par
\textit{Achievability} :\par
We will show the achievability of the following region instead,
\bea
R_0+R_1&\ge&I(X;\hat{X}_1,U)\\
R_0&\ge & \{I(X;\hat{X}_1,U)-H(\hat{X}_1|U)\}^+,
\eea
for some joint probability distribution $P_{X,\hat{X}_1,U}\1_{\{\hat{X}_2=f(U)\}}$. Note that the rate region in the theorem will then be obtained by simply taking $U=\hat{X}_2$. Here we deliberately present our encoding scheme with an auxiliary random variable as this will be used (with minor changes) to derive the achievable region for the case of causal cribbing discussed in the next subsection. 
\newpage
\underline{\textit{\textquotedblleft Forward Encoding\textquotedblright\ and
\textquotedblleft Block Markov Decoding\textquotedblright\ scheme}} :\par
We use a new scheme that we refer to as \textquotedblleft Forward Encoding\textquotedblright\ and
\textquotedblleft Block Markov Decoding\textquotedblright. We first briefly give an overview of the coding scheme and for simplicity consider the case when common rate $R_0=0$. Thus the source description is available only to Decoder 1, while Decoder 2  has access to the reconstruction symbols of Decoder 1, but only strictly-causally. Hence in principle we cannot deploy a scheme to operate in one block as was done for non-causal cribbing. We need to use a scheme to operate in multiple (large number) of blocks, and use an encoding procedure where $\hat{X}_1^n$ of the previous block carries information about the source sequence of the current block. In this way due to strictly causal cribbing, in the current block, Decoder 2 will know all the reconstruction symbols of Decoder 1 from the previous block, which will contain information about the source for the current block.  This is the main idea and is operated as follows : in each block, first we generate $2^{nI(X;U)}$ $U^n$ codewords, and for each $U^n$ codeword, we generate $2^{nI(X;U)}$ bins and in each bin $2^{nI(X;\hat{X}_1|U)}$ $\hat{X}_1^n$ codewords are generated. In each block, $U^n$ is jointly typical with the source sequence in the current block and the bin index describes the $U^n$ sequence jointly typical with the source sequence of the future block. This bin index carries information about the source in the future block. Hence, we address encoding as \textquotedblleft Forward Encoding\textquotedblright. Decoding is \textquotedblleft Block Markov Decoding\textquotedblright, as it assumes both decoders have currently decoded the $U^n$ sequence of the previous block. The bin index and index of the $\hat{X}_1^n$ codewords is described as $R_1$ which hence is taken to be $I(X;U)+I(X;\hat{X}_1|U)=I(X;\hat{X}_1,U)$.  Due to cribbing, Decoder 2 knows the $\hat{X}_1^n$ of the previous block and aims to find the bin index in which it lies. And as we argued in previous sections, this is possible if $I(X;\hat{X}_1,U)\le H(\hat{X}_1|U)$.
\par
 The general scheme when $R_0>0$ is depicted in Fig. \ref{codebook}. The additional step which we add to the description above (for $R_0=0$) is to bin in an extra dimension, i.e., with respect to each $U^n$ sequence we generate a \textquotedblleft doubly-binned\textquotedblright\ codebook (as in the achievability of non-causal cribbing, cf. Fig. \ref{codebooknc}). The row index encodes $U^n$ sequences of the future block and $\hat{X}_1^n$ codewords for each row are uniformly binned into $2^{nR_0}$ columns. The column index is the common description to both decoders, so $R_1$ reduces to $I(X;\hat{X}_1,U)-R_0$, and the decodability of  Decoder 2 requires the condition $I(X;\hat{X}_1,U)-R_0\le H(\hat{X}_1|U)$. 
 \par
   We now 
explain this coding scheme in detail and how it helps establish the achievable region when the cooperation between the decoders is via strictly causal cribbing.
\begin{enumerate}
 \item \textit{Codebook Generation :} The scheme does compression in blocks. Fix
the number of blocks to be $B$. In each block, $n$ source symbols are
compressed. Fix a joint probability distribution,
$P_{U,X,\hat{X}_1,\hat{X}_2}=P_{U,X,\hat{X}_1}\1_{\{\hat{X}_2=f(U)\}}$ for some
function $f$ and $\epsilon>0$ such that $\E[d_1(X,\hat{X}_1)]\le
\frac{D_1}{1+\epsilon}$ and $\E[d_2(X,\hat{X}_2)]\le
\frac{D_2}{1+\epsilon}$.  
\par 
Now in each block we generate codebook as follows. First
we generate a codebook  $\mathcal{C}_U(b)=\{u^n(b,m) \sim \prod_{i=1}^{^n}
P_{U}(u_i(b,m)),m=[1:2^{nI(X;U)}]\}$ for each block $b\in[1:B]$. For
each
$u^n(b,m)$, we create $2^{nI(X;U)}$ horizontal bins or rows $\mathcal{B}(m_h)$ which are indexed as 
$m_h\in[1:2^{nI(X;U)}]$. In each bin we
generate a codebook $2^{nI(X;\hat{X}_1|U)}$ $\hat{X}_1^n$ codewords which are then binned again into $2^{nR_0}$ vertical bins or columns,  $\mathcal{B}(m_v)$ uniformly, $m_v\in[1:2^{nR_0}]$ and index them accordingly by $l\in[1:2^{n(I(X;\hat{X}_1|U)-R_0)}]$. Thus $\hat{X}_1^n$ can be equivalently indexed as the tuple $(b,m,m_h,m_v,l)$. Hence for each $u^n$ as explained earlier we have a \textquotedblleft doubly-binned\textquotedblright\ structure, $m_h$ denotes the row index and $m_v$ denotes the column index. The codebooks are then revealed to both the encoder and decoders. 
\begin{figure}[h!]
\begin{flushleft}
\scalebox{0.5}{\input{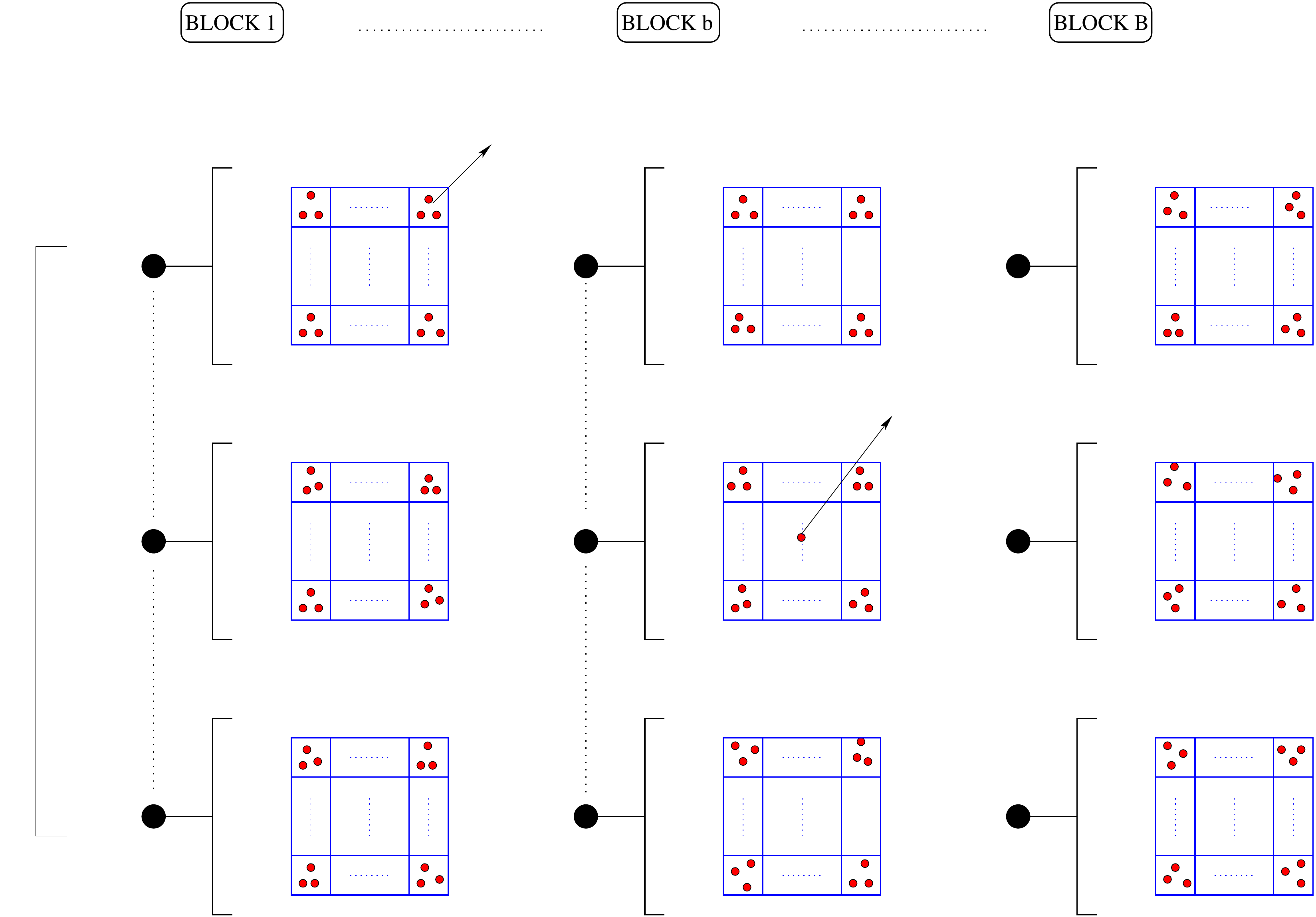_t}}
\caption{\textquotedblleft Forward Encoding\textquotedblright\ and
\textquotedblleft Block Markov Decoding\textquotedblright\ - achievability scheme for the strictly-causal perfect cribbing.}
\label{codebook}
\end{flushleft}
\end{figure}
 \item \textit{Encoding :} $X^{nB}$ is known to the encoder. From now on additional subscripts will stand for block index, eg. $m_{h,2}$ means the row index in block 2, or $m_{v,2}$ means the column index in block 2.  Also additional scripts in parenthesis would denote the sequence in a block, eg. $X^n(b)$ will stand for the source sequence in block $b$, $\hat{X}_1^n(b)$ stands for reconstruction of Decoder 1 in block $b$. Encoding is
as follows :
\begin{enumerate}
 \item For the first block, $b=1$, assume $m_1=1$. Encoder then finds index
 $m_2$, such that $(X^n(2), U^n(2,m_2))\in\mathcal{T}^n_\epsilon$. The
encoder then looks in the codebook $\mathcal{C}_U(1)$ to find $U^n(1,m_1)$. Then
it looks in the row or horizontal bin indexed by $m_{h,1}=m_2$ corresponding to the found
$U^n(1,m_1)$, and finds the index tuple $(m_{v,1},l_1)$ such that
$(\hat{X}^n_1(1,m_1,m_{h,1},m_{v,1},l_1),X^n(1),
U^n(1,m_{1}))\in\mathcal{T}^n_\epsilon$. As found $\hat{X}_1^n\in\mathcal{B}(m_{v,1})$, the index tuple
$(m_{h,1},l_1)$ is described as $R_1$ and $m_{v,1}$ is described as $R_0$. 
 \item In the block $b$ ($\in[2:B-1]$) encoder knows $m_b$ from encoding
procedure in previous block such that $(X^n(b),U^n(b,m_b))\in
\mathcal{T}^n_\epsilon$. It then finds index $m_{b+1}$ such that
$(X^n(b+1),U^n(b+1,m_{b+1}))\in \mathcal{T}^n_\epsilon$. Now the encoder
identifies the codeword, $U^n(b,m_b)$, from the codebook $\mathcal{C}_U(b)$,
looks in the corresponding row or horizontal bin indexed as $m_{h,b}=m_{b+1}$ and finds the index tuple $(m_{v,b},l_b)$ such that
$(\hat{X}^n_1(b,m_b,m_{h,b},m_{v,b},l_b),X^n(b),
U^n(b,m_{b}))\in\mathcal{T}^n_\epsilon$. As found $\hat{X}_1^n\in\mathcal{B}(m_{v,b})$, the index tuple
$(m_{h,b},l_b)$ is described as $R_1$ and $m_{v,b}$ is described as $R_0$.  
\item In the last block $b=B$, the encoder knows $m_B$ from encoding procedure
in the previous block. Fix $m_{B+1}=1$. Encoder identifies $U^n(B,m_B)$ from
the codebook $\mathcal{C}_U(B)$, looks in the corresponding row or horizontal bin $m_{h,B}=m_{B+1}$ and finds the index tuple $(m_{v,B},l_B)$ such that
$(\hat{X}^n_1(B,m_B,m_{h,B},m_{v,B},l_B),X^n(B),
U^n(B,m_{B}))\in\mathcal{T}^n_\epsilon$. As found $\hat{X}_1^n\in\mathcal{B}(m_{v,B})$, the index tuple
$(m_{h,B},l_B)$ is described as $R_1$ and $m_{v,B}$ is described as $R_0$.  
\end{enumerate}
Hence the encoding has a \textquotedblleft Forward Encoding\textquotedblright\
interpretation, as we encoded the source sequence of the future
block as the row or horizontal bin index of the \textquotedblleft doubly-binned\textquotedblright\ codebook in the present block. As at each block $b$, $R_1$ encodes for $(m_{h,b},l_b)$, thus 
\bea
\label{eq3}
I(X;U)+I(X;\hat{X}_1|U)-R_0\le R_1.
\eea
\item \textit{Decoding :} Decoding for both the decoders is as follows :\\
\textit{Decoder 1}
\begin{enumerate}
 \item For the first block $b=1$, Decoder 1 knows $m_1=1$, and since it knows
the index $(m_{h,1},m_{v,1},l_1)$ it identifies $\hat{X}^n_1(1)=\hat{X}^n_1(1,m_1,m_{h,1},m_{v,1},l_1)$
as its source estimate for the first block. 
\item For the block $b$ ($\in[2:B]$), Decoder 1 knows $m_{b}$ from the index
sent by the encoder in the $(b-1)$ block (as $m_{h,b-1}=m_b$) and since it knows the
index $(m_{h,b},m_{v,b},l_b)$ for the current block, it identifies $\hat{X}^n_1(b)=\hat{X}^n_1(b,m_b,m_{h,b},m_{v,b},l_b)$, as its source estimate. 
\end{enumerate}
\textit{Decoder 2}
\begin{enumerate}
 \item For the first block $b=1$, Decoder 2 assumes $\hat{m}_1=1$ and generates
its estimate $\hat{X}^n_2(1)=f({U}^n(1,\hat{m}_1))$. 
\item For the block $b$ ($\in[2:B]$), Decoder 2 has already estimated
$\hat{m}_{b-1}$ in $b-1$ block. It also knows
$\hat{X}^n_1(b-1)$(because of strictly causal cribbing) and $m_{v,b-1}$ through $R_0$. It then looks into the vertical bin with index $m_{v,b-1}$ in the codebook  corresponding to the codeword $U^n({b-1},\hat{m}_{b-1})$, and finds a unique row or horizontal bin index $\hat{m}_{h,b-1}$ such that $\hat{X}^n_1(b-1)=\hat{X}^n_1(b-1,\hat{m}_{b-1},\hat{m}_{h,b-1},{m}_{v,b-1},\tilde{l}_{
b-1})$ for some $\tilde{l}_{b-1}\in[1:2^{n(I(X;\hat{X}_1|U)-R_0)}]$. But note that estimating $\hat{m}_{h,b-1}$ is equivalent to estimating $\hat{m}_{b}$, because of our forward encoding procedure, thus Decoder 2 constructs its source estimate for the block $b$ as $\hat{X}^n_2(b)=f({U}^n(b,\hat{m}_{b}))$.
\end{enumerate}
Decoding has a \textquotedblleft Block Markov Decoding\textquotedblright\
interpretation as we see that the decoding for both decoders relies on what was
successfully
decoded in the previous block. 
\item \textit{Rate Region and Bounding Distortion :} We assume without loss of
generality, $d_i(\cdot,\cdot)\le D_{max}<\infty$, $i=1,2$. In the encoding and
decoding scheme, $m_1$ was chosen to be a fixed value, deterministically chosen
prior to the compression, agreed upon by both encoders and decoders. Hence, for
both the decoders distortion in general will not be met for the first block,
however we are generous enough to allow for maximum distortion for the 
first block, which will eventually have insignificant impact on total
distortion as the number of blocks becomes large. Consider the following
encoding and decoding events which will help to
bound the distortion at Decoder 1 and Decoder 2. Suppose in block $b-1$ and $b$,
index tuples $(m_{h,b-1},l_{b-1})$ and $(m_{h,b},l_b)$ are described by the encoder
to the Decoder 1, and that $m_{h,b-1}=m_{b}$ and $m_{h,b}=m_{b+1}$, $\forall\  b=[2:B]$.
\begin{enumerate}
 \item \textit{Encoding Events} : 
\begin{itemize} \item
\bea
\mathcal{E}_{e,1}(b)&=&\mbox{No $U^n$ sequence is jointly typical with source in block $b$}\\
&=&\bigg{\{}(X^n(b),U^n(b,\tilde{m}_b))\notin\mathcal{T}
^n_\epsilon\mbox{ }\forall\mbox{ }\tilde{m}_b\in[1:2^{nI(X;U)}]\bigg{\}},
\eea 
for $b=[2:B]$. By Covering Lemma \ref{covering}, the probability of this event goes to zero as there are $2^{nI(X;U)}$ $U^n$ codewords Similarly, $P(\mathcal{E}_{e,1}(b+1))\rightarrow 0$. Suppose, $(X^n(b),U^n(b,{m}_b))\in\mathcal{T}
^n_\epsilon$ and $(X^n(b+1),U^n(b+1,{m}_{b+1}))\in\mathcal{T}
^n_\epsilon$, thus row index in block $b$ is $m_{h,b}=m_{b+1}$.
\item 
 \bea
 \mathcal{E}_{e,2}(b)&=&\mbox{No $\hat{X}_1^n$ sequence is jointly typical with the typical pair $(X,U)$ in block $b$}\nonumber\\
 &=&\mathcal{E}^c_{e,1}(b)\cap\mathcal{E}^c_{e,1}(b+1)\nonumber\\
&&\cap\bigg{\{}(\hat{X}^n_1(b,{m}_b,{m}_{h,b},\tilde{m}_{v,b},\tilde{l}_b),X^n(b),U^n(b,{m}
_b))\notin \mathcal{T}^n_\epsilon\mbox{ }\forall\mbox{
tuples }(\tilde{m}_{v,b},\tilde{t}_b)\bigg{\}},\nonumber\\
\eea
 for $b=[1:B]$, where,
 \bea
 \mathcal{E}^c_{e,1}(b)&=&\bigg{\{}(X^n(b),U^n(b,{m}_b))\in
\mathcal{T}^n_\epsilon\bigg{\}}\\ 
\mathcal{E}^c_{e,1}(b+1)&=&\bigg{\{}(X^n(b+1),U^n(b+1,{m}_{b+1}))\in
\mathcal{T}^n_\epsilon\bigg{\}}.
\eea
By Covering Lemma \ref{covering}, this event has vanishing probability as for every row index there are, $2^{nI(X;\hat{X}_1|U)}$ $\hat{X}_1^n$ codewords.
\end{itemize}

\item \textit{Decoding Events} :
Decoder 1 can perfectly construct the $\hat{X}_1^n(b)$ sequences in the block $b$. Decoder 2 in block $b$ knows $\hat{X}^n_1(b-1)$. For the Decoder 2,  for $b\in[2:B]$, assume it has decoded correctly the message, $\hat{m}_{b-1}={m}_{b-1}$ in the $b-1$ block and the encoder sends the row index $m_{h,b-1}=m_{b}$ in the block $b-1$ to Decoder 1. Also Decoder 2 knows $m_{v,b-1}$ through $R_0$. Decoder 2 needs to find an estimate $\hat{m}_{h,b-1}$, or equivalently an estimate of $\hat{m}_b$ (as $m_{h,b-1}=m_{b}$). Consider  the following events :
\begin{itemize}
\item 
\bea
\mathcal{E}_{d,1}&=&\mbox{$\hat{X}_1^n(b-1)$ does not lie in row with index, ${m}_{h,b-1}=m_b $ and column index $m_{v,b-1}$}\nonumber\\
&=&\bigg{\{}\hat{X}^n_1(b-1)=\hat{X}^n_1(b-1,{m}_{b-1} , 
{ m }
_{h,b-1},m_{v,b-1},\tilde{l}_{b-1})\bigg{\}}, 
\eea 
$\forall\ \tilde{l}\in[1:2^{n(I(X;\hat{X}_1|\hat{X}_2)-R_0)}]$. But the probability of this event goes to zero, because due to our encoding procedure, $\hat{X}_1^n(b-1)=\hat{X}^n_1(b-1,{m}_{b-1} , 
{ m }
_{h,b-1},m_{v,b-1},{l}_{b-1})
$.
\item
\bea
\mathcal{E}_{d,2}&=&\mbox{$\hat{X}_1^n(b-1)$ lies  in a row with index, $\hat{m}_{h,b-1}\neq m_b $ and column index $m_{v,b-1}$}\nonumber \\
&=&\bigg{\{}\hat{X}^n_1(b-1)=\hat{X}^n_1(b-1,{m}_{b-1} , 
\hat{ m }
_{h,b-1},m_{v,b-1},\tilde{l}_{b-1}), \ \hat{m}_{h,b-1}\neq m_b\bigg{\}}, 
\eea
$\mbox{for some } 
\tilde{l}\in[1:2^{n(I(X;\hat{X}_1|\hat{X}_2)-R_0)}]$. This event is equivalent to finding $\hat{X}_1^n(b-1)$ corresponding to $U^n(b-1,m_{b-1})$ lying in two different rows or horizontal bins, but with the same column or vertical bin index ($m_{v,b-1}$). The probability of a single $\hat{X}_1^n$ codeword (corresponding to a $U^n$ codeword) occurring repeatedly in two horizontal bins indexed with different row index is $2^{-nH(\hat{X}_1|U)}$, while knowing the column index,  total number of $\hat{X}_1^n$ codewords with a particular column index are, $2^{n(I(X;\hat{X}_1,\hat{X}_2)-R_0)}$, so the probability of event $\mathcal{E}_{d,2}$ vanishes so long as, 
\bea
\label{eqd.2}
I(\hat{X}_1;\hat{X}_1,\hat{X}_2)-R_0< H(\hat{X}_1|U).
\eea 
\end{itemize}
\end{enumerate}
Thus consider the event $\mathcal{E}(b)=\mathcal{E}_{e,1}(b)\cup\mathcal{E}_{e,2}(b)\cup\mathcal{E}_{d,1}(b)\cup\mathcal{E}_{d,2}(b)$. We have, 
\bea
P(\mathcal{E}(b))\le P(\mathcal{E}_{e,1}(b))+P(\mathcal{E}_{e,2}(b))+P(\mathcal{E}_{d,1}(b))+P(\mathcal{E}_{d,2}(b)),
\eea
 which vanishes to zero with large $n$, for each block $b=[2:B]$, if [from Eq. (\ref{eq3}), Eq. (\ref{eqd.2})], if, 
\bea
R_0+R_1&\ge& I(X;\hat{X}_1,U)\\
R_0&\ge&\{ I(X;\hat{X}_1,U)- H(\hat{X}_1|U)\}^+.
\eea
We will now bound the distortion.  The distortion for both
the decoders in the first block is bounded above by $D_{max}$. Consider the
block $b=[2:B]$ for Decoder 1, 
\bea
\E\left[d_1(X^n(b),\hat{X}_1^n(b))\right]&=&P(\mathcal{E}(b))E\left[d_1(X^n(b),\hat{X}_1^n(b))|\mathcal{E}(b)\right]\nonumber\\&&+\ P(\mathcal{E}^c(b))\E\left[d_1(X^n(b),\hat{X}_1^n(b))|\mathcal{E}^c(b)\right]\\
&\stackrel{(a)}{\le}&P(\mathcal{E}(b))D_{max}+P(\mathcal{E}^c(b))(1+\epsilon)\E[d_1(X,\hat{X}_1)]\\
&\le&P(\mathcal{E}(b))D_{max}+P(\mathcal{E}^c(b))D_1,
\eea
where (a) follows from Typical Average Lemma \ref{typicalaverage}, as given $\mathcal{E}^c(b)$, 
 $(X^n(b),\hat{X}^n_1(b))\in \mathcal{T}^n_\epsilon$. Thus as
$n\rightarrow\infty$, $P(\mathcal{E}(b))\rightarrow 0$, hence the distortion is
bounded by $D_1$ in block $b$. Similarly for
Decoder 2, 
\bea
\E\left[d_2(X^n(b),\hat{X}_2^n(b))\right]&=&P(\mathcal{E}(b))E\left[d_2(X^n(b),\hat{X}_2^n(b))|\mathcal{E}(b)\right]\nonumber\\
&&+ P(\mathcal{E}^c(b))\E\left[d_2(X^n(b),\hat{X}_2^n(b))|\mathcal{E}^c(b)\right]\\
&\stackrel{(b)}{\le}&P(\mathcal{E}(b))D_{max}+P(\mathcal{E}^c(b))(1+\epsilon)\E[d_2(X,\hat{X}_2)]\\
&\le&P(\mathcal{E}(b))D_{max}+P(\mathcal{E}^c(b))D_2,
\eea
where (b) follows from Typical Average Lemma \ref{typicalaverage}, as given $\mathcal{E}^c(b)$,
$(X^n(b),\hat{U}^n_2(b,m_b))\in \mathcal{T}^n_\epsilon$, and since
$\hat{X}_2^n(b)=f(U^n(b,m_b))$, $(X^n(b),\hat{X}^n_2(b))\in
\mathcal{T}^n_\epsilon$. 
Thus the distortion is bounded by $D_2$ in block
$b$. The total normalized distortion in $B$ blocks for Decoder 1 and Decoder 2
 is bounded above by $\frac{1}{B}D_{max}+\frac{B-1}{B}D_1$ and
$\frac{1}{B}D_{max}+\frac{B-1}{B}D_2$ respectively. Proof is completed by
letting, $B\rightarrow\infty$.
\end{enumerate}

\par
\textit{Converse} : Converse in this subsection is skipped and follows from the converse of deterministic function cribbing of the next subsection, by the substitution $\hat{Z}_1=\hat{X}_1$.
\end{proof}

\subsubsection{Deterministic Function Cribbing}
\label{subsecsec::sr_cribbing_sc_det}
\begin{theorem}
\label{theorem4}
The rate region $\mathcal{R}(D_1,D_2)$ for the setting in Fig. \ref{sr_cribbing_sc_det} with deterministic function cribbing (strictly causal) is given as the closure of the set of all the rate tuples $(R_0,R_1)$ such that,
\bea
R_0+R_1&\ge&I(X;\hat{X}_1,\hat{X}_2)\\
R_0&\ge &\{I(X;\hat{Z}_1,\hat{X}_2)-H(\hat{Z}_1|\hat{X}_2)\}^+,
\eea
for some joint probability distribution $P_{X}P_{\hat{Z}_1,\hat{X}_2|X}P_{\hat{X}_1|\hat{Z}_1,\hat{X}_2,X}$ such that 
$\E[d_i(X,\hat{X}_i)]\le D_i$, for $i=1,2$.
\end{theorem}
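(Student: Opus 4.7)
The plan is to split the argument into achievability and converse, each by adapting the closest existing proof: achievability by modifying the Forward Encoding and Block Markov Decoding scheme of Theorem~\ref{theorem3} so that it is the $\hat{Z}_1^n$ codewords (not $\hat{X}_1^n$) that are doubly-binned, with a superposition $\hat{X}_1$-layer added on top; and the converse by extending the chain of inequalities from Theorem~\ref{theorem2} with one new step that exploits the strictly-causal structure of the crib.

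For achievability I would fix a joint law $P_X P_{\hat{X}_1,\hat{X}_2|X}\mathbf{1}_{\{\hat{Z}_1=g(\hat{X}_1)\}}$ meeting the distortion constraints (equivalent to the factorization in the theorem, by Note~\ref{note1}) and operate over $B$ blocks. In each block I would generate $2^{nI(X;\hat{X}_2)}$ codewords $\hat{X}_2^n(b,m)$; for each of them, $2^{nI(X;\hat{Z}_1|\hat{X}_2)}$ codewords $\hat{Z}_1^n$ arranged in the doubly-binned layout of Fig.~\ref{codebook}, with $2^{nI(X;\hat{X}_2)}$ rows (the row index carries the \emph{next} block's $\hat{X}_2^n$-index, exactly as in Theorem~\ref{theorem3}) and $2^{nR_0}$ columns (sent via $R_0$); and for each $\hat{Z}_1^n$ a superposition layer of $2^{nI(X;\hat{X}_1|\hat{Z}_1,\hat{X}_2)}$ codewords $\hat{X}_1^n$. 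The encoder then proceeds exactly as in Theorem~\ref{theorem3} --- finding jointly typical codewords in the current and next block and forward-encoding via the row index --- yielding private rate $R_1=I(X;\hat{X}_1,\hat{X}_2)-R_0$. Decoder~2 combines the strictly-causally cribbed $\hat{Z}_1^n(b-1)$ with $m_{v,b-1}$ to locate the unique row index in block $b-1$, which by our forward convention equals $m_b$, and outputs $\hat{X}_2^n(b,m_b)$. The collision event --- two distinct rows of the same column and same parent $\hat{X}_2^n$ containing the same $\hat{Z}_1^n$ --- occurs with per-pair probability $2^{-nH(\hat{Z}_1|\hat{X}_2)}$, giving the decodability condition $I(X;\hat{Z}_1,\hat{X}_2)-R_0\le H(\hat{Z}_1|\hat{X}_2)$ by a union bound. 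The distortions follow from the Typical Average Lemma~\ref{typicalaverage} on letting $B\to\infty$, as in Theorem~\ref{theorem3}.

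For the converse, the sum-rate bound $n(R_0+R_1)\ge \sum_i I(X_i;\hat{X}_{1,i},\hat{X}_{2,i})$ goes through verbatim from the chain in Theorem~\ref{theorem2} since that derivation never invoked causality. The $R_0$ bound is obtained from
\begin{eqnarray*}
nR_0+\sum_{i=1}^n H(\hat{Z}_{1,i}|\hat{X}_{2,i})
&\ge& H(T_0)+\sum_{i=1}^n H(\hat{Z}_{1,i}|T_0,\hat{Z}_1^{i-1})\\
&=& H(T_0,\hat{Z}_1^n)\\
&\ge& I(X^n;T_0,\hat{Z}_1^n,\hat{X}_2^n)\\
&\ge& \sum_{i=1}^n I(X_i;\hat{Z}_{1,i},\hat{X}_{2,i}),
\end{eqnarray*}
where the first inequality uses $H(T_0)\le nR_0$ together with $H(\hat{Z}_{1,i}|T_0,\hat{Z}_1^{i-1})\le H(\hat{Z}_{1,i}|\hat{X}_{2,i})$, valid because under strictly-causal cribbing $\hat{X}_{2,i}=g_{2,i}^{sc}(T_0,\hat{Z}_1^{i-1})$ so the conditioning on the left is strictly finer. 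Introducing a time-sharing variable $Q$ uniform on $[1:n]$ independent of the source and identifying the law of $(X_Q,\hat{Z}_{1,Q},\hat{X}_{1,Q},\hat{X}_{2,Q})$ with the single-letter target (which has the factorization stated in the theorem by Note~\ref{note1}) single-letterizes both bounds, and the distortions pass as in Theorem~\ref{theorem2}.

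The main obstacle is the $R_0$ converse: one must apply the chain rule $H(\hat{Z}_1^n|T_0)=\sum_i H(\hat{Z}_{1,i}|T_0,\hat{Z}_1^{i-1})$ \emph{before} dropping conditioning down to $H(\hat{Z}_{1,i}|\hat{X}_{2,i})$, and it is exactly this ordering --- legitimate only because $\hat{X}_{2,i}$ depends on the strictly-causal past of the crib --- that produces the conditional entropy $H(\hat{Z}_1|\hat{X}_2)$ here, in place of the unconditional $H(\hat{Z}_1)$ that appears in the non-causal Theorem~\ref{theorem2}. Everything else (covering in the $\hat{X}_1$-superposition layer, the forward-row decodability, and passage to the single-letter limit) is essentially bookkeeping that follows the templates already set up for Theorems~\ref{theorem2} and~\ref{theorem3}.
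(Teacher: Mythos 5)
Your proposal is correct and takes essentially the same approach as the paper: the same Forward Encoding / Block Markov Decoding scheme with $\hat{Z}_1^n$ doubly binned and an $\hat{X}_1^n$ superposition layer, and the same converse chain built around $H(T_0,\hat{Z}_1^n)$, applying the chain rule over $\hat{Z}_1^{i-1}$ before relaxing the conditioning to $\hat{X}_{2,i}$ (valid precisely because $\hat{X}_{2,i}$ is a function of $(T_0,\hat{Z}_1^{i-1})$). The only cosmetic difference is that you specialize directly to $\hat{X}_2$, while the paper carries an auxiliary $U$ (identified as $U_i=(T_0,\hat{Z}_1^{i-1})$, then replaced by $\hat{X}_{2,i}$) so that the same achievability argument can be reused unchanged for the causal case of Theorem~\ref{theorem6}.
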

\begin{figure}[htbp]
\begin{center}
\scalebox{0.7}{\input{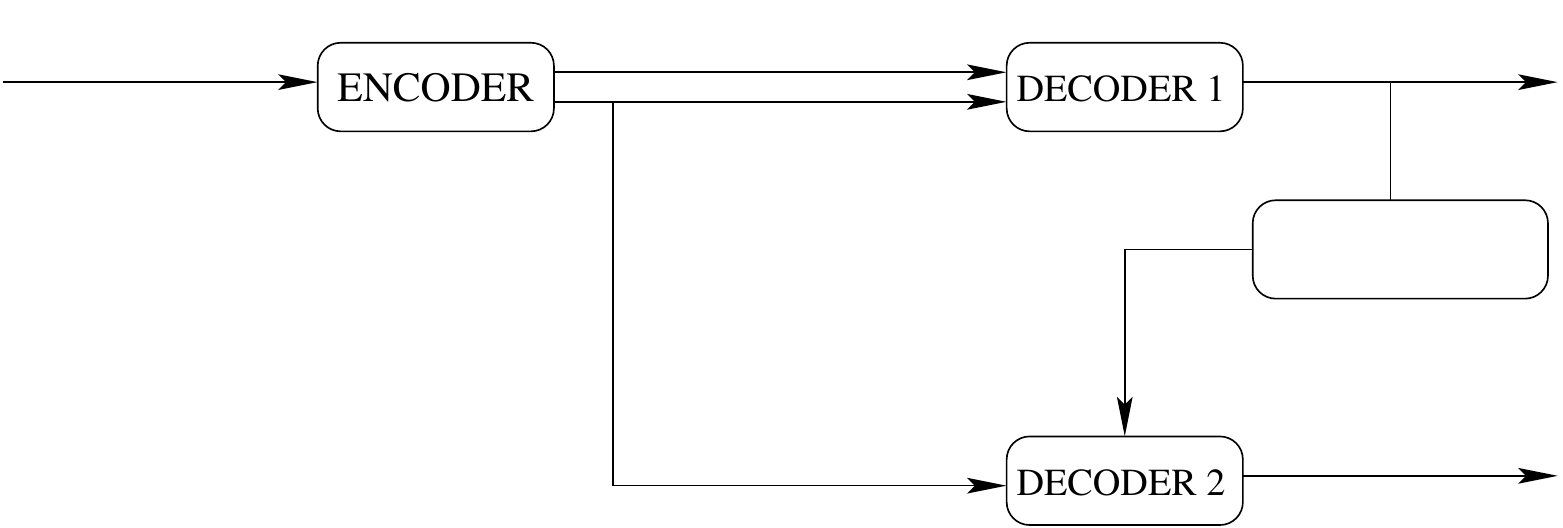_t}}
\caption{Successive refinement, with decoders \textit{cooperating} via (deterministic function) \textit{strictly-causal cribbing}.}
\label{sr_cribbing_sc_det}
\end{center}
\end{figure}
\begin{proof}\par
\textit{Achievability} :\\
The extension to deterministic function cribbing from perfect cribbing follows similarly to the case of noncausal cribbing in Section \ref{subsubsec::sr_cribbing_nc_det}. We omit the details of achievability and describe the key idea. Here also, achievability is first proved with auxiliary random variable $U$ and the following region will be achieved,
\bea
R_0+R_1&\ge&I(X;\hat{X}_1,U)\\
R_0&\ge &\{I(X;\hat{Z}_1,U)-H(\hat{Z}_1|U)\}^+,
\eea
for some joint probability distribution $P_{X}P_{\hat{Z}_1,U|X}P_{\hat{X}_1|\hat{Z}_1,U,X}\1_{\{\hat{X}_2=f(U)\}}$ such that 
$\E[d_i(X,\hat{X}_i)]\le D_i$, for $i=1,2$. The codebook structure remains almost the same, just that instead of (uniformly) binning $\hat{X}_1^n$ into vertical $2^{nR_0}$ bins, as done in the setting of the previous subsection with perfect cribbing, we bin $\hat{Z}_1^n$ codewords and $\hat{X}_1^n$ codewords are then generated on the top of each $\hat{Z}_1^n$ codewords. Encoding changes accordingly and Decoder 2 tries to infer the row index from the deterministic crib which it obtains from Decoder 1.

\textit{Converse} :  Assume we have a $(2^{nR_0},2^{nR_1},n)$ distortion code (as per Definition \ref{definition1}) such that $(R_0,R_1,D_1,D_2)$ tuple is feasible (as per Definition \ref{definition2}). Denote $T_1=f_{1,n}(X^n)$ and $T_0=f_{2,n}(X^n)$. Identify the auxiliary random variable $U_i=(T_0,\hat{Z}_1^{i-1})$ : 
\bea
H(\hat{Z}_1^n,T_0)&\ge&I(X^n;\hat{Z}_1^n,T_0)\\
 &=&\sum_{i=1}^{n}I(X_i;\hat{Z}_1^n,T_0|X^{i-1})\\
 &\stackrel{(a)}{=}&\sum_{i=1}^{n}I(X_i;\hat{Z}_1^n,T_0,X^{i-1})\\
 &\ge&\sum_{i=1}^{n}I(X_i;\hat{Z}_{1}^i,T_0)\\
  &=&\sum_{i=1}^{n}I(X_i;\hat{Z}_{1,i},U_i)\label{ineq1}\\
 &\ge&nI(X_Q;\hat{Z}_{1,Q},U_{Q})\\
  H(\hat{Z}_1^n,T_0)&=&\sum_{i=1}^{n}H(\hat{Z}_{i,1}|T_0,\hat{Z}_1^{i-1})+H(T_0)\\
 &\le&\sum_{i=1}^{n}H(\hat{Z}_{i,1}|U_i)+nR_0\label{ineq2}\\
 &\le&nH(\hat{Z}_{1,Q}|U_Q)+nR_0,
\eea
where (a) follows from the independence of $X_i$ with $X^{i-1}$ and $Q\in [1:n]$ is similarly defined an independent (of source) uniformly distributed time sharing random
variable.  As argued in previous subsection of perfect cribbing, we lower bound $n(R_0+R_1)$ with $nI(X;\hat{X}_{1,Q},U_Q)$. Note that as $\hat{X}_{2,Q}=f(U_Q)$, for some function $f$. Lastly we bound the
distortion for both decoders as we did in previous section and note that the joint distribution of $(X_Q,\hat{X}_{1,Q},\hat{X}_{2,Q})$ is
the same as $(X,\hat{X}_1,\hat{X}_2)$ to derive the rate region with auxiliary random variable. It is easy to see that in inequalities (\ref{ineq1}) and (\ref{ineq2}), we can replace $U_i$ with $\hat{X}_{2,i}$ and this helps to provide converse for the region without auxiliary random variable provided in the theorem. 
\end{proof}

\subsection{Causal Cribbing}
\label{subsec::sr_cribbing_c_det}
\subsubsection{Perfect Cribbing}
\begin{theorem}
\label{theorem5}
The rate region $\mathcal{R}(D_1,D_2)$ for the setting in Fig. \ref{sr_cribbing_sc} with perfect causal cribbing that is $\hat{X}_{2,i}$ is a function of $(T_0,\hat{X}_1^i$), is given as the closure of the set of all the rate tuples $(R_0,R_1)$ such that,
\bea
R_0+R_1&\ge&I(X;\hat{X}_1,U)\\
R_0&\ge&\{I(X;\hat{X}_1,U)-H(\hat{X}_1|U)\}^+,
\eea
for some joint probability distribution $P_{X,\hat{X}_1,U}\1_{\{\hat{X}_2=f(U,\hat{X}_1)\}}$ such that 
$\E[d_i(X,\hat{X}_i)]\le D_i$, for $i=1,2$ and $\card{\mathcal{U}}\le\card{\mathcal{X}}\card{\mathcal{X}_1}+4$.
\end{theorem}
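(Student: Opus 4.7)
The plan is to establish achievability by a minor modification of the Forward Encoding and Block Markov Decoding scheme used for the strictly-causal perfect cribbing case (Theorem 3), and to establish the converse and cardinality bound along the lines of Theorem 4's converse. Indeed, the achievable region in Theorem 3 was first derived with the auxiliary random variable $U$ (the region then being stated with $U = \hat{X}_2$), and the derivation there already exhibits almost all the structure needed here.

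For achievability, I would reuse verbatim the doubly-binned codebook of Fig.~\ref{codebook}: generate $2^{nI(X;U)}$ codewords $U^n(b,m)$ per block, and for each $U^n(b,m)$ a \textquotedblleft doubly-binned\textquotedblright{} set of $2^{nI(X;\hat{X}_1|U)}$ $\hat{X}_1^n$ codewords indexed by $(m_h,m_v,l)$. The encoder proceeds exactly as in Theorem 3: in block $b$ it finds $m_{b+1}$ jointly typical with $X^n(b+1)$ and uses this as the horizontal bin index $m_{h,b}$ in block $b$, then chooses $(m_{v,b},l_b)$ so that $(X^n(b),U^n(b,m_b),\hat{X}_1^n(b,m_b,m_{h,b},m_{v,b},l_b)) \in \mathcal{T}_\epsilon^n$. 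Decoder~1 reconstructs $\hat{X}_1^n(b)$ as before. The only new ingredient is at Decoder~2: having decoded $U^n(b,m_b)$ via block Markov decoding from the previous block's crib (exactly as in Theorem 3, requiring $I(X;\hat{X}_1,U)-R_0 \le H(\hat{X}_1|U)$), and now having \emph{causal} access to $\hat{X}_{1,i}(b)$ for each $i$, Decoder~2 forms $\hat{X}_{2,i}(b) = f(U_i(b,m_b),\hat{X}_{1,i}(b))$. Joint typicality of $(X^n(b),U^n(b,m_b),\hat{X}_1^n(b))$ together with the fact that $\hat{X}_2 = f(U,\hat{X}_1)$ immediately implies $(X^n(b),\hat{X}_2^n(b)) \in \mathcal{T}_\epsilon^n$, so the Typical Average Lemma bounds the Decoder 2 distortion by $D_2$ as before. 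The rate/distortion bookkeeping and the vanishing-initial-block argument carry over unchanged from the proof of Theorem 3.

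For the converse, given a $(2^{nR_0},2^{nR_1},n)$ code achieving $(D_1,D_2)$, the natural auxiliary identification is $U_i = (T_0,\hat{X}_1^{i-1})$. With this choice, $\hat{X}_{2,i}$ is a deterministic function of $(T_0,\hat{X}_1^i) = (U_i,\hat{X}_{1,i})$, giving exactly the structural constraint $\hat{X}_2 = f(U,\hat{X}_1)$ in the theorem. The two rate inequalities then follow by the same chain of steps as in the converse of Theorem 4, with $\hat{X}_1^{i-1}$ replacing $\hat{Z}_1^{i-1}$ throughout: bound $H(\hat{X}_1^n,T_0)$ from above using $nR_0$ and $\sum_i H(\hat{X}_{1,i}|U_i)$, and from below using $\sum_i I(X_i;\hat{X}_{1,i},U_i)$; separately bound $n(R_0+R_1) \ge \sum_i I(X_i;\hat{X}_{1,i},\hat{X}_{2,i})$ exactly as before, and note $I(X_i;\hat{X}_{1,i},\hat{X}_{2,i}) = I(X_i;\hat{X}_{1,i},U_i)$ since $\hat{X}_{2,i}$ is a function of $(U_i,\hat{X}_{1,i})$ and $(X_i,\hat{X}_{1,i}) - U_i - \hat{X}_{2,i}$ need not hold but the inclusion $\sigma(\hat{X}_{2,i}) \subseteq \sigma(U_i,\hat{X}_{1,i})$ gives $I(X_i;\hat{X}_{1,i},U_i) \ge I(X_i;\hat{X}_{1,i},\hat{X}_{2,i})$. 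A standard time-sharing random variable $Q$ uniform on $[1:n]$ and independent of the source absorbs the averaging.

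The cardinality bound $|\mathcal{U}| \le |\mathcal{X}||\hat{\mathcal{X}}_1|+4$ is the place that needs slightly more care but is standard. Apply the Carath\'eodory--Fenchel--Eggleston technique to the set of joint distributions on $(X,\hat{X}_1)$ indexed by $u$: preserve the joint marginal $P_{X,\hat{X}_1}$ ($|\mathcal{X}||\hat{\mathcal{X}}_1|-1$ constraints, which automatically preserves $\E[d_1(X,\hat{X}_1)]$), and preserve the four functionals $I(X;\hat{X}_1,U)$, $H(\hat{X}_1|U)$, $\E[d_2(X,\hat{X}_2)]$, and one additional functional (e.g.\ the joint $I(X;\hat{X}_1)$ contribution through $H(\hat{X}_1|U)$) to pin down both boundary terms simultaneously. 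This yields $|\mathcal{U}| \le |\mathcal{X}||\hat{\mathcal{X}}_1|+4$. The main subtlety here, and the step I would be most careful about, is to verify that once $P_{X,\hat{X}_1|U=u}$ is preserved the reconstruction function $f(u,\hat{X}_1)$ can be redefined consistently on the reduced alphabet so that $\E[d_2(X,f(U,\hat{X}_1))]$ is preserved; this works because $f$ is defined pointwise in $u$ and the preserved value of $\E[d_2(\cdot)]$ fixes the corresponding average.
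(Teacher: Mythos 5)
Your achievability argument is exactly the paper's: the encoder and Decoder~1 operate as in Theorem~3 (doubly-binned Forward Encoding / Block Markov Decoding), and the only change is that Decoder~2 now forms $\hat{X}_{2,i}(b)=f(U_i(b,m_b),\hat{X}_{1,i}(b))$ using the causal crib; the paper's one-paragraph proof of Theorem~5 says precisely this. The cardinality argument is likewise the standard Carath\'eodory--Fenchel--Eggleston count that the paper gestures at, so no issues there.

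There is, however, a genuine gap in your converse, specifically in the sum-rate bound. You write $n(R_0+R_1)\ge\sum_i I(X_i;\hat{X}_{1,i},\hat{X}_{2,i})$ and then try to pass to $\sum_i I(X_i;\hat{X}_{1,i},U_i)$, but you correctly observe along the way that $(X_i,\hat{X}_{1,i})-U_i-\hat{X}_{2,i}$ need not hold, and the data-processing inequality you have is
\begin{equation*}
I(X_i;\hat{X}_{1,i},U_i)\;\ge\; I(X_i;\hat{X}_{1,i},\hat{X}_{2,i}),
\end{equation*}
which points the wrong way: it lets you \emph{lower}-bound the quantity you already bounded below by $R_0+R_1$, not \emph{upper}-bound it. So you never actually establish $n(R_0+R_1)\ge nI(X_Q;\hat{X}_{1,Q},U_Q)$, which is what the theorem asserts. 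The fix is not to route through $\hat{X}_{2,i}$ at all: starting from $n(R_0+R_1)\ge I(X^n;T_0,T_1)=\sum_i I(X_i;T_0,T_1,X^{i-1})$, observe that $\hat{X}_1^n$ (hence $\hat{X}_{1,i}$ and $\hat{X}_1^{i-1}$) is a deterministic function of $(T_0,T_1)$, so $\bigl(\hat{X}_{1,i},U_i\bigr)=\bigl(\hat{X}_{1,i},T_0,\hat{X}_1^{i-1}\bigr)$ is a deterministic function of $(T_0,T_1)$, and therefore $I(X_i;T_0,T_1,X^{i-1})\ge I(X_i;\hat{X}_{1,i},U_i)$ directly. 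This is exactly what the paper does (it says ``we lower bound $n(R_0+R_1)$ with $nI(X;\hat{X}_{1,Q},U_Q)$'' in the Theorem~4 converse, reused here with $\hat{Z}_1=\hat{X}_1$). With that one-line replacement your converse becomes correct and matches the paper.
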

\begin{proof}
The achievability remains the same as in strictly causal cribbing, in terms of
encoding and decoding operations at Decoder 1. For Decoder 2, the only
change is in constructing $\hat{X}^n_2(b)$ for each block, which in this case is constructed as,
$\hat{X}^n_{2,i}(b)=f(U_i(b,m_b),\hat{X}_{1,i})$. The steps in the converse are
exactly the same as in the strictly causal cribbing case, except that this time we identify 
$\hat{X}_{2,Q}=f(U_Q,\hat{X}_{1,Q})$. The cardinality bounds on $\mathcal{U}$ follow standard arguments as in \cite{GamalKim} : $\mathcal{U}$ should have $\card{\mathcal{X}}\card{\mathcal{X}_1}-1$ elements to preserve the joint probability distribution $P_{X,\hat{X}_1}$, one element to preserve the markov chain, $(X,\hat{X}_1)-U-\hat{X}_2$, two elements to preserve the mutual information quantities, $I(X;\hat{X}_1,U)$ and $\{I(X;\hat{X}_1,U)-H(\hat{X}_1|U)\}^+$ and finally two more elements to preserve the distortion constraints.
\end{proof}

\subsubsection{Deterministic Function Cribbing}
\begin{theorem}
\label{theorem6}
The rate region $\mathcal{R}(D_1,D_2)$ for the setting in Fig. \ref{sr_cribbing_sc_det} with deterministic function cribbing but with causal cribbing, that is, $\hat{X}_{2,i}$ is a function of $(T_0,\hat{X}_1^i$), is given as the closure of the set of all the rate tuples $(R_0,R_1)$ such that,
\bea
R_0+R_1&\ge&I(X;\hat{X}_1,U)\\
R_0&\ge&\{I(X;\hat{Z}_1,U)-H(\hat{Z}_1|U)\}^+,
\eea
for some joint probability distribution $P_{X}P_{\hat{Z}_1,U|X}P_{\hat{X}_1|\hat{Z}_1,U,X}\1_{\{\hat{X}_2=f(U,\hat{Z}_1)\}}$ such that 
$\E[d_i(X,\hat{X}_i)]\le D_i$, for $i=1,2$ and $\card{\mathcal{U}}\le\card{\mathcal{X}}\card{\mathcal{X}_1}+4$.
\end{theorem}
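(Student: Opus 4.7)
The plan is to derive Theorem 6 by combining the achievability and converse of the strictly-causal deterministic function cribbing (Theorem 4) with the causal reconstruction modification used in Theorem 5, since the only difference between strictly-causal and causal cribbing is that Decoder 2 may additionally use the current crib symbol $\hat{Z}_{1,i}$ when forming $\hat{X}_{2,i}$.

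For achievability, I would reuse the Forward Encoding and Block Markov Decoding scheme from Theorem 4 with the auxiliary variable $U$: in each block generate $2^{nI(X;U)}$ codewords $U^n$, for each one a doubly-binned collection of $2^{nI(X;\hat{Z}_1|U)}$ codewords $\hat{Z}_1^n$ indexed by a row (the next-block $U^n$ index, via forward encoding) and a column of size $2^{nR_0}$ (the common index), and on top of each $\hat{Z}_1^n$ codeword $2^{nI(X;\hat{X}_1|\hat{Z}_1,U)}$ codewords $\hat{X}_1^n$. The encoder's joint-typicality search, Decoder 1's recovery, and Decoder 2's recovery of $m_b$ from the crib $\hat{Z}_1^n(b-1)$ and the column index $m_{v,b-1}$ are carried over unchanged from Theorem 4. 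The only change is that Decoder 2 sets $\hat{X}_{2,i}(b) = f(U_i(b,m_b), \hat{Z}_{1,i}(b))$, which is causally admissible because $U^n(b, m_b)$ is fully known at the start of block $b$ via Block Markov Decoding, while $\hat{Z}_{1,i}(b)$ is available at time $i$ through the causal crib. The distortion for Decoder 2 is then controlled by the joint typicality of $(X^n(b), U^n(b,m_b), \hat{Z}_1^n(b))$ together with Lemma \ref{typicalaverage}, exactly as in Theorem 4.

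For the converse, I would follow the converse of Theorem 4 line by line, identifying $U_i = (T_0, \hat{Z}_1^{i-1})$ and letting $Q$ be a time-sharing random variable uniform on $[1:n]$, independent of the source. The two chains of inequalities that yield $R_0 \geq \{I(X;\hat{Z}_1,U) - H(\hat{Z}_1|U)\}^+$ and $R_0 + R_1 \geq I(X;\hat{X}_1,U)$ go through verbatim. The only departure is in the identification of $\hat{X}_2$: in the causal setting $\hat{X}_{2,i}$ is a deterministic function of $(T_0, \hat{Z}_1^i) = (U_i, \hat{Z}_{1,i})$, so after time-sharing $\hat{X}_{2,Q} = f(U_Q, \hat{Z}_{1,Q})$, which matches the p.m.f.\ factorization stated in the theorem. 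The distortion bounds $\E[d_i(X_Q, \hat{X}_{i,Q})] \leq D_i$, $i=1,2$, follow exactly as in the earlier converses.

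The cardinality bound $\card{\mathcal{U}} \leq \card{\mathcal{X}}\card{\mathcal{X}_1} + 4$ is obtained by the standard convex-hull/perturbation argument from \cite{GamalKim}: $\card{\mathcal{X}}\card{\mathcal{X}_1} - 1$ elements to preserve the joint marginal $P_{X,\hat{X}_1}$, one more to preserve the Markov structure $(X,\hat{X}_1) - U - \hat{X}_2$ (here mediated by $\hat{Z}_1$), two to preserve the informational quantities $I(X;\hat{X}_1,U)$ and $I(X;\hat{Z}_1,U) - H(\hat{Z}_1|U)$, and two to preserve the two distortion expectations. The main obstacle is essentially bookkeeping: one must verify that the deterministic functional structure $\hat{X}_2 = f(U,\hat{Z}_1)$ survives the Carath\'eodory-style perturbation, and that the Block Markov scheme genuinely supplies the entire codeword $U^n(b,m_b)$ before the causal reconstruction in block $b$ begins, so that $\hat{X}_{2,i}(b)$ is indeed a valid function of $(T_0, \hat{Z}_1^i)$ rather than of any future crib symbols.
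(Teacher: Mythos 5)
Your proof is correct and follows essentially the same route as the paper: reuse the Forward Encoding / Block Markov Decoding scheme of the strictly-causal deterministic-cribbing case verbatim, change only Decoder 2's per-symbol reconstruction to $\hat{X}_{2,i}(b)=f(U_i(b,m_b),\hat{Z}_{1,i})$, and in the converse keep $U_i=(T_0,\hat{Z}_1^{i-1})$ while identifying $\hat{X}_{2,Q}=f(U_Q,\hat{Z}_{1,Q})$. The additional bookkeeping you flag (that $U^n(b,m_b)$ is pinned down at the start of block $b$ so the causal reconstruction is admissible, and that the cardinality bound carries over via the standard Carath\'eodory argument) is exactly the content the paper leaves implicit.
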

\begin{proof}
The achievability remains the same as in strictly causal deterministic function cribbing, in terms of
encoding operation and decoding operation at Decoder 1. For the Decoder 2, only
change is in constructing $\hat{X}^n_2(b)$ for each block, it is constructed as,
$\hat{X}^n_{2,i}(b)=f(U_i(b,m_b),\hat{Z}_{1,i})$. The steps in converse are
exactly the same as in strictly causal cribbing case except that we identify, 
$\hat{X}_{2,Q}=f(U_Q,\hat{Z}_{1,Q})$. 
\end{proof}

\section{Special Cases}
\label{sec::special-case}
In this section, we study some special cases of our setting and also compute certain numerical examples.  
\subsection{The Case $R_0=0$}
\begin{figure}[htbp]
\begin{center}
\scalebox{0.7}{\input{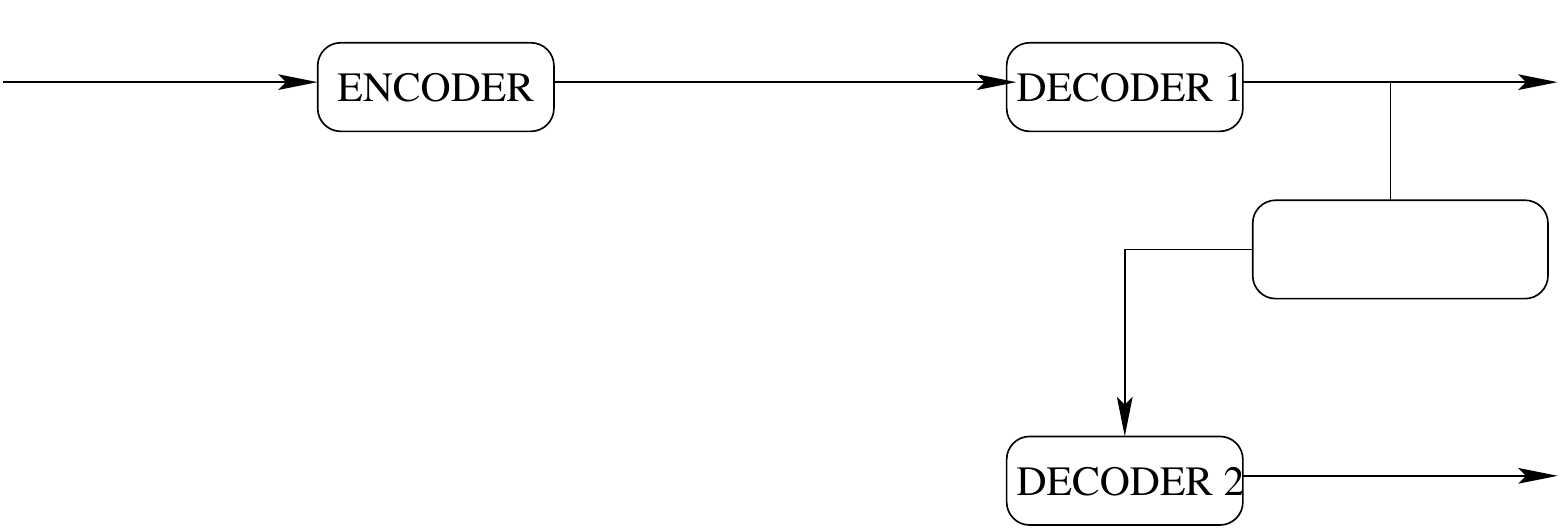_t}}
\caption{Special case of successive refinement with cribbing decoders, when the common rate is zero. Here again $d=n$, $d=i-1$ and $d=i$ respectively stand for non-causal, strictly-causal and causal cribbing.}
\label{source_coding_cribbing}
\end{center}
\end{figure}
One special yet important case of the setting studied in previous sections, is that when $R_0=0$ as shown in Fig. \ref{source_coding_cribbing}. Here the encoder describes the source to only Decoder 1, while Decoder 2 attempts to find the reconstruction of the source within some distortion via \textit{cribbing} reconstruction symbols of the Decoder 1, non-causally, causally or strictly causally.  Table \ref{table_cribbing} provides the minimum achievable rate ($R=R(D_1,D_2)$) for various cases, derived when $R_0=0$, using Theorem 1 through 6. Distortion constraints are omitted for brevity.
\begin{table}[h!]
\begin{center}

\begin{tabular}{|l|c|c|c|c|c|}
\hline
&&&\\
$R(D_1,D_2)$ & Non-Causal ($d=n$) & Strictly-Causal ($d=i-1$) &  Causal ($d=i-1$)\\
&&&\\
\hline
&&&\\
\textit{Deterministic} & $\min I(X;\hat{X}_1,\hat{X}_2)$ & $\min I(X;\hat{X}_1,\hat{X}_2)$ & $\min I(X;\hat{X}_1,U)$\\ 
 \textit{Function} & s.t. $I(X;\hat{Z}_1,\hat{X}_2)\le H(\hat{Z}_1)$&s.t. $I(X;\hat{Z}_1,\hat{X}_2)\le H(\hat{Z}_1|\hat{X}_2)$ &s.t. $I(X;\hat{Z}_1,U)\le H(\hat{Z}_1|U)$ \\
 \textit{Cribbing}&&&\\
 & \textit{(p.m.f.)} : $P(X,\hat{X}_1,\hat{X}_2)\times$& \textit{(p.m.f.)} : $P(X,\hat{X}_1,\hat{X}_2)$& \textit{(p.m.f.)} : $P(X,\hat{X}_1,U)\times$\\
 &$ \1_{\{\hat{Z}_1=f(\hat{X}_1)\}}$&$\1_{\{\hat{Z}_1=f(\hat{X}_1)\}}$ &$\1_{\{\hat{Z}_1=f(\hat{X}_1),\hat{X}_2=f(\hat{Z}_1,U)\}}$ \\
 &&&\\
\hline
\end{tabular}
\end{center}

\caption{Results for the Successive Refinement with Cribbing Decoders, when common rate, $R_0=0$.}
\label{table_cribbing}
\end{table}
\subsection{Null $g$ function}
Our expressions reduce to the successive refinement rate region (cf. Equitz and Cover \cite{EquitzCover}), when $g$ is a trivial function. To see this consider rate region for non-causal cribbing with deterministic cribbing (cf. Theorem \ref{theorem2}), as given below,
\bea
R_0+R_1&\ge&I(X;\hat{X}_1,\hat{X}_2)\\
R_0&\ge& \{I(X;\hat{Z}_1,\hat{X}_2)-H(\hat{Z}_1)\}^+,
\eea
If $g$ is null, $\hat{Z}_1$ is constant and hence the region reduces to, 
\bea
R_0+R_1&\ge&I(X;\hat{X}_1,\hat{X}_2)\\
R_0&\ge& I(X;\hat{X}_2),
\eea
for some joint probability distribution $P_{X,\hat{X}_1,\hat{X_2}}$ such that 
$\E[d_i(X,\hat{X}_i)]\le D_i$, for $i=1,2$, which is also derived in  Equitz and Cover \cite{EquitzCover}.
\subsection{Numerical Examples}
\label{sec::examples}
We provide an example illustrating the rate regions of non-causal and strictly causal cribbing.  Along with them, the region without cribbing is also compared.  The rate regions for these three cases from the theorems in the paper are shown in the Table \ref{table_cribbing_example}. Distortion constraints are omitted for brevity.
\begin{table}[h!]
\begin{center}

\begin{tabular}{|l|c|c|c|c|c|}
\hline
&&\\
Non-Causal Cribbing & Strictly-Causal Cribbing & No Cribbing\\
&&\\
\hline
 &&\\
\small $R_0+R_1\ge I(X;\hat{X}_1,\hat{X}_2)$ & \small $R_0+R_1\ge I(X;\hat{X}_1,\hat{X}_2)$ &\small $R_0+R_1\ge I(X;\hat{X}_1,\hat{X}_2)$\\
 \small $R_0\ge \{I(X;\hat{X}_1,\hat{X}_2)- H(\hat{X}_1)\}^+$& \small $R_0\ge \{I(X;\hat{X}_1,\hat{X}_2)-H(\hat{X}_1|\hat{X}_2)\}^+$ & \small $R_0\ge I(X;\hat{X}_2)$ \\
&&\\
\textit{(p.m.f.)} : \small $P(X,\hat{X}_1,\hat{X}_2)$&\small \textit{(p.m.f.)} : $P(X,\hat{X}_1,\hat{X}_2)$& \textit{(p.m.f.)} : \small$P(X,\hat{X}_1,\hat{X}_2)$\\
&&\\
\hline
\end{tabular}
\end{center}

\caption{Comparing rate regions for the example considered, for non-causal cribbing, strictly causal cribbing and no cribbing.}
\label{table_cribbing_example}
\end{table}
\begin{figure}[htbp]
\includegraphics[scale=0.5]{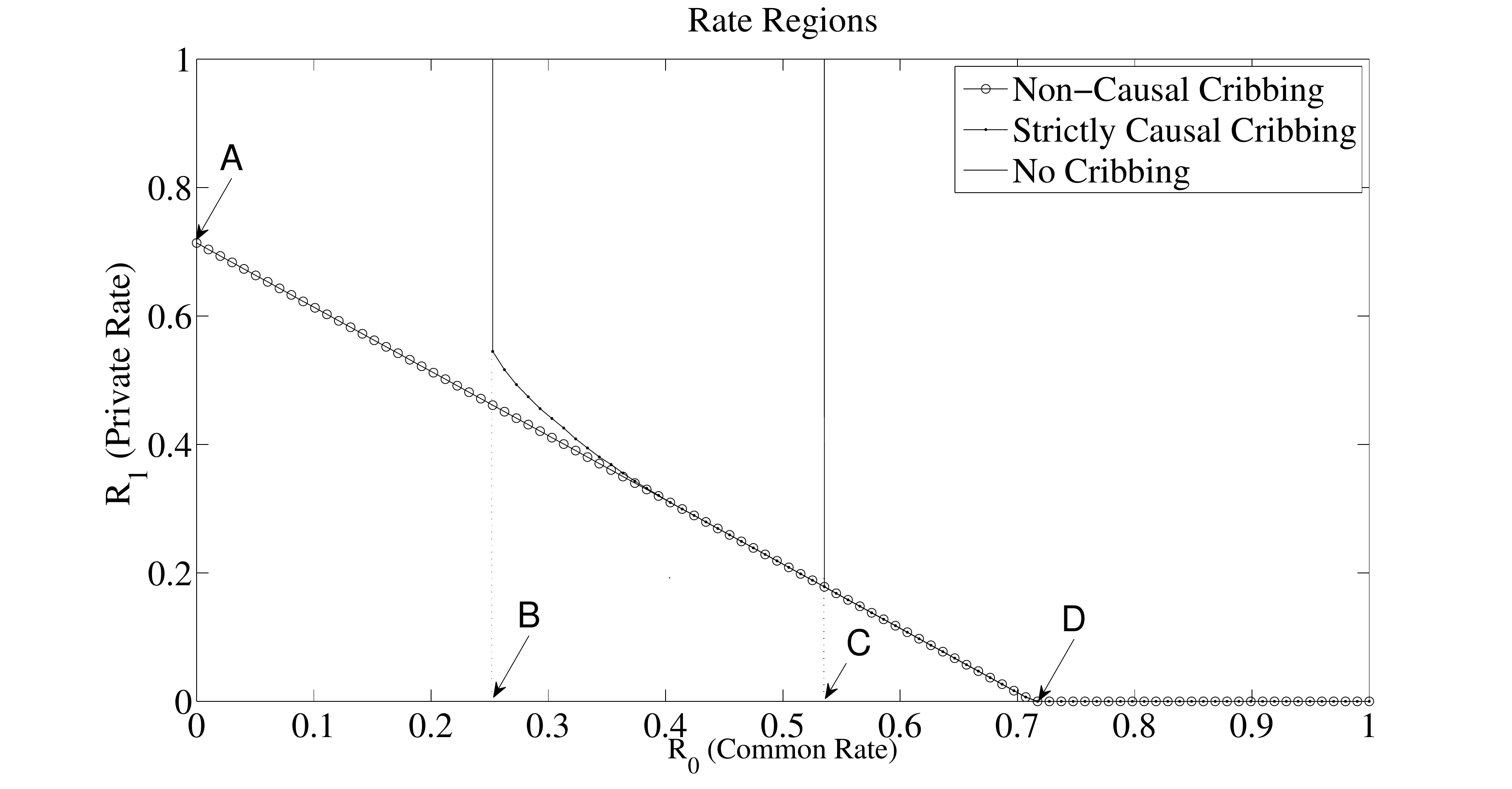}
\label{cribbingfig}
\caption{Rate regions for non-causal, strictly causal and no cribbing in successive refinement setting of Fig. \ref{sr_cribbing}. Source is $\mbox{Bern}(0.5)$ and $(D_1,D_2)=(0.05,0.1)$. The curve is tradeoff curve between $R_1$ and $R_0$ and the rate regions lie to the right of the respective tradeoff curves. }
\label{fig::examplecribbing}
\end{figure}

We plot for a specific example (cf. setting in Fig. \ref{sr_cribbing} with perfect cribbing) with a bernoulli source $X\sim \mbox{Bern}(0.5)$, binary reconstruction alphabets and hamming distortion. We consider a particular distortion tuple $(D_1,D_2)$. Due to symmetry of the source, for the optimal distribution, it is easy to argue that, $P_{\hat{X}_1,\hat{X}_2|X}(\hat{x}_1,\hat{x}_2|x)=P_{\hat{X}_1,\hat{X}_2|X}(\overline{\hat{x}_1},\overline{\hat{x}_2}|\overline{x})$, where $\overline{x}$ stands for complement of $x$.  Thus all the expressions can be written in terms of  variables $p_1=P_{\hat{X}_1,\hat{X}_2|X}(0,0|0)$, $p_2=P_{\hat{X}_1,\hat{X}_2|X}(0,1|0)$, $p_3=P_{\hat{X}_1,\hat{X}_2|X}(1,0|0)$ and  $p_4=P_{\hat{X}_1,\hat{X}_2|X}(1,1|0)$, $p_4=1-p_1-p_2-p_3$. However it is also easy to see that the distortion constraints are satisfied with equality, otherwise one can reduce the rate region slightly and still be under distortion constraint. The distortion constraints thus yield, 
\bea
\E[d(X,\hat{X}_1)]&=&p_4+p_3=D_1\\
\E[d(X,\hat{X}_2)]&=&p_2+p_4=D_2,
\eea
which implies, $p_2=1-D_1-p_1$, $p_3=1-D_2-p_1$, $p_4=p_1+D_1+D_2-1$. Thus the equivalent probability distribution space over which the closure of  rate regions is evaluated (such that distortion is satisfied) is equivalent to, $\mathcal{P}=\{p_1\in[1-D_1-D_2,\min\{1-D_1,1-D_2,2-D_1-D_2\}],p_2=1-D_1-p_1,p_3=1-D_2-p_1,p_4=p_1+D_1+D_2-1\}$. The various entropy and mutual information expressions appearing in the rate regions of non-causal, strictly causal and no cribbing (cf. Table \ref{table_cribbing_example}) can then be expressed as,
\bea
I(X;\hat{X}_1,\hat{X}_2)&=&H_2\Big{(}\Big{[}\frac{p_1+p_4}{2}\  \frac{p_2+p_3}{2}\ \frac{p_2+p_3}{2}\ \frac{p_1+p_4}{2} \Big{]}\Big{)}-H_2([p_1\ p_2\ p_3\ p_4])\\
 H(\hat{X}_1)&=&1\\
H(\hat{X}_1|\hat{X}_2)&=&H_2\Big{(}\Big{[}p_1+p_4\ p_2+p_3\Big{]}\Big{)}\\
I(X;\hat{X}_2)&=&1-H_2\Big{(}\Big{[}p_1+p_3\ p_2+p_4\Big{]}\Big{)},
\eea
where $H_2(\cdot)$ stands for the binary entropy of the probability vector. Note the only variable of optimization is effectively $p_1$. Fig. \ref{fig::examplecribbing} shows the rate regions for  $(D_1,D_2)=(0.05,0.1)$. Note that the region for no cribbing is smaller than that of strictly causal cribbing which is smaller than that of non-causal cribbing, as expected. We can also analytically compute the expression of corner points A,B,C,D in Fig. \ref{fig::examplecribbing}. Let $h_2(\alpha)=-\alpha\log\alpha-(1-\alpha)\log (1-\alpha)$ $\forall\ \alpha\in[0,1]$. Consider independent bernoulli random variables $Z_{D_1}\sim\mbox{Bern}(D_1)$ and $Z_{D_2}\sim\mbox{Bern}(D_2)$. $R_0$ for point D is evaluated by putting $R_1=0$ in rate region for non-causal cribbing and this equals $\min_{\mathcal{P}}I(X;\hat{X}_1,\hat{X}_2)$. We will now show that $\min_{\mathcal{P}}I(X;\hat{X}_1,\hat{X}_2)=1-h_2(D_1)$. Consider, $\min_{\mathcal{P}}I(X;\hat{X}_1,\hat{X}_2)\ge \min_{\mathcal{P}}I(X;\hat{X}_1)\ge \min_{\mathcal{P}}(1-H(\hat{X}_1|X))\ge 1-h_2(D_1)$, where the last two inequalities follow respectively as $\hat{X}_1$ is Bern(0.5) and that $D_1$ is the hamming distortion between $\hat{X}_1$ and $X$.  As $D_2> D_1$, this lower bound is indeed achieved if $\hat{X}_2=\hat{X}_1=X\oplus Z_{D_1}$. Similarly for point A, $R_1$ is obtained by substituting $R_0=0$ in the expression of rate region for non-causal cribbing and this again equals $ 1-h_2(D_1)$.  $R_0$ corresponding to points B and C is obtained by putting $R_1=\infty$ in the expressions of rate regions of strictly-causal and no cribbing. Let us first consider point B and observe that $R_0$ equals $\min_{\mathcal{P}}\{I(X;\hat{X}_1,\hat{X}_2)-H(\hat{X}_1|\hat{X}_2)\}^+$. We show that this equals $1-h_2(D_1)-h_2(D_2)$. To see this, consider, $\min_{\mathcal{P}}\{I(X;\hat{X}_1,\hat{X}_2)-H(\hat{X}_1|\hat{X}_2)\}^+= \min_{\mathcal{P}}\{H(\hat{X}_2)-H(\hat{X}_1,\hat{X}_2|X)\}^+\ge
\min_{\mathcal{P}}\{1-H(\hat{X}_1|X)-H(\hat{X}_1|X)\}^+$, where the last inequality follows as $\hat{X}_2$ is Bern(0.5). Since $\hat{X}_1$ and $\hat{X}_2$ are within hamming distortion $D_1$ and $D_2$ to $X$ respectively, we have $\min_{\mathcal{P}}\{I(X;\hat{X}_1,\hat{X}_2)-H(\hat{X}_1|\hat{X}_2)\}^+\ge 1-h_2(D_1)-h_2(D_2)$, where the equality holds for $\hat{X}_1=X\oplus Z_{D_1}$ and $\hat{X}_2=X+Z_{D_2}$. Similarly for point C, it can be shown $R_0$ equals $\min_{\mathcal{P}}I(X;\hat{X}_2)=1-h_2(D_2)$.

\section{Dual Channel Coding Setting }
\label{sec::duality} 
In this section we establish duality between cribbing decoders in the successive
refinement problem and cribbing encoders in the MAC problem with a common message. The
duality between rate-distortion and channel capacity was first mentioned by Shannon,
\cite{Shannon60} and was further developed for the case of side information by Pradhan et. al.,
\cite{Pradhan_duality03} and by Chiang and Cover, \cite{Cover_chiang_duality02}. Additional
duality has been shown by Yu, \cite{Yu_duality03} for a class of broadcast channels and
multiterminal source coding problems, and by Shirazi et. al., \cite{ShiraziPermuter10_Allerton} for the
case of increased partial side information. The duality between source and channel coding with action dependent side information was shown in Kittichokechai et al. in \cite{Kittichokechai}. Recently, Gupta and Verd\'{u},
\cite{GuptaVerdue_Duality10} have shown operational duality between the codes of source coding
and of channel coding with side information.

To make the notion of duality clearer and sharper, we consider coordination problems in source coding \cite{CuffPermuterCover09IT_coordination_capacity} and for channel coding
we consider a new kind of problems which we refer to as channel coding with restricted code distribution.
In the (weak) coordination problem \cite{CuffPermuterCover09IT_coordination_capacity} the goal is
to generate a joint typical distribution of the sources and the reconstruction (or actions)
rather than a distortion constraint between the source and its reconstruction. Similarly, we
define a channel coding problem where the code is restricted to a specific type. The
achievability proofs for coordination and channel capacity with restricted code distribution are 
the same as that of rate-distortion and channel capacity, respectively, since the codes in all
achievability proofs are generated randomly with  specific distribution. The converse is also
similar except in the last step where we need to justify the constraint of having a code with a
specific type. For this purpose we invoke  \cite[Property 2]{CuffPermuterCover09IT_coordination_capacity} that is stated as follows : 

\begin{lemma}[Equivalence of type and time-mixed variables  \cite{CuffPermuterCover09IT_coordination_capacity}]\label{l_XqYqZq_type}  For a collection of random sequences $X^n$, $Y^n$, and $Z^n$, the expected joint type ${\bf
E} P_{X^n,Y^n,Z^n}$ is equal to the joint distribution of the time-mixed variables
$(X_Q,Y_Q,Z_Q)$, where $Q$ is a r.v. uniformly distributed over the integers $\{1,2,3,...,n\}$  and independent of $(X^n, Y^n, Z^n)$.
\end{lemma}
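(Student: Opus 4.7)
The plan is to verify the equality pointwise: for each triple $(x,y,z)$ in the joint alphabet, I would show that the expected empirical type $\mathbf{E}\, P_{X^n,Y^n,Z^n}(x,y,z)$ and the probability $\Pr(X_Q=x,Y_Q=y,Z_Q=z)$ reduce to the same arithmetic average of per-coordinate marginal probabilities $\frac{1}{n}\sum_{i=1}^n \Pr(X_i=x,Y_i=y,Z_i=z)$. Since probability measures on a finite product alphabet are determined by their values on atoms, pointwise equality gives the lemma.

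First I would expand the joint type as an indicator average,
\begin{equation*}
P_{X^n,Y^n,Z^n}(x,y,z) \;=\; \frac{1}{n}\sum_{i=1}^n \mathbb{I}_{\{X_i=x,Y_i=y,Z_i=z\}},
\end{equation*}
and then pass the expectation inside the sum by linearity to obtain
\begin{equation*}
\mathbf{E}\, P_{X^n,Y^n,Z^n}(x,y,z) \;=\; \frac{1}{n}\sum_{i=1}^n \Pr(X_i=x,Y_i=y,Z_i=z).
\end{equation*}

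Next I would expand the right-hand side using the law of total probability on $Q$, followed by the independence of $Q$ from $(X^n,Y^n,Z^n)$ and the uniform distribution of $Q$ on $\{1,\dots,n\}$:
\begin{equation*}
\Pr(X_Q=x,Y_Q=y,Z_Q=z) \;=\; \sum_{i=1}^n \Pr(Q=i)\,\Pr(X_i=x,Y_i=y,Z_i=z\mid Q=i) \;=\; \frac{1}{n}\sum_{i=1}^n \Pr(X_i=x,Y_i=y,Z_i=z).
\end{equation*}
Matching the two expressions completes the argument. There is no real obstacle here beyond bookkeeping; the only subtlety is to emphasize that the joint distribution is with respect to the (arbitrary) joint law of $(X^n,Y^n,Z^n)$, i.e.\ no product or stationarity assumption is invoked, and that $Q$ is introduced as an \emph{independent} time-sharing variable, which is precisely what allows the conditional probability to collapse to the unconditional one in the step above.
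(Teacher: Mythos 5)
Your proof is correct. The paper itself does not prove this lemma but cites it as Property~2 of the Cuff--Permuter--Cover coordination paper; your argument is precisely the standard proof that reference gives and that the authors implicitly rely on: write the joint type as an average of indicators, pass $\mathbf{E}$ through by linearity to get $\frac{1}{n}\sum_{i=1}^n \Pr(X_i=x,Y_i=y,Z_i=z)$, and match it to $\Pr(X_Q=x,Y_Q=y,Z_Q=z)$ via total probability over $Q$ together with the uniformity and independence of $Q$.
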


The duality principle between source coding and channel coding with cribbing appears later in
Table \ref{t_duality}. According to those principles, the standard successive refinement source
coding problem which was introduced in \cite{EquitzCover} is dual to the MAC with one common
message and one private message \cite{Slepian_Wolf_MAC73}. Furthermore, the successive refinement
source coding with cribbing decoders is dual to the MAC with one common message and one
private message and cribbing encoders.
 To show the duality, let us investigate the
capacity of the MAC with common message and cribbing encoders and compare it to the
achievable region of the successive refinement problem with cribbing.

\subsection{MAC with cribbing encoders and a common message}
 We consider here the problem of MAC with partial cribbing encoders where there is one private
 message $m_1\in\{1,2,...,2^{nR_1}\}$ known to Encoder 1 and one common message $m_0\in\{1,2,...,2^{nR_0}\}$ known to both encoders that needs
 to be sent
 to the decoder, as shown in Fig. \ref{duality} . We assume that Encoder 2 cribs the signal from
 Encoder 1, namely, Encoder 2 observes a deterministic function of the output of Encoder 1. We
 consider here three cases, noncausal, strictly-causal and causal cribbing and we show in the next subsection their duality to the
 successive refinement problem.

\begin{definition}
\label{definition1}\label{def_mac_code} A ($2^{nR_0},2^{nR_1},n, P(x_1,x_2)$) partial cribbing
MAC, with one private and one common message and a code restricted to a distribution
$P(x_1,x_2)$, has, 
\begin{enumerate}
\item Encoder 1, $g_{1} : \{1, . . . , 2^{nR_0}\}\times\{1,...,2^{nR_1}\} \rightarrow {\mathcal{X}}_1^n$.
\item Encoder 2, $\forall\
i=1,...,n$. (depending on $d$ in Fig. \ref{sr_cribbing}, the decoder mapping changes as below),
\begin{eqnarray}
g^{nc}_{2,i} &:& \{1, . . . , 2^{nR_0}\}\times{\mathcal{Z}}_1^n \rightarrow {\mathcal{X}}_{2}\mbox{\textit{\ \ \ \ non-causal cribbing, $d=n$}} \\
g^{sc}_{2,i} &:& \{1, . . . , 2^{nR_0}\}\times{\mathcal{Z}}_1^{i-1} \rightarrow {\mathcal{X}}_{2}\mbox{\textit{\ \ \ \ strictly-causal cribbing, $d=i-1$}}, \\
g^{c}_{2,i} &:& \{1, . . . , 2^{nR_0}\}\times{\mathcal{Z}}_1^i \rightarrow
{\mathcal{X}}_{2}\mbox{\textit{\ \ \ \ causal cribbing, $d=i$}}.
\end{eqnarray}

\item Decoder,  $f: \mathcal{Y}^n \rightarrow \{1, . . . , 2^{nR_0}\}\times \{1, . . . , 2^{nR_1}\}$.
\end{enumerate}
\end{definition}
An error occurs if the one of the messages was incorrectly decoded or if the joint type of the
output and input to the channel deviates from the required one. Hence, the probability of error is
defined for any integer $n$ and $\delta>0$ such as
\begin{equation}
Pe^{(n),\delta} = \Pr\left\{ (\hat M_0(Y^n), \hat M_1(Y^n))\neq (M_0,M_1) \text{ AND }
\|P_{X_1^n,X_2^n,Y^n}(x,y,z)-P(x_1,x_2)P(y|x_1,x_2)\|_{TV}\geq \delta\right\},
\end{equation}
where $P_{X_1^n,X_2^n,Y^n}(x,y,z)$ is the joint type of the input and output of the channel and
$\|\cdot \|_{TV}$ is  the total variation between two probability mass functions, i.e.,  half the
$L_1$ distance between them, given by
\begin{eqnarray*}
\|p(x,y,z) - q(x,y,z)\|_{TV} & \triangleq & \frac{1}{2} \sum_{x,y,z} |p(x,y,z) - q(x,y,z)|.
\end{eqnarray*}
A pair rate $(R_0,R_1)$ is achievable if for any $\delta>0$ there exists a sequence of codes such
that $Pe^{(n),\delta}\to 0$  as $n\to \infty$. The capacity region is defined in the standard way
for MAC as in \cite[Chapter 15.3]{CovThom06}, as the union of all achievable rate pairs. Let us
define three regions $\mathcal R^{nc}, \mathcal R^{sc}$ and $\mathcal R^{c}$, which correspond
to noncausal, strictly-causal, and causal cases.
\begin{equation}\label{e_Rnc}
\mathcal R^{nc}(P)\triangleq 
\left\{
\begin{array}{l}
R_1\leq I(Y;X_1|X_2,Z_1)+H(Z_1) \\
R_0+R_1\leq I(Y;X_1,X_2),
\end{array}
\right.
\end{equation}
\begin{equation}\label{e_Rsc}
\mathcal R^{sc}(P)\triangleq
\left\{
\begin{array}{l}
R_1\leq I(Y;X_1|X_2,Z_1)+H(Z_1|X_2) \\
R_0+R_1\leq I(Y;X_1,X_2).
\end{array}
\right.
\end{equation}
\begin{equation}\label{e_Rc}
\mathcal R^{c}(P)\triangleq \bigcup_{P(u|x_1){\bf{1}}_{x_2=f(u,z_1)}} \left\{
\begin{array}{l}
R_1\leq I(Y;X_1|U,Z_1)+H(Z_1|U) \\
R_0+R_1\leq I(Y;X_1,U),
\end{array}
\right.
\end{equation}
where the union is over joint distributions that preserve the constraint $P(x_1,x_2)$. Since
$x_2=f(u,z_1)$, note that $I(Y;X_1,U)=I(Y;X_1,X_2)$.
The next theorem states that the regions defined above, $\mathcal R^{nc}(P), \mathcal R^{sc}(P)$ and
$\mathcal R^{c}(P)$  are the respective capacity regions. 

\begin{theorem}[MAC with common message and cribbing encoders] \label{t_mac} The capacity regions of
MAC with common message, restricted code distribution $P(x_1,x_2)$  and non-causal,
strictly-causal and causal cribbing that is depicted in Fig. \ref{duality} are $\mathcal
R^{nc}(P), \mathcal R^{sc}(P)$ and $\mathcal R^{c}(P),$ respectively.
\end{theorem}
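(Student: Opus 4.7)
The plan is to establish achievability and converse for each of the three cases (non-causal, strictly-causal, causal), leveraging the duality with the source-coding results of Section \ref{sec::sr_cribbing}. I fix a joint distribution $P(x_1,x_2)$ on $\mathcal{X}_1\times\mathcal{X}_2$ and extend it to include $Z_1$ via $P(z_1|x_1) = \1\{z_1 = g(x_1)\}$, so that the joint over $(X_1,X_2,Z_1,Y)$ is fully specified by $P(x_1,x_2)$ and the channel $P(y|x_1,x_2)$.

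For achievability I would dualize the source-coding schemes from Section \ref{sec::sr_cribbing}. In the non-causal case, the ``Double Binning'' construction of Theorem \ref{theorem2} translates to a random codebook with a doubly-indexed structure: $X_2^n$ codewords indexed by $m_0$ and an auxiliary row index to be read by Encoder 2 from the crib $Z_1^n$, with superposed $X_1^n$ codewords carrying the full message. A packing-lemma argument shows that Encoder 2 can reliably infer the row index from $Z_1^n$, while joint-typicality decoding at the receiver yields the two bounds defining $\mathcal{R}^{nc}(P)$. In the strictly-causal and causal cases, the ``Forward Encoding and Block Markov Decoding'' scheme translates to a block-Markov transmission over $B$ blocks: Encoder 1 embeds a ``forward'' index intended for Encoder 2 to recover from the crib observed in the \emph{next} (strictly-causal) or \emph{current} (causal) block; Encoder 2 shapes $X_2^n$ based on what was inferred from the crib of the previous block, and the decoder decodes block-by-block forward. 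In the causal case the auxiliary $U$ plays the role of the forward/time-sharing variable, with $X_{2,i}=f(U_i,Z_{1,i})$ as the per-letter encoding map. The prescribed joint type $P(x_1,x_2)$ emerges as the empirical type of the random codebook by the Typical Average Lemma (Lemma \ref{typicalaverage}).

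For the converse I would apply Fano's inequality to obtain $H(M_0,M_1|Y^n)\le n\epsilon_n$, and then carry out standard multi-letter manipulations in parallel with the converses of Theorems \ref{theorem4} and \ref{theorem6}. The sum-rate bound $n(R_0+R_1) \le \sum_i I(X_{1,i},X_{2,i};Y_i) + n\epsilon_n$ follows by independence of $M_0,M_1$ and data processing. For the private rate I would write
\begin{equation*}
nR_1 \le I(M_1; Z_1^n, Y^n | M_0) + n\epsilon_n \le H(Z_1^n|M_0) + I(X_1^n; Y^n | M_0, Z_1^n) + n\epsilon_n,
\end{equation*}
and then exploit the structural fact that $X_{2,i}$ is a deterministic function of $(M_0, Z_1^n)$, $(M_0, Z_1^{i-1})$, or $(M_0, Z_1^i)$ respectively, to bound the first summand by $nH(Z_1)$, $nH(Z_1|X_2)$, or $\sum_i H(Z_{1,i}|U_i)$ with $U_i=(M_0,Z_1^{i-1})$, and to bound the channel term letter-by-letter by $I(X_{1,i};Y_i|X_{2,i},Z_{1,i})$ (non-causal, strictly-causal) or $I(X_{1,i};Y_i|U_i,Z_{1,i})$ (causal). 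Introducing a uniform time-sharing variable $Q$ and invoking Lemma \ref{l_XqYqZq_type} together with continuity of mutual information under total variation translates the time-averaged bound into the single-letter region under the prescribed $P(x_1,x_2)$. A Caratheodory argument bounds $|\mathcal{U}|$ in the causal case.

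I expect the main obstacle to be handling the code-distribution constraint in the converse: multi-letter manipulations naturally produce bounds under the empirical joint type of $(X_1^n,X_2^n,Y^n)$, and Lemma \ref{l_XqYqZq_type} identifies this with the distribution of the time-mixed triple $(X_{1,Q},X_{2,Q},Y_Q)$, but one must then invoke continuity of mutual information in total variation to replace that distribution by the prescribed $P(x_1,x_2)P(y|x_1,x_2)$ with $o(1)$ loss as $n\to\infty$. A secondary subtlety is the causal identification $U_i=(M_0,Z_1^{i-1})$: it must simultaneously realize $X_{2,i}=f(U_i,Z_{1,i})$ and preserve the letter-by-letter inequality $I(X_{1,i};Y_i|M_0,Z_1^n,Y^{i-1}) \le I(X_{1,i};Y_i|U_i,Z_{1,i})$, which holds because $H(Y_i|X_{1,i},U_i,Z_{1,i}) = H(Y_i|X_{1,i},X_{2,i})$ under the memoryless channel structure.
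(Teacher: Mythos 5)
Your converse plan matches the paper's closely: Fano's inequality, the split $R_1 \le I(Z_1^n;Y^n|\cdot) + I(X_1^n;Y^n|\cdot,Z_1^n)$, the per-setting identification of what $X_{2,i}$ is a function of, time-sharing with $Q$, and Lemma~\ref{l_XqYqZq_type} to pin down the code distribution. Your non-causal achievability is also in the same spirit (rate splitting / superposition so that Encoder~2 reads off an index from the non-causal crib $Z_1^n$).

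However, there is a genuine gap in your strictly-causal and causal achievability: you assert that "the decoder decodes block-by-block \emph{forward}," whereas the paper's scheme uses \textbf{backward decoding}, and this is not a cosmetic choice. Under forward decoding, when the receiver processes block $b$, the index $m_{1,b}'$ carried by $z_1^n(b)$ is still unknown, so the receiver must jointly decode $(\hat m_{0,b},\hat m_{1,b}',\hat m_{1,b}'')$. The error event in which $(m_{1,b}',m_{1,b}'')$ are wrong but $x_2^n$ is correct then contributes a packing constraint $R_1'+R_1''<I(X_1;Y|X_2)$, and after Fourier--Motzkin this collapses the private-rate bound to $R_1<I(X_1;Y|X_2) = I(Z_1;Y|X_2)+I(X_1;Y|X_2,Z_1)$, which is \emph{strictly smaller} than the claimed $R_1 < H(Z_1|X_2)+I(X_1;Y|X_2,Z_1)$ whenever $H(Z_1|X_2)>I(Z_1;Y|X_2)$, i.e.\ precisely when cribbing should help. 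The paper avoids this by decoding backward: having already decoded block $b+1$, the receiver knows $m_{1,b}'$ when it turns to block $b$, so $z_1^n(b)$ is fully determined by $x_2^n(b)$ and no separate packing constraint on $R_1'$ arises at the receiver; the only constraint on $R_1'$ is Encoder~2's crib-decoding bound $R_1'<H(Z_1|X_2)$. This is the same reason backward decoding is used in the cribbing MAC literature you are dualizing (Willems--van~der~Meulen, Permuter--Asnani). A secondary slip in your write-up: in the strictly-causal case Encoder~2 recovers $m_{1,b}'$ from the crib observed \emph{during} block $b$ (it becomes available at the end of that block) and acts on it in block $b+1$; it does not recover it "from the crib observed in the next block." Everything else in the sketch, including the causal-case substitution $X_{2,i}=f(U_i,Z_{1,i})$ and the reduction of the code-distribution constraint via the time-mixed type, is on track.
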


The achievability and the converse proof of the theorem is presented in the Appendix. In the coding
scheme of the achievability proof, we use block Markov coding, backward decoding and rate
splitting similar to the techniques used in Willems and Van der Muelen \cite{Willems_Cribbing} and Permuter and Asnani 
\cite{HaimHimanshuCribbing}. The converse uses the standard Fano's inequalities and the
identification of an auxiliary random variable.

\subsection{Duality results between successive refinement and MAC with a common message }
\begin{table}[h!]
\begin{center}
\begin{tabular}{||l|l||}

\hline\hline
 SOURCE CODING & CHANNEL CODING \\ \hline\hline
Source encoder & Channel decoder\\ \hline
Encoder input  $X_i$ & Decoder input $Y_i$ \\\hline
Encoder output  & Decoder output \\
$M\in\{1,2,..,2^{nR}\}$ & $M\in\{1,2,..,2^{nR}\}$\\\hline
Encoder function  & Decoder function  \\
$f:\mathcal X^n\mapsto  \{1,2,...,2^{nR}\}$ & $f:\mathcal X^n\mapsto \{1,2,..,2^{nR}\}$\\\hline 
Source decoder input & Channel encoder input\\
$M\in\{1,2,..,2^{nR}\}$ & $M\in\{1,2,..,2^{nR}\}$ \\\hline
Decoder output  $\hat X^n$  & Encoder output  $X^n$\\\hline
Cribbing decoders $\hat{Z}_i(\hat X_{i})$ &Cribbing encoders  ${Z}_i(X_{i})$\\\hline
Noncausal cribbing decoder & Noncausal cribbing encoder\\
 $f_i:\{1,2,...,2^{nR}\}\times \hat Z^n\mapsto \hat X_i$ & $f_i:\{1,2,...,2^{nR}\}\times Z^n\mapsto X_i$ \\\hline
Strictly-causal cribbing decoder & Strictly-causal cribbing encoder
\\
$f_i:\{1,2,...,2^{nR}\}\times \hat Z^{i-1}\mapsto \hat X_i$ & $f_i:\{1,2,...,2^{nR}\}\times Z^{i-1}\mapsto X_i$ \\\hline
Causal cribbing decoder &Causal cribbing encoder \\
$f_i:\{1,2,...,2^{nR}\}\times \hat Z^i\mapsto \hat X_i$ & $f_i:\{1,2,...,2^{nR}\}\times Z^i\mapsto X_i$ \\\hline
Auxiliary r.v. $U$ & Auxiliary r.v. $U$\\\hline
Constraint  & Constraint  \\
$P(x,\hat x_1,\hat x_2)$, $P(x)$ is fixed & $P(y, x_1, x_2)$, $P(y|x_1,x_2)$ is fixed\\\hline
Joint distribution $P(x,\hat x_1,\hat x_2,u)$ & Joint distribution  $P(y, x_1,x_2,u)$\\\hline

 \hline\hline
\end{tabular}
\end{center}
\caption{Principles of duality between source coding and channel coding}\label{t_duality}
\end{table}
\begin{figure}[htbp]
\begin{center}
\scalebox{0.55}{\input{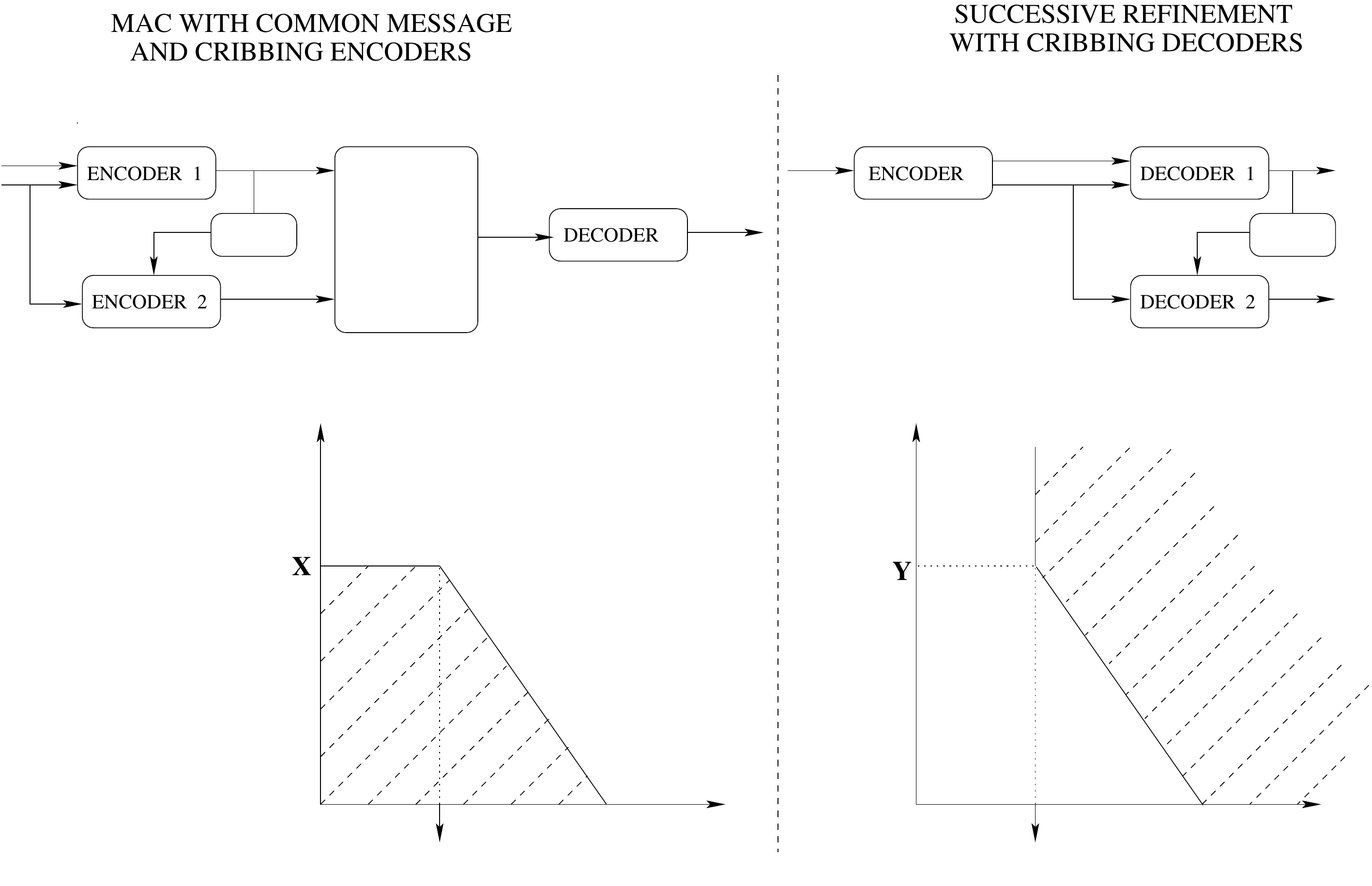_t}}
 \caption{Duality between the cribbing decoders in
successive refinement problem and the cribbing encoders in the MAC problem with a common message,
non-causal case. Table \ref{t_duality} represents how the expression of rate and capacity regions
of the two problems are related. In the figure, for a fixed joint probability distribution, we plot the rate and capacity regions, and we observe that the
corner points are dual to each other. Point $\mathbf{Y}$ corresponds to $(R_0,R_1)=(0, I(X;\hat{X}_1,\hat{X}_2)-\{I(X;\hat X_2,Z_1)-H(Z_1)\}^+)$ and Point $\mathbf{X}$ corresponds to $(R_0,R_1)=(0, I(Y;{X}_1,{X}_2)-\{I(Y;X_2,Z_1)-H(Z_1)\}^+).$ }\label{duality}
\end{center}
\end{figure}
Now that we have the capacity regions of the MAC with common message and of successive refinement
we  explore the duality of the regions. From a first glance at the  regions of MAC with common
message and of successive refinement, their duality may go unnoticed. However, the corner points of the
regions are dual according to the principles presented in Table \ref{t_duality} and as seen in
Fig. \ref{duality}.
\begin{table}[h!]
\begin{center}
\begin{tabular}{||c|c||}
\hline\hline
& Corner points $(R_0,R_1)$ of the noncausal ($d=n$) \\
\hline\hline
 MAC  & $(I(Y;{X}_1,{X}_2),0)$ \\
 Eq. (\ref{e_Rnc}) &$(\{I(Y;X_2,Z_1)-H(Z_1)\}^+,I(Y;{X}_1,{X}_2)-\{I(Y;X_2,Z_1)-H(Z_1)\}^+) $ \\
  \hline
  SR  & $(I(X;\hat{X}_1,\hat{X}_2),0)$ \\
 Theorem \ref{theorem2} &$(\{I(X;\hat X_2,Z_1)-H(Z_1)\}^+,I(X;\hat{X}_1,\hat{X}_2)-\{I(X;\hat X_2,Z_1)-H(Z_1)\}^+) $ \\
  \hline \hline
\end{tabular}
\end{center}

\caption{The corner points of the noncausal case.} \label{table_cmp_cribbing_nc}
\end{table}

\begin{table}[h!]
\begin{center}

\begin{tabular}{||c|c||}
\hline\hline
& Corner points $(R,R_1)$ of  the strictly causal case ($d=i-1$) \\
\hline\hline
 MAC  & $(I(Y;{X}_1,X_2),0),$ \\
 Eq. (\ref{e_Rsc}) &$(\{I(Y;X_2,Z_1)-H(Z_1|X_2)\}^+,I(Y;{X}_1,U)-\{I(Y;X_2,Z_1)-H(Z_1|X_2)\}^+) $ \\
  \hline
  SR  & $(I(X;\hat{X}_1,\hat X_2,0)$ \\
 Theorem \ref{theorem4} &$(\{I(X;\hat X_2,Z_1)-H(Z_1|\hat X_2)\}^+,I(X;\hat{X}_1,\hat X_2)-\{I(X;\hat X_2,Z_1)-H(Z_1|\hat X_2)\}^+) $ \\
  \hline \hline
\end{tabular}
\end{center}

\caption{The corner ponts of the strictly causal case.} \label{table_cmp_cribbing_sc}
\end{table}
\begin{table}[h!]
\begin{center}

\begin{tabular}{||c|c||}
\hline\hline
& Corner points $(R,R_1)$ of  the causal case ($d=i$) \\
\hline\hline
 MAC  & $(I(Y;{X}_1,U),0),$ where $X_2=f(U,Z_1)$ \\
 Eq. (\ref{e_Rc}) &$(\{I(Y;U,Z_1)-H(Z_1|U)\}^+,I(Y;{X}_1,U)-\{I(Y;U,Z_1)-H(Z_1|U)\}^+) $ \\
  \hline
  SR  & $(I(X;\hat{X}_1,U,0)$ where $\hat X_2=f(U,Z_1)$ \\
 Theorem \ref{theorem6} &$(\{I(X;U,Z_1)-H(Z_1|U)\}^+,I(X;\hat{X}_1,U)-\{I(X;U,Z_1)-H(Z_1|U)\}^+) $ \\
  \hline \hline
\end{tabular}
\end{center}

\caption{The corner points of the causal case.} \label{table_cmp_cribbing_nc}
\end{table}
Tables \ref{table_cmp_cribbing_nc}-\ref{table_cmp_cribbing_sc} presents the corner points of the
capacity region of the MAC with partial cribbing and common message and compare them to the
corner points of the successive refinement (SR) rate region with partial cribbing encoders. Note that
applying the dual rules $X_1\leftrightarrow\hat X_1$, $X_2\leftrightarrow\hat X_2$,
$Y\leftrightarrow X$, and $\geq \leftrightarrow \leq$, we obtain duality between the corner
points of the capacity region of MAC with common message and the rate region of the successive
refinement setting.

\subsection{Duality between MAC with conferencing encoders and successive refinement with
conferencing decoders} In the previous subsection we saw that there is a duality between the
problem of MAC with one common message and one private message with cribbing encoders to
successive refinement with cribbing decoders. Now we  show that the duality also exists if the
cooperation between the encoders/decoders is through a limited rate (conferencing) link as shown in Fig. \ref{duality_conferencing}.
\begin{theorem}\label{t_MAC_conf}
The capacity region of MAC with one common message  at rate $R_0$ known to both encoders, one
private message at rate $R_1$ known to Encoder 1, and a limited rate link from Encoder 1 to
Encoder 2 at rate $R_{12}$ with a restricted code distribution $P(x_1,x_2)$ is
\begin{eqnarray}
R_0+R_1&\leq& I(X_1,X_2;Y)\nonumber \\
R_1&\leq& I(X_1;Y|X_2)+R_{12}.
\end{eqnarray}
\end{theorem}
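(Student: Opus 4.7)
The plan is to mimic the classical Willems conferencing construction \cite{Willems_Conferencing}, adapted to our setting of a restricted code type $P(x_1,x_2)$, and then follow the standard MAC converse with an extra bookkeeping step for the conferencing message.

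\textbf{Achievability.} I would use rate splitting: write $M_1=(M_1',M_1'')$ where $M_1''\in\{1,\ldots,2^{nR_{12}}\}$ and $M_1'\in\{1,\ldots,2^{n(R_1-R_{12})}\}$. Encoder 1 forwards $M_1''$ losslessly to Encoder 2 over the rate-$R_{12}$ conference link, so after conferencing Encoder 2 knows $(M_0,M_1'')$. The system then reduces to a MAC with a common message of rate $R_0+R_{12}$ carrying $(M_0,M_1'')$, and a single private message of rate $R_1-R_{12}$ carrying $M_1'$ at Encoder 1. I would build the codebook by superposition according to the target type: generate $2^{n(R_0+R_{12})}$ sequences $X_2^n(m_0,m_1'')$ i.i.d.\ $\sim\prod P_{X_2}(x_{2,i})$, and, for each, $2^{n(R_1-R_{12})}$ sequences $X_1^n(m_0,m_1'',m_1')$ i.i.d.\ $\sim\prod P_{X_1|X_2}(x_{1,i}|x_{2,i})$, where the marginals are those of the prescribed $P(x_1,x_2)$. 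The decoder uses joint typicality decoding on $Y^n$. Standard MAC-with-common-message analysis \cite{Slepian_Wolf_MAC73} then yields the error events $R_1-R_{12}<I(X_1;Y|X_2)$ and $(R_0+R_{12})+(R_1-R_{12})=R_0+R_1<I(X_1,X_2;Y)$, giving exactly the claimed region. Because the codewords are drawn from $\prod P_{X_1,X_2}$, a standard typical-average argument (analogous to Lemma \ref{typicalaverage}) shows the empirical joint type of $(X_1^n,X_2^n,Y^n)$ is within $\delta$ of $P(x_1,x_2)P(y|x_1,x_2)$ with high probability, satisfying the restricted-type constraint.

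\textbf{Converse.} Suppose a sequence of $(2^{nR_0},2^{nR_1},n,P(x_1,x_2))$ codes with $Pe^{(n),\delta}\to0$. Let $T_{12}\in\{1,\ldots,2^{nR_{12}}\}$ denote the conferencing message, so $X_2^n$ is a function of $(M_0,T_{12})$, and let $Q$ be uniform on $\{1,\ldots,n\}$ independent of everything else. For the sum-rate bound, by Fano's inequality and the memoryless channel,
\begin{align*}
n(R_0+R_1)&=H(M_0,M_1)\le I(M_0,M_1;Y^n)+n\epsilon_n \le I(X_1^n,X_2^n;Y^n)+n\epsilon_n\\
&\le \sum_{i=1}^n I(X_{1,i},X_{2,i};Y_i)+n\epsilon_n = nI(X_{1,Q},X_{2,Q};Y_Q\mid Q)+n\epsilon_n.
\end{align*}
Since $Y_Q\perp Q\mid(X_{1,Q},X_{2,Q})$ by the memoryless channel, conditioning reduces entropy gives $I(X_{1,Q},X_{2,Q};Y_Q\mid Q)\le I(X_{1,Q},X_{2,Q};Y_Q)$, and Lemma \ref{l_XqYqZq_type} identifies $(X_{1,Q},X_{2,Q},Y_Q)$ with the prescribed distribution $P(x_1,x_2)P(y\mid x_1,x_2)$ up to $\delta$, yielding $R_0+R_1\le I(X_1,X_2;Y)$ in the limit. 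For the second bound, using $M_0\perp M_1$, Fano's inequality and the fact that $X_2^n$ is a function of $(M_0,T_{12})$,
\begin{align*}
nR_1&=H(M_1\mid M_0)\le I(M_1;Y^n\mid M_0)+n\epsilon_n\\
&\le I(M_1;Y^n,X_2^n\mid M_0)+n\epsilon_n\\
&= I(M_1;X_2^n\mid M_0)+I(M_1;Y^n\mid M_0,X_2^n)+n\epsilon_n\\
&\le H(T_{12})+I(X_1^n;Y^n\mid X_2^n)+n\epsilon_n\\
&\le nR_{12}+\sum_{i=1}^n I(X_{1,i};Y_i\mid X_{2,i})+n\epsilon_n,
\end{align*}
and the last sum is again single-letterized via $Q$ and Lemma \ref{l_XqYqZq_type} to $nI(X_1;Y\mid X_2)$ under the restricted type.

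\textbf{Expected sticking point.} The main subtlety is handling the restricted code distribution in the converse: the standard MAC converse would give a union over all $P(x_1,x_2)$ consistent with feasible encoders, whereas here we must pin the single-letter bounds to the \emph{prescribed} $P(x_1,x_2)$. This is exactly what Lemma \ref{l_XqYqZq_type} supplies, once one verifies $Y_Q\perp Q\mid(X_{1,Q},X_{2,Q})$ so that conditioning on $Q$ can be removed without loss; the rest is routine.
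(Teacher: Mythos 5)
Your proof is correct and follows essentially the same approach as the paper: achievability by routing a rate-$R_{12}$ piece of $M_1$ over the conference link so the problem reduces to a MAC with private rate $R_1-R_{12}$ and common rate $R_0+R_{12}$ under a fixed codeword distribution (the paper does this as two explicit cases, $R_1\le R_{12}$ and $R_1>R_{12}$, whereas your split with $M_1''$ of rate exactly $R_{12}$ implicitly assumes $R_1\ge R_{12}$ and should be capped at $\min(R_1,R_{12})$ to cover the other case), and the converse via Fano plus a decomposition that isolates the conference contribution. The only cosmetic difference in the converse is that you condition on $X_2^n$ and bound $I(M_1;X_2^n\mid M_0)\le H(T_{12})\le nR_{12}$, while the paper conditions directly on $M_{12}$ via $H(M_1\mid M_0)=H(M_{12}\mid M_0)+H(M_1\mid M_0,M_{12})$; since $X_2^n$ is a function of $(M_0,M_{12})$, the two give the same single-letter bound.
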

This theorem can be proved using the  result of conferencing MAC \cite{Willems83_cooperating}
where $C_{21}=\infty$, and  choosing $U=X_2$. It is also
possible to prove the theorem directly. The achievability part of Theorem \ref{t_MAC_conf} follows easily if $R_1\leq R_{12}$, then the conferencing link can be used to convey message $M_1$, thus both the encoders have a common knowledge of both the messages, so $R_0+R_1\leq I(X_1,X_2;Y)$ is achievable. If rate $R_{12}\leq R_1$, then the conferencing can be used to increase the
common message rate to $R_0+R_{12}$ and decrease the private message rate to $R_1-R_{12}$. The
converse can be proved using the fact that $nR_1=H(M_1)=H(M_1|M_0)\leq
H(M_{12}|M_0)+H(M_1|M_0,M_{12})$, and then bounding $H(M_{12}|M_0)\leq nR_{12}$ using the fact that the
cardinality of $M_{12}$ is $2^{nR_{12}}$ and bounding $H(M_1|M_0,M_{12})\leq nI(X_{1Q};Y_Q|X_{2Q})$  using Fano's
inequality and the fact that the channel is memoryless.

Finally, one can note a duality between the MAC with one common message and one private message
and conferencing encoders to the successive refinement with conferencing decoders. In particular,
Table \ref{table_corner_conf} presents the corner points of the achievability regions of the two
problems from which the duality rules $X_1\leftrightarrow\hat X_1$, $X_2\leftrightarrow\hat X_2$,
$Y\leftrightarrow X$,  emerge.

\begin{figure}[htbp]
\begin{center}
\scalebox{0.55}{\input{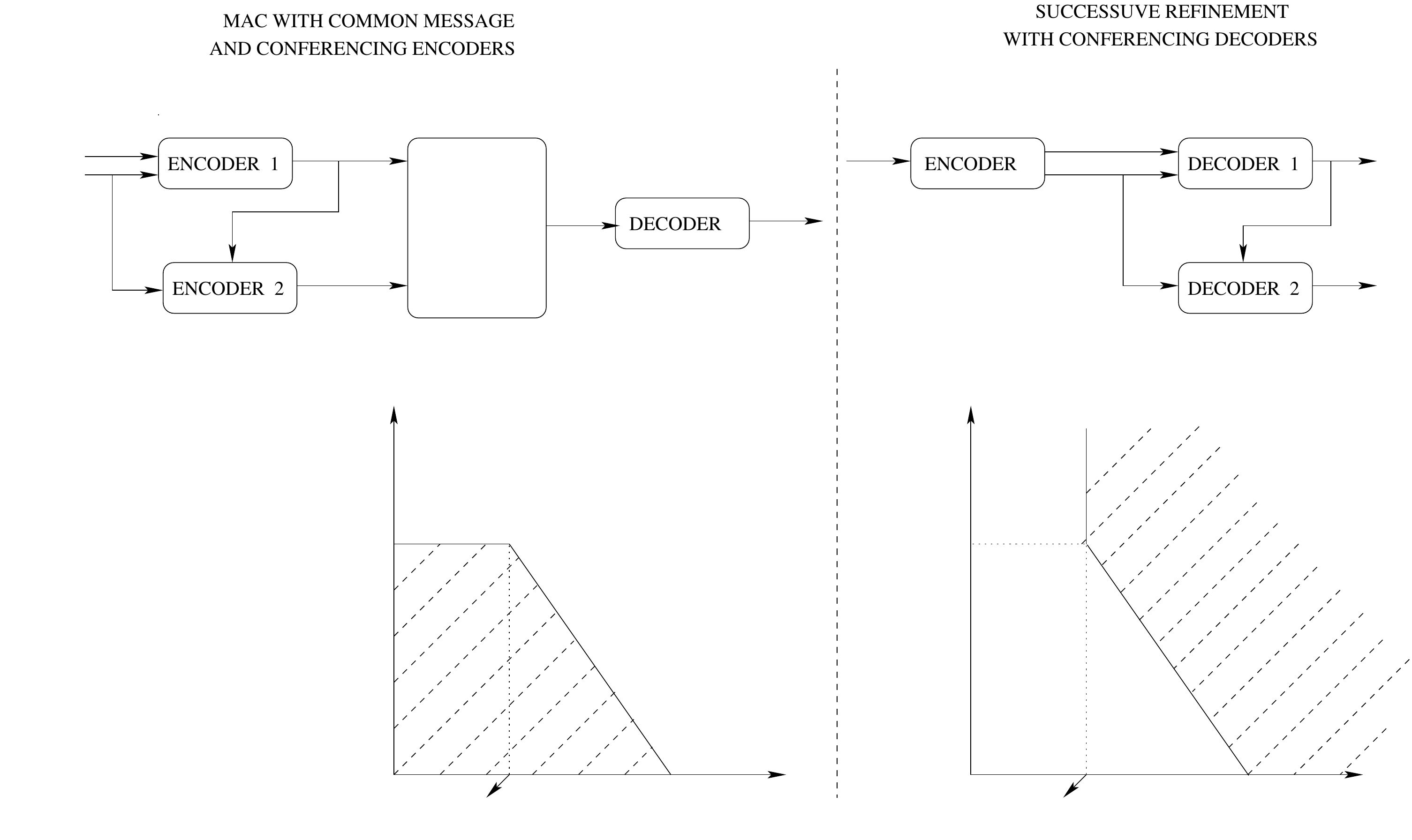_t}}
 \caption{Duality between the conferencing decoders in
successive refinement problem and the conferencing encoders in the MAC problem with a common message,
non-causal case. In the figure, for a fixed joint probability distribution, we plot the rate and capacity regions, and we observe that the
corner points are dual to each other. } 
 \label{duality_conferencing}
\end{center}
\end{figure}

\begin{table}[h!]
\begin{center}
\begin{tabular}{||c|c||}
\hline\hline
& Corner points $(R_0,R_1)$ of  the conferencing case \\
\hline\hline
 MAC  & $(I(Y;{X}_1,X_2),0)$  \\
 Theorem \ref{t_MAC_conf} &$(\min(I(Y;{X}_1,X_2), I(Y;X_1|X_2)+R_{12}), \{I(Y;X_2)-R_{12}\}^+) $ \\
  \hline
  SR  & $(I(X;\hat{X}_1,\hat{X}_1),0)$ \\
 Eq. (\ref{eqpre1})-(\ref{eqpre2}) &$(\min(I(X;\hat {X}_1,\hat X_2), I(X;\hat X_1|\hat X_2)+R_{12}), \{I(X;\hat X_2)-R_{12}\}^+) $  \\
  \hline \hline
\end{tabular}
\caption{The corner points of the achievabilities of the MAC with one common message and one
private message and conferencing encoders and of successive refinement with conferencing
decoders.} \label{table_corner_conf}
\end{center}
\end{table}

\section{conclusion}
\label{sec::conclusion}	
In this paper, we introduced new models of cooperation in multi terminal source coding. The setting of successive refinement with single encoder and two decoders was generalized to incorporate cooperation between the users via (a) \textit{conferencing}, or (b) \textit{cribbing}. A new scheme,\textquotedblleft Forward Encoding\textquotedblright\ and
\textquotedblleft Block Markov Decoding\textquotedblright\ was used to derive the rate regions for strictly-causal and causal cribbing. Certain numerical examples are presented and show how cooperation via cribbing can boost the rate region. Finally, we introduce dual channel coding problems, and establish duality between successive refinement with cribbing decoders and communication over the MAC with common message and cribbing encoders.

\bibliography{cribbingbib}
\bibliographystyle{IEEEtran}

\appendices
\section{Successive refinement with Conferencing Decoders, Fig. \ref{sr_conferencing}}
\label{sec::sr_conferencing}
Consider Fig. \ref{sr_conferencing}, here Decoder 1 cooperates with Decoder 2 by providing an additional description $T_{12}$ to it. The rate region is given by, 
\bea
R_0+R_1&\ge& I(X;\hat{X}_1,\hat{X}_2)\\
R_0+R_{12}&\ge& I(X;\hat{X}_2),
\eea
for some joint probability distribution $P_{X,\hat{X}_1,\hat{X}_2}$ such that $\E[d_i(X_i,\hat{X}_i)]\le D_i$, for $i=1,2$. We will briefly describe the proof as they are based on standard arguments used throughout the paper. 
\par
\textit{Achievability} : We provide the achievability under two cases,
\begin{itemize}
\item \textit{Case 1} :  $R_1\le R_{12}$, here we describe $T_1$ through $T_{12}$, thus both the decoders know $(T_0,T_1)$ and hence the following region is achievable,
\bea
R_1&\le& R_{12}\\
R_0+R_1&\ge&I(X;\hat{X}_1,\hat{X}_2),
\eea
for a joint probability distribution $P_{X,\hat{X}_1,\hat{X}_2}$ such that distortion constraints are satisfied. The region is equivalent to, (call it Region 1)
\bea
R_1&\le& R_{12}\\
R_0+R_1&\ge&I(X;\hat{X}_1,\hat{X}_2)\\
R_0+R_{12}&\ge&I(X;\hat{X}_2).
\eea
\item \textit{Case 2} : $R_1> R_{12}$, here $T_1$ is described as a tuple $(T_1',T_1'')$ of rate $(R_1-R_{12},R_{12})$, and $T_1''$ is described via the conferencing link. Thus this problem is similar to original successive refinement problem, where encoder has a private rate $R_1-R_{12}$ and a common rate $R_0+R_{12}$, and hence the following region (call Region 2) is achievable (follows from the achievability of Equitz and Cover \cite{EquitzCover}),
\bea
R_1&>& R_{12}\\
R_0+R_1&\ge&I(X;\hat{X}_1,\hat{X}_2)\\
R_0+R_{12}&\ge&I(X;\hat{X}_2),
\eea
for a joint probability distribution $P_{X,\hat{X}_1,\hat{X}_2}$ such that distortion constraints are satisfied. 
\end{itemize}
We finish the proof of achievability by combining Region 1 and Region 2. Converse follows from standard cutset bound arguments and is omitted.
\section{Proof of Lemma \ref{lemma1}}
\label{appendix_cascade}
The proof is done by proving set inclusions in two directions as done for Theorem 3 in \cite{Vasudevan_Diggavi_Tian_Cascade}. First we prove, $\tilde{\mathcal{R}}(D_1,D_2)\subseteq \mathcal{R}_{cascade}(D_1,D_2)$. Suppose a pair $(\tilde{R}_0, \tilde{R}_0+\tilde{R}_1)\in \tilde{\mathcal{R}}(D_1,D_2)$. This implies there exists a $(2^{n\tilde{R}_0},2^{n\tilde{R}_1},n)$ code (cf. Definition \ref{definition2}), for the setting of successive refinement with cribbing (Fig. \ref{sr_cribbing}), such that distortion constraints $D_1+\epsilon$ and $D_2+\epsilon$ are met at the decoders.  We can use this code to generate a code of rates $R_1=\tilde{R}_0+\tilde{R}_1$ and $R_{12}=\tilde{R}_0$, for our cascade setting with cribbing (Fig. \ref{cascade_cribbing}), with exactly same distortions at the decoders. This proves one direction. 
\par
For the other direction, i.e., $\mathcal{R}_{cascade}(D_1,D_2)\subseteq \tilde{\mathcal{R}}(D_1,D_2)$, assume, $(R_{12},R_1)\in \mathcal{R}_{cascade}(D_1,D_2)$, which means there exist codes with rates $(R_{12},R_1)$ with decoders incurring distortions, $D_1+\epsilon$ and $D_2+\epsilon$. Assume the messages sent on first and second link in our cascade problem be $T_{1}$ and $T_{12}$ respectively. $T_{12}$ is a function of $T_1$, and we have
\bea
nR_{12}&\ge&H(T_{12})\label{eqc1}\\
nR_1&\ge&H(T_1)=H(T_1,T_{12})=H(T_{12})+H(T_{1}|T_{12})\label{eqc2}.
\eea
Using this code, we now will construct a code for successive refinement setting with cribbing decoders. Specifically, we consider encoding in $B$ blocks where each block is of length $n$.  Denote by $T_{12}(i)$ and $T_{1}(i)$ the messages which are transmitted in the cascade source coding setting in $i^{th}$ block. Note that the tuple $\{T_{12}(1),\cdots, T_{12}(B)\}$ can be communicated to both decoders in the successive refinement setting with vanishing probability of error, with a rate $R_0=\frac{1}{n}H(T_{12})$, with large number of blocks, and similarly the tuple $(T_1(1),\cdots,T_1(B))$ can be communicated to Decoder 1 with rate (using Slepian Wolf Coding as $(T_{12}(i), T_1(i))$ are independent) $R_1=\frac{1}{n}H(T_1|T_{12})$. Thus Decoder 1 and Decoder 2 will know exactly the same $(T_{12},T_1)$ and $T_{12}$ respectively as they would know in cascade  setting. Since the cribbing structure (Decoder 2 gets the crib from Decoder 1 non-causally, strictly-causally and causally) is same in cascade source coding  and successive refinement setting, decoders will be able to achieve same distortion levels, $(D_1,D_2)$. This implies, $(\frac{1}{n}H(T_{12}), \frac{1}{n}H(T_1|T_{12}))\in\mathcal{R}(D_1,D_2)$ or $(\frac{1}{n}H(T_{12}), \frac{1}{n}H(T_1))\in\tilde{\mathcal{R}}(D_1,D_2)$, which implies by Eq. (\ref{eqc1})-(\ref{eqc2}), that $(R_0, R_1)\in\tilde{\mathcal{R}}(D_1,D_2)$.
\section{Proof of Achievability in Theorem \ref{theorem2}}
\label{appendixC}
We describe in detail the achievablility in Theorem \ref{theorem2}.
\begin{itemize}
\item \textit{Codebook Generation} : Fix the distribution
$P_{X}P_{\hat{Z}_1,\hat{X}_2|X}P_{\hat{X}_1|X,\hat{Z}_1,\hat{X}_2}$, 
$\epsilon > 0$ such that
$E[d_1(X,\hat{X}_1)]\le \frac{D_1}{1+\epsilon}$ and $E[d_2(X,\hat{X}_2)]\le
\frac{D_2}{1+\epsilon}$. Generate codebook
$\mathcal{C}_{\hat{X}_2}$ consisting of $2^{nI(X;\hat{X}_2)}$
$\hat{X}^n_2(m_h)$ codewords
generated i.i.d $\sim P_{\hat{X}_2}$, $m_h\in[1:2^{nI(X;\hat{X}_2)}]$. For each
$m_h$, generate a
codebook  $\mathcal{C}_{\hat{Z}_1}(m_h)$ consisting
of $2^{nI(X;\hat{Z}_1|\hat{X}_2)}$ $\hat{Z}^n_1$
codewords
generated i.i.d. $\sim
P_{\hat{Z}_1|\hat{X}_2}$. We then bin these generated $\hat{Z}_1^n$ codewords for each $m_h$, in $2^{nR_0}$ vertical bins, $\mathcal{B}(m_v)$, $m_v\in[1:2^{nR_0}]$ and index them accordingly with $l\in[1:2^{n(I(X;\hat{Z}_1|\hat{X}_2)-R_0)}]$. $\hat{Z}^n_1$ codewords can be indexed equivalently as the tuple $(m_h,m_v,l)$. For each $\hat{Z}^n_1(m_h,m_v,l)$ codeword, generate a codebook, 
$\mathcal{C}_{\hat{X}_1}(m_h,m_v,l)$ consisting
of $2^{nI(X;\hat{X}_1|\hat{Z}_1,\hat{X}_2)}$ $\hat{X}^n_1(m_h,m_v,l,k)$
codewords
generated i.i.d. $\sim
P_{\hat{X}_1|\hat{Z}_1,\hat{X}_2}$,
$k\in[1:2^{nI(X;\hat{X}_1|\hat{Z}_1,\hat{X}_2)}]$. Thus the generation of codebooks is similar to that in perfect cribbing, except here we generate one more layer, of $\hat{Z}_1$ codewords. Also we bin $\hat{Z}_1^n$ codewords instead of $\hat{X}_1^n$. Here, $m_h$ and $m_v$ correspond to the row and column index of the \textquotedblleft doubly-indexed\textquotedblright\ bin which contains $\hat{Z}_1^n$ codeword and for each $\hat{Z}_1^n$ codeword, a codebook of $\hat{X}_1^n$ codebook is generated.  
\item \textit{Encoding} : Given source sequence $X^n$, encoder finds the index
$m_h\in [1:2^{nI(X;\hat{X}_2)}]$ from codebook $\mathcal{C}_{\hat{X}_2}$ such that 
$(X^n,\hat{X}^n_2(m_h))\in\mathcal{T}^n_\epsilon$. The
encoder then finds the index tuple $(m_v,l)$ from the
$\mathcal{C}_{\hat{Z}_1}(m_h)$
codebook, such that $(X^n,\hat{Z}^n_1(m_h,m_v,l),\hat{X}^n_2(m_h))\in\mathcal{T}^n_\epsilon$. Encoder then finds the index $k$ from the
$\mathcal{C}_{\hat{X}_1}(m_h,m_v,l)$
codebook, such that $(X^n,\hat{X}^n_1(m_h,m_v,l,k),\hat{Z}^n_1(m_h,m_v,l),
\hat{X}^n_2(m_h))\in\mathcal{T}^n_\epsilon$. Thus $\hat{Z}^n_1\in \mathcal{B}(m_v)$. 
$m_v$ is described as $R_0$ and the index triple, $(m_h,l,k)$ is described as $R_1$, thus
\bea
R_1&\ge&I(X;\hat{X}_2)+I(X;\hat{Z}_1|\hat{X}_2)-R_0+I(X;\hat{X}_1|\hat{Z}_1,\hat{X}_2)\nonumber\\
\mbox{or, } \ \ R_0+R_1&\ge&I(X;\hat{Z}_1,\hat{X}_1,\hat{X}_2)=I(X;\hat{X}_1,\hat{X}_2)\label{eq2},
\eea 
as $\hat{Z}_1=g(\hat{X}_1)$.
\item \textit{Decoding} : Using the indices sent by encoder, Decoder 1 constructs $\hat{X}^n_1=\hat{X}_1^n(m_h,m_v,l,k)$. Decoder
2 gets $\hat{Z}^n_1$ and column index $m_v$, and infers the
unique index $m_h$ such that
$\hat{Z}^n_1=\hat{Z}^n_1(m_h,m_v,\tilde{l})$ for some $\tilde{l}\in[1:2^{n(I(X;\hat{Z}_1|\hat{X}_2)-R_0)}]$. 
\item \textit{Distortion Analysis} : Consider the following events : 
\begin{itemize}
\item 
\bea
\mathcal{E}_0&=&\mbox{Encoder cannot find $(\hat{X}_2^n,\hat{Z}_1^n,\hat{X}_1^n)$ jointly typical with given source $X^n$}  
\eea
 But the probability of this event vanishes by \textit{Covering Lemma}, Lemma \ref{covering} as there are $2^{n I(X;\hat{X}_2)}$ $\hat{X}_2^n$ codewords, for each $\hat{X}_2^n$ codeword there are $2^{nI(X;\hat{Z}_1|\hat{X}_2)}$ $\hat{Z}_1^n$ codewords and finally for each $\hat{Z}_1^n$ codeword there are $2^{n I(X;\hat{X}_1|\hat{Z}_1,\hat{X}_2)}$ $\hat{X}_1^n$ codewords . Without loss of generality, now suppose that $(m_h,m_v,l,k)=(1,1,1,1)$ was sent by the encoder.   
  \item
\bea
\mathcal{E}_1&=&\mbox{$\hat{Z}_1^n$ does not lie in bin with row index $m_h=1$ and column index $m_v=1$ }\\
&=&\bigg{\{}\hat{Z}^n_1\neq\hat{Z}^n_1(1,1,\tilde{l}), \mbox{ for any } \tilde{l}\in[1:2^{n(I(X;\hat{Z}_1|\hat{X}_2)-R_0)}]\bigg{\}}.
\eea
But the probability of this event goes to zero, because of our encoding procedure, as $\hat{Z}_1^n=
\hat{Z}_1^n(1,1,1)$.
\item
\bea
\mathcal{E}_2&=&\mbox{$\hat{Z}_1^n$ lies in bin with row index $\hat{m}_h\neq1$ and column index $m_v=1$.}\\\
&=&\bigg{\{}\hat{Z}^n_1=\hat{Z}^n_1(\hat{m}_h,1,\tilde{l}), \hat{m}_h\neq 1, \mbox{ for some } \tilde{l}\in[1:2^{n(I(X;\hat{Z}_1|\hat{X}_2)-R_0)}]\bigg{\}}.\nonumber\\
\eea
Using similar argument as in the case of perfect cribbing, probability of this event goes to zero with large $n$, if
\bea
\label{eqd.1}
I(X;\hat{X}_1,\hat{X}_2)-R_0\le I(\hat{Z}_1;\hat{Z}_1,\hat{X}_2)=H(\hat{Z}_1).
\eea
\end{itemize}
Thus consider the event, $\mathcal{E}=\mathcal{E}_0\cup\mathcal{E}_1\cup\mathcal{E}_2$, using Eq. (\ref{eq2}) and Eq. (\ref{eqd.1}), probability of this event goes to zero with large $n$ if,
\bea
R_0+R_1&\ge& I(X;\hat{X}_1,\hat{X}_2)\\
R_0&\ge& \{I(X;\hat{Z}_1,\hat{X}_2)- H(\hat{Z}_1)\}^+.
\eea
Distortion is bounded as in other sections.
\end{itemize}

\section{Proof of Theorem \ref{t_mac}, MAC with cribbing encoders and common message}
{\it Proof of achievability of Theorem \ref{t_mac}, noncausal case:} The main idea of the
achievability proof is to split message $m_1$ into two parts $m_1'$ and $m_1''$ with rates $R_1'$
and $R_1''$ respectively, such that $R_1=R_1'+R_1''$. Message $m_1'$ is transmitted to Encoder 1
through the cribbing signal $Z_1^n$, while $m_1''$ remains as a private message to Encoder 1.

{\it Code design:} For the given joint distribution $P(x_1,x_2)$ generate $2^{nR_1'}$ codewords
$z_1^n$ distributed i.i.d. according to $P(z_1)$. For each codeword $z_1^n$ generate $2^{nR_0}$
codewords $x_2^n$ according to $P(x_2|z_1).$ For each codewords pair $(z_1^n,x_2^n)$ generate
$2^{nR_1''}$ $x^n_1$ codewords according to $P(x_1|z_1,x_2)$.

{\it Encoding and decoding:}
\begin{itemize}
\item {Encoder 1:} maps $(m_1',m_1'',m_0)$ to $(z_1^n(m_1'), x_2^n(z_1^n,m_0), x_1^n(x_2^n,z_1^n,
m_1''))$, and transmits $x_1^n(x_2^n,z_1^n, m_1'')$.

\item {Encoder 2:}   transmits
$x_2^n(z_1^n,m_0)$.

\item {Decoder:} looks for $(\hat m_0, \hat m_1', \hat m_1'')$ such that

\begin{equation}
(z_1^n(\hat m_1'), x_2^n(z_1^n,\hat m_0), x_1^n(x_2^n,z_1^n, \hat m_1''), y^n)\in T_{\epsilon}^{(n)}.
\end{equation}
\end{itemize}

{\it Error analysis:} Without loss of generality let's assume that the message that is sent is
$m_0=1, m_1'=1, $ and $m_1''=1$.
\begin{itemize}
\item Let $E_0$ be the event that $(x_1^n(1),x_2^n(1),y^n) \notin T_\epsilon^{(n)}.$ Clearly, $\Pr\{E_0\}\to 0$
by the law of large numbers. Hence, for the rest of the events we can assume that
$(x_1^n(1),x_2^n(1)) \in T_\epsilon^{(n)}.$

\item Let $E_{1,j}$ be the event that $z_1^n(1)=z_1^n(j)$. And let $E_1$ be the event that there
exists an $j\neq 1$ such that $z_1^n(1)=z_1^n(j)$. Following from the definition $E_1=\cup_{j\geq
1} E_{1,j}$. Let's bound the probability of $E_1$ using the union bound and the fact that
$\Pr\{E_{b,j}\}\leq 2^{-n(H(Z_1)-\epsilon)}$.
\begin{eqnarray}
\Pr\{E_1\}&=&\Pr\{\cup_{j\geq 1} E_{1,j}\} \\
&\leq&\sum_{i\geq 2} \Pr\{E_{1,j}\} \\
&\leq&\sum_{i\geq 2} 2^{-n(H(Z_1)-\epsilon)} \\
&=& 2^{n(R_1'-H(Z_1)+\epsilon)},
\end{eqnarray}
hence, if \begin{equation}\label{e1_noncausal}R_1'<H(Z_1), \end{equation}
 $\Pr\{E_1\}\to 0$ as $n\to \infty$.

\item Let $E_{i,j,k}$ be the event probability that for $\hat m_1'=i, \hat m_0=j$, and $\hat m_1''=k$
\begin{equation}
(z_1^n(\hat m_1'), x_2^n(z_1^n,\hat m_0), x_1^n(x_2^n,z_1^n, \hat m_1''), y^n)\in T_{\epsilon}^{(n)}.
\end{equation}
Let $E_3$ be the event that exists an $(i,j,k)\neq (1,1,1)$ such that  $E_{i,j,k}$ occurs.
\begin{equation}\label{e_noncausal_allp}
\Pr\{E_3\}\leq \Pr\{\bigcup_{i\geq 2, j\geq 1, k\geq 1} E_{i,j,k}\}+\Pr\{\bigcup_{i=1, j\geq 2,
k\geq 1} E_{i,j,k}\}+\Pr\{\bigcup_{i=1, j=1, k\geq 2} E_{i,j,k}\}.
\end{equation}
Now let's bound each term. Consider the first term in the RHS of (\ref{e_noncausal_allp})
\begin{eqnarray}
\Pr\{\bigcup_{i\geq 2, j\geq 1, k\geq 1}E_{i,j,k} \}&\leq& \sum_{i=2, j=1, k=1 }^{2^{nR_1'},
2^{nR_0}, 2^{nR_1''}}
2^{-n(I(Z_1,X_1,X_2;Y)-\epsilon)}\nonumber \\
&\leq& 2^{n(R_0+R_1'+R_1''-I(Z_1,X_1,X_2;Y)+\epsilon)},
\end{eqnarray}
hence if \begin{equation}\label{e2_noncausal} R_0+R_1<I(X_1,X_2;Y),\end{equation}
 then the probability above goes to
zero. Consider the second term in the RHS of (\ref{e_noncausal_allp})
\begin{eqnarray}
\Pr\{\bigcup_{i=1, j\geq 2, k\geq 1} E_{i,j,k}\}&\leq& \sum_{ j=2, k=1 }^{ 2^{nR_0}, 2^{nR_1''}}
2^{-n(I(X_1,X_2;Y|Z_1)-\epsilon)}\nonumber \\
&\leq& 2^{n(R_0+R_1''-I(X_1,X_2;Y|Z_1)+\epsilon)},
\end{eqnarray}
hence if \begin{equation}\label{e3_noncausal} R_0+R_1''<I(X_1,X_2;Y|Z_1), \end{equation}
 then the probability above goes
to zero.

Consider the third term in the RHS of (\ref{e_noncausal_allp})
\begin{eqnarray}
\Pr\{\bigcup_{i=1, j=1 , k\geq 2} E_{i,j,k}\}&\leq& 2^{n(R_1''-I(X_1;Y|Z_1,X_2)+\epsilon)},
\end{eqnarray}
hence if \begin{equation}\label{e4_noncausal} R_1''<I(X_1;Y|Z_1,X_2), \end{equation}
 then the probability above goes
to zero.

Gathering (\ref{e1_noncausal}), (\ref{e2_noncausal}), (\ref{e3_noncausal}) and
(\ref{e4_noncausal}) we obtain
\begin{eqnarray}
R_1'&<&H(Z_1)\\
R_0+R_1&<&I(X_1,X_2;Y)\\
R_0+R_1''&<&I(X_1,X_2;Y|Z_1)\\
R_1''&<&I(X_1;Y|Z_1,X_2).
\end{eqnarray}
Using Fourier$-$Motzkin elimination \cite{LecturesConvexSets10} we obtain
\begin{eqnarray}\label{e_fourier}
R_0+R_1&<&I(X_1,X_2;Y)\\
R_0+R_1&<&I(X_1,X_2;Y|Z_1)+H(Z_1) \\
R_1&<&I(X_1;Y|Z_1,X_2)+H(Z_1).
\end{eqnarray}
Since $I(X_1,X_2;Y)\leq I(X_1,X_2;Y|Z_1)+H(Z_1)$ the second inequality in (\ref{e_fourier}) is
redundant and therefore the region
\begin{eqnarray}\label{e_fourier}
R_0+R_1&<&I(X_1,X_2;Y)\nonumber \\
R_1&<&I(X_1;Y|Z_1,X_2)+H(Z_1).
\end{eqnarray}
is achievable.\hfill \QED
\end{itemize}

{\it Proof of converse for the non causal case:} Let ($2^{nR_0},2^{nR_1},n$) be a non
causal cribbing MAC code as defined in Def. \ref{def_mac_code} with a probability of error
$P_e^{(n)}$. Consider,
\begin{eqnarray}\label{e_conv_noncausal1}
R_0+R_1&=& H(M_0,M_1) \\
&=& I(M_0,M_1;Y^n)+H(M_0,M_1|Y^n) \\
&\stackrel{(a)}{\leq}& I(X_1^n,X_2^n;Y^n)+n\epsilon_n\\
&\stackrel{(b)}{\leq}& \sum_{i=1}^n I(X_{1,i},X_{2,i};Y_i)+n\epsilon_n \\
&\stackrel{(c)}{\leq}& n I(X_{1,Q},X_{2,Q};Y_Q|Q)+n\epsilon_n \\
 &\stackrel{}{\leq}& n I(X_{1,Q},X_{2,Q};Y_Q)+n\epsilon_n,
\end{eqnarray}
where (a) follows from Fano's inequality where $\epsilon_n=(\frac{1}{n}+R_0+R_1)P_e^{(n)}$, step
(b) follows from the memoryless nature of the MAC  and (c) follows from denoting $Q$ as uniform random
variable over the alphabet $\{1,2,...,n\}$. Now consider
\begin{eqnarray}\label{e_conv_noncausal2}
R_1&=& H(M_1) \\
&=& H(M_1|M_0) \\
&\stackrel{(a)}{\leq}& I(M_1;Y^n|M_0)+n\epsilon_n\\
&\stackrel{}{\leq}& I(X_1^n,Z_1^n;Y^n|X_2^n)+n\epsilon_n \\
&\stackrel{}{=}& I(Z_1^n;Y^n|X_2^n)+I(X_1^n;Y^n|X_2^n,Z_1^n)+n\epsilon_n \\
&\stackrel{(b)}{\leq}& \sum_{i=1}^n H(Z_{1,i})+I(X_{1,i};Y_i|X_{2,i},Z_{1,i})+n\epsilon_n \\
&\stackrel{(c)}{\leq}& \sum_{i=1}^n H(Z_{1,Q}|Q)+I(X_{1,Q};Y_Q|X_{2,Q},Z_{1,Q},Q)+n\epsilon_n \\
&\stackrel{}{\leq}& \sum_{i=1}^n H(Z_{1,Q})+I(X_{1,Q};Y_Q|X_{2,Q},Z_{1,Q})+n\epsilon_n,
\end{eqnarray}
where the justification for (a), (b) and (c) follows from similar arguments as steps (a), (b) and
(c) for bounding $R_0+R_1$. Since the rate pair is achievable,  the code type is arbitrary
close to the restricted distribution $P(x_1,x_2)$ and using Lemma \ref{l_XqYqZq_type} we conclude
that the distribution of $X_{1,Q}$, $X_{2,Q}$ is arbitrary close to the restricted distribution
$P(x_1,x_2)$. Finally, by denoting $Z_1=Z_Q$, $X_1=X_{1,Q}$, $X_2=X_{2,Q}$ and $Y=Y_{Q}$ and
taking into account that $P_e^{(n)}$ is going to zero as $n\to \infty$ we obtain that the region
$\mathcal R^{nc}(P)$ upper bound the capacity region.\hfill \QED

{\it Proof of achievability of Theorem \ref{t_mac}, strictly causal case:} The main idea of the
achievability proof is to combine the rate splitting idea that we used in the noncausal case with
the Markov block coding. We assume that the transmission is done in a block of size $nB$ where
$B$ is the number of subblocks and each subblock is of length $n$. Let $m_{0,b}, m_{1,b}$ be the
messages sent in block $b$. Similarly to the noncausal case, split message $m_{1,b}$ into two
parts $m_{1,b}'$ and $m_{1,b}''$ with rates $R_1'$ and $R_1''$ respectively, such that
$R_1=R_1'+R_1''$. Message $m_{1,b}'$ is transmitted to Encoder 1 through the cribbing signal,
while $m_{1,b}''$ remains as a private message to Encoder 1. Because of the causality, the
 the message $m_{1,b}'$ is known to Encoder 2 only at the end of block $b$. 

{\it Code design:} For fixed a joint distribution $P(x_1,x_2)$ generate $2^{n(R_0+R_1')}$
codewords $x_2^n$ each associated with the pair of messages $(m_{0,b},m_{1,b-1}')$. For each
codeword $x_2^n$ generate $2^{nR_1'}$ codewords $z_1^n$  according to conditional distribution
$P(z_1|x_2)$ associated with $m_{1,b}'$. For each codeword pair $(z_1^n,x_2^n)$ generate
$2^{nR_1''}$ codewords $x_1^n$  according to conditional distribution $P(x_1|z_1,x_2)$ associated
with $m_{1,b}''$.

{\it Encoding and decoding:}
\begin{itemize}
\item {Encoder 1:} In block $b$ maps $(m_{1,b-1}',m_{1,b}',m_{1,b}'',m_{0,b})$ to
$(x_2^n(m_{0,b},m_{1,b-1}'),z_1^n(m_{1,b}',x_2^n), x_1^n(m_{1,b}'',x_2^n,z_1^n))$, and transmits
$x_1^n(m_{1,b}'',x_2^n,z_1^n))$.

\item {Encoder 2:}   Transmits
$x_2^n(m_{0,b},m_{1,b-1}')$. Message $m_{1,b-1}'$ is known to Encoder 2 since at the end of block
$b-1$, $z_1^n(m_{1,b-1}',x_2^n)$ and $x_2^n$ are known.

\item {Decoder:} Does backward decoding. We assume that when decoding block $b$ message $m_{1,b}'$ is known and it
looks for tuple ($\hat m_{0,b}, \hat m_{1,b-1}', \hat m_{1,b}''$) such that

\begin{equation}
(x_2^n(\hat m_{0,b},\hat m_{1,b-1}'),z_1^n(m_{1,b}',x_2^n), x_1^n(\hat m_{1,b}'',x_2^n,z_1^n),y^n)\in
T_{\epsilon}^{(n)}.
\end{equation}
\end{itemize}

{\it Error analysis:} Without loss of generality let's assume that the message that is sent is
$m_{0,b}=1, m_{1,b}'=1, m_{1,b-1}'=1, $ and $m_{1,b}''=1$.
\begin{itemize}
\item Let $E_0$ be the event that $(x_1^n(1),x_2^n(1)) \notin T_\epsilon^{(n)}.$ Clearly, $\Pr\{E_0\}\to 0$
by the law of large numbers. Hence, for the rest of the events we can assume that
$(x_1^n(1),x_2^n(1)) \in T_\epsilon^{(n)}.$

\item Let $E_{1}$ be the event that in block $b-1$ there
exists an $j\neq 1$, such that  $z_1^n(1)=z_1^n(j)$ for some codeword $x_2^n$. Similar to the
analysis for the noncausal case
\begin{eqnarray}
\Pr\{E_1\} &=& 2^{n(R_1'-H(Z_1|X_2)+\epsilon)},
\end{eqnarray}
hence, if \begin{equation}\label{e1_strc_causal}R_1'<H(Z_1|X_2), \end{equation}
 $\Pr\{E_1\}\to 0$ as $n\to \infty$.

\item Let $E_{i,j,k}$ be the event probability that for $\hat m_{1,b-1}'=i, \hat m_{0,b}=j$, and $\hat
m_{1,b}''=k$, given that $m_{1,b}'$ is known correctly from pervious subblock decoding:
\begin{equation}
(x_2^n(\hat m_{0,b},\hat m_{1,b-1}'),z_1^n(m_{1,b}',x_2^n), x_1^n(\hat m_{1,b}'',x_2^n,z_1^n),y^n)\in
T_{\epsilon}^{(n)}.
\end{equation}
Let $E_3$ be the event that exists an $(i,j,k)\neq (1,1,1)$ such that  $E_{i,j,k}$ occurs.
\begin{equation}\label{e_str_causal_allp}
\Pr\{E_3\}\leq \Pr\{\bigcup_{(i,j)\neq(1,1), k\geq 1} E_{i,j,k}\}+\Pr\{\bigcup_{(i,j)=(1,1),
k\geq 2} E_{i,j,k}\}.
\end{equation}
Now let's bound each term. Consider the first term in the RHS of (\ref{e_str_causal_allp})
\begin{eqnarray}
\Pr\{\bigcup_{(i,j)\neq(1,1), k\geq 1} E_{i,j,k}\}&\leq& 2^{n(R_0+R_1-I(Z_1,X_1,X_2;Y)+\epsilon)},
\end{eqnarray}
hence if \begin{equation}\label{e2_strc_causal} R_0+R_1<I(X_1,X_2;Y)\end{equation}
 then the probability above goes to
zero. Consider the second term in the RHS of (\ref{e_str_causal_allp})
\begin{eqnarray}
\Pr\{\bigcup_{(i,j)=(1,1), k\geq 2} E_{i,j,k}\}&\leq& 2^{n(R_1''-I(X_1;Y|Z_1,X_2)+\epsilon)},
\end{eqnarray}
hence if \begin{equation}\label{e3_strc_causal} R_1''<I(X_1;Y|Z_1,X_2), \end{equation}
 then the probability above goes
to zero.

Gathering (\ref{e1_strc_causal}), (\ref{e2_strc_causal}), and (\ref{e3_strc_causal}) we obtain
\begin{eqnarray}
R_1'&<&H(Z_1|X_2)\\
R_0+R_1&<&I(X_1,X_2;Y)\\
R_1''&<&I(X_1;Y|Z_1,X_2).
\end{eqnarray}
Using Fourier$-$Motzkin elimination
\begin{eqnarray}\label{e_fourier}
R_0+R_1&<&I(X_1,X_2;Y)\\
R_1&<&I(X_1;Y|Z_1,X_2)+H(Z_1|X_2).
\end{eqnarray}
is achievable.\hfill \QED
\end{itemize}

{\it Proof of converse for the strictly causal case:} Let ($2^{nR_1},2^{nR_0},n$) be a 
strictly causal cribbing MAC code as defined in Def. \ref{def_mac_code} with a probability of
error $P_e^{(n)}$. Following the exact same steps as in the converse of the noncausal case in
(\ref{e_conv_noncausal1}) we obtain
\begin{eqnarray}\label{e_conv_noncausal1}
R_0+R_1 &\stackrel{}{\leq}& n I(X_{1,Q},X_{2,Q};Y_Q)+n\epsilon_n.
\end{eqnarray}
Following the exact same first four steps as in converse of the non causal case to bound $R_1$, 
(\ref{e_conv_noncausal2}) we obtain
\begin{eqnarray}
R_1
&\stackrel{}{\leq}& I(Z_1^n;Y^n|X_2^n)+I(X_1^n;Y^n|X_2^n,Z_1^n)+n\epsilon_n \\
&\stackrel{}{\leq}& \sum_{i=1}^n H(Z_{1,i}|X_{2,i})+I(X_{1,i};Y_i|X_{2,i},Z_{1,i})+n\epsilon_n \\
&\stackrel{}{\leq}& \sum_{i=1}^n H(Z_{1,Q}|X_{2,Q})+I(X_{1,Q};Y_Q|X_{2,Q},Z_{1,Q})+n\epsilon_n,
\end{eqnarray}
Since the rate pair is achievable,  the code type is arbitrary close to the restricted
distribution $P(x_1,x_2)$ and using Lemma \ref{l_XqYqZq_type} we conclude that the distribution
of $X_{1,Q}$, $X_{2,Q}$ is arbitrary close to the restricted distribution $P(x_1,x_2)$. Finally,
by denoting $Z_1=Z_Q$, $X_1=X_{1,Q}$, $X_2=X_{2,Q}$ and $Y=Y_{Q}$ and taking into account that
$P_e^{(n)}$ is going to zero as $n\to \infty$ we obtain that the region $\mathcal R^{nc}(P)$
upper bound the capacity region.\hfill \QED

{\it Proof of achievability of Theorem \ref{t_mac}, causal case:} In this proof we show how the
causal case achievability follows directly from the proof of the strictly causal case with one
modification: instead of codewords $x_2^n$ we generate codewords $u^n$, and the input to the
channel is $x_{2,i}=f(u_i,x_{1,i})$. This is possible since Encoder 2 observes causally the
signal from Encoder 1. By replacing $X_2$ with $U$ in  $\mathcal R^{sc}(P)$ and applying
$x_{2,i}=f(u_i,z_{1,i})$ and taking into account the equality
$I(Y;X_1,U)=I((Y;X_1,U,f(Z_1,U))=I(Y;X_1,X_2)$ we obtain  the region $\mathcal R^{c}(P)$
.\hfill \QED

{\it Proof of converse for the  causal case:} Let ($2^{nR_0},2^{nR_1},n$) be a partial strictly
causal cribbing MAC code as defined in Def. \ref{def_mac_code} with a probability of error
$P_e^{(n)}$. Following the exact same steps as in the converse of the noncausal case in
(\ref{e_conv_noncausal1}) we obtain
\begin{eqnarray}\label{e_conv_strcausal1}
R_0+R_1 &\stackrel{}{\leq}& n I(X_{1,Q},X_{2,Q};Y_Q)+n\epsilon_n,
\end{eqnarray}
Now consider
\begin{eqnarray}\label{e_conv_strcausal2}
R_1&=& H(M_1) \\
&=& H(M_1|M_0) \\
&\stackrel{(a)}{\leq}& I(M_1;Y^n|M_0)+n\epsilon_n \\
&\stackrel{}{\leq}& I(X_1^n,Z_1^n;Y^n|M_0)+n\epsilon_n \\
&\stackrel{}{=}& I(Z_1^n;Y^n|M_0)+I(X_1^n;Y^n|M_0,Z_1^n)+n\epsilon_n\\\
&\stackrel{(b)}{\leq}& \sum_{i=1}^n I(Z_{1,i};Y^n|M_0,Z^{i-1})+I(X_{1,i};Y_{i}|X_{2,i},M_0,Z_1^n,X^{i-1})+n\epsilon_n \\
&\stackrel{(c)}{\leq}& \sum_{i=1}^n H(Z_{1,i}|M_0,Z^{i-1})+I(X_{1,i};Y_{i}|X_{2,i},M_0,Z_1^{i-1})+n\epsilon_n\ \\
&\stackrel{}{\leq}& \sum_{i=1}^n H(Z_{1,Q}|U_Q)+I(X_{1,Q};Y_Q|X_{2,Q},U_{Q})+n\epsilon_n,
\end{eqnarray}
where (a) follows from Fano's inequality where $\epsilon_n=(\frac{1}{n}+R_1)P_e^{(n)}$, step (b)
follows from the memoryless of the MAC  and (c) follows from denoting $U_i\triangleq
(M_0,Z_1^{i-1})$ and $Q$ as uniform random variable over the alphabet $\{1,2,...,n\}$. Note that
indeed $X_{2,i}=f(M_0,Z_1^{i-1},X_{1,i})$ and therefore $X_{2,Q}=f(U_Q,X_{2,Q}).$ Rest of the steps for the completion of proof follow similar arguments as in non causal and strictly causal case. \hfill \QED

\end{document}